\newcolumntype{x}[1]{>{\centering\arraybackslash}p{#1}}
\newtheorem{thm}{Theorem}
\newtheorem*{thm*}{Theorem}
\newtheorem*{prop*}{Proposition}
\newtheorem{lemma}[thm]{Lemma}
\newtheorem*{lemma*}{Lemma}
\newtheorem*{cor*}{Corollary}
\newtheorem{cj}[thm]{Conjecture}
\newtheorem*{cj*}{Conjecture}
\newtheorem{Def}[thm]{Definition}
\newtheorem*{Def*}{Definition}
\def\thmhead@plain#1#2#3{%
  \thmname{#1}\thmnumber{\@ifnotempty{#1}{ }\@upn{#2}}%
  \thmnote{ {\the\thm@notefont#3}}}
\let\thmhead\thmhead@plain
\theoremstyle{definition}
\newenvironment{manualthm}[1]{%
  \manualthminner \it
}{\endmanualthminner}
\newcommand{\bb}{\begin{equation}\begin{aligned}\hspace{0pt}}
\newcommand{\bbb}{\begin{equation*}\begin{aligned}}
\newcommand{\ee}{\end{aligned}\end{equation}}
\newcommand{\eee}{\end{aligned}\end{equation*}}
\newcommand*{\coloneqq}{\mathrel{\vcenter{\baselineskip0.5ex \lineskiplimit0pt \hbox{\scriptsize.}\hbox{\scriptsize.}}} =}
\newcommand{\ketbra}[1]{\ket{#1}\!\!\bra{#1}}
\newcommand{\ketbraa}[2]{\ket{#1}\!\!\bra{#2}}
\newcommand{\ketbrasub}[1]{\ket{#1}\!\bra{#1}}
\newcommand{\tcb}[1]{{\color{blue} #1}}
\newcommand{\ludo}[1]{{\color{green!60!black} #1}}
\newcommand{\R}{\mathds{R}}
\newcommand{\N}{\mathds{N}}
\DeclareMathOperator{\Tr}{Tr}
\DeclareMathAlphabet{\pazocal}{OMS}{zplm}{m}{n}
\DeclareMathOperator{\Id}{Id}
\newcommand{\HH}{\pazocal{H}}
\newcommand{\lsmatrix}{\left(\begin{smallmatrix}}
\newcommand{\rsmatrix}{\end{smallmatrix}\right)}
\newcommand*\rel@kern[1]{\kern#1\dimexpr\macc@kerna}
\newcommand*\widebar[1]{%
  \begingroup
  \def\mathaccent##1##2{%
    \rel@kern{0.8}%
    \overline{\rel@kern{-0.8}\macc@nucleus\rel@kern{0.2}}%
    \rel@kern{-0.2}%
  }%
  \macc@depth\@ne
  \let\math@bgroup\@empty \let\math@egroup\macc@set@skewchar
  \mathsurround\z@ \frozen@everymath{\mathgroup\macc@group\relax}%
  \macc@set@skewchar\relax
  \let\mathaccentV\macc@nested@a
  \macc@nested@a\relax111{#1}%
  \endgroup
}
\tikzset{meter/.append style={draw, inner sep=10, rectangle, font=\vphantom{A}, minimum width=30, line width=.8, path picture={\draw[black] ([shift={(.1,.3)}]path picture bounding box.south west) to[bend left=50] ([shift={(-.1,.3)}]path picture bounding box.south east);\draw[black,-latex] ([shift={(0,.1)}]path picture bounding box.south) -- ([shift={(.3,-.1)}]path picture bounding box.north);}}}
\tikzset{roundnode/.append style={circle, draw=black, fill=gray!20, thick, minimum size=10mm}}
\tikzset{squarenode/.style={rectangle, draw=black, fill=none, thick, minimum size=10mm}}
\definecolor{Blues5seq1}{RGB}{239,243,255}
\definecolor{Blues5seq2}{RGB}{189,215,231}
\definecolor{Blues5seq3}{RGB}{107,174,214}
\definecolor{Blues5seq4}{RGB}{49,130,189}
\definecolor{Blues5seq5}{RGB}{8,81,156}
\definecolor{Greens5seq1}{RGB}{237,248,233}
\definecolor{Greens5seq2}{RGB}{186,228,179}
\definecolor{Greens5seq3}{RGB}{116,196,118}
\definecolor{Greens5seq4}{RGB}{49,163,84}
\definecolor{Greens5seq5}{RGB}{0,109,44}
\definecolor{Reds5seq1}{RGB}{254,229,217}
\definecolor{Reds5seq2}{RGB}{252,174,145}
\definecolor{Reds5seq3}{RGB}{251,106,74}
\definecolor{Reds5seq4}{RGB}{222,45,38}
\definecolor{Reds5seq5}{RGB}{165,15,21}
\pgfplotsset{width=10cm,compat=1.9}
\begin{document}


\title{Restoring quantum communication efficiency over high loss optical fibres}

\author{Francesco Anna Mele}
\email{francesco.mele@sns.it}
\affiliation{NEST, Scuola Normale Superiore and Istituto Nanoscienze, Consiglio Nazionale delle Ricerche, Piazza dei Cavalieri 7, IT-56126 Pisa, Italy}

\author{Ludovico Lami}
\email{ludovico.lami@gmail.com}
\affiliation{Institut f\"{u}r Theoretische Physik und IQST, Universit\"{a}t Ulm, Albert-Einstein-Allee 11, D-89069 Ulm, Germany}

\author{Vittorio Giovannetti}
\email{vittorio.giovannetti@sns.it}
\affiliation{NEST, Scuola Normale Superiore and Istituto Nanoscienze, Consiglio Nazionale delle Ricerche, Piazza dei Cavalieri 7, IT-56126 Pisa, Italy}

\begin{abstract}
In the absence of quantum repeaters, quantum communication proved to be nearly impossible across optical fibres longer than $\gtrsim \SI{20}{\km}$ due to the drop of transmissivity below the critical threshold of $1/2$. However, if the signals fed into the fibre are separated by a sufficiently short time interval, memory effects must be taken into account. In this paper we show that by properly accounting for these effects it is possible to devise schemes that enable unassisted quantum communication across arbitrarily long optical fibres at a fixed positive qubit transmission rate. We also demonstrate how to achieve entanglement-assisted communication over arbitrarily long distances at a rate of the same order of the maximum achievable in the unassisted noiseless case.
\end{abstract}

\maketitle

\textbf{\em Introduction}.--- 
Reliably transmitting qubits across long optical fibres is crucial for establishing a global quantum internet~\cite{quantum_internet_Wehner, quantum_internet_kimble}
, an architecture that would allow unconditionally secure communication, entanglement distribution over long distances, quantum sensing improvements (e.g.~in telescope observations~\cite{Telescope_qinternet} and clock synchronisation~\cite{Clock_qinternet}), distributed quantum computing~\cite{Distributed_QC}
and private remote access to quantum computers~\cite{secure_access_qinternet}.
The main technological hurdle in establishing a global quantum internet is the fact that the transmissivity of an optical fibre decreases rapidly --- typically, exponentially --- with its length. 
This is generally a serious problem, because the (unassisted) quantum capacity of an optical fibre --- i.e.\ its ability to reliably transmit quantum messages --- drops to zero if the overall transmissivity of the communication line falls below the critical value of $1/2$. 
A similar effect applies also to the two-way assisted quantum capacity of these models --- i.e.\ their ability to reliably transmit quantum information with the help of free classical communication --- which is known to vanish for sufficiently low transmissivities~\cite{Caruso2006}. Typical optical fibres employed nowadays attenuate the signal by \SI[per-mode = symbol]{0.2}{\decibel\per\km}, the absolute record being \SI[per-mode = symbol]{0.14}{\decibel\per\km}~\cite{Tamura2018, Li2020}; this means that in absence of quantum repeaters~\cite{repeaters, Munro2015, quantum_repeaters_linearopt} the quantum capacity vanishes for fibres longer than \SI{15}{\km} or at most \SI{21.5}{\km}. 


Here, we present a conceptually simple scheme that overcomes this problem and in principle enables quantum communication at a constant rate over arbitrarily long distances, i.e.\ for arbitrarily low non-zero values of the transmissivity. We further prove that this same scheme can be combined with entanglement assistance to neutralise the effect of noise in classical communication altogether. We achieve these results by studying information transmission preceded by a trigger pulse: if the time interval between trigger pulse and signal is sufficiently short, then the memoryless assumption commonly invoked in the quantum capacity analysis of these models breaks down, and one can effectively alter the environment before the actual transmission begins~\cite{memory-review, Dynamical-Model}. Since memory effects have been experimentally observed in optical fibres~\cite{Banaszek-Memory,Ball-Memory}, the above scheme may offer a concrete route to enable quantum communication over long distances.



At the quantum level an optical fiber is typically described by a  memoryless thermal attenuator $\Phi_{\lambda,\tau_\nu}$, i.e.\ a bosonic quantum channel~\cite{HOLEVO-CHANNELS-2} that mixes the input signal with a thermal environment $\tau_\nu$ through a beam splitter (BS) of transmissivity $\lambda\in[0,1]$. Our findings build on those of~\cite{die-hard}, where the phenomenon of `die-hard quantum communication' (D-HQCOM) was uncovered. Such an effect consists in the observation that if we replace the thermal state $\tau_\nu$ of the environment with a suitable 
state $\sigma=\sigma(\lambda)$ (possibly dependent upon the transmissivity of the model), the quantum capacity of the modified attenuator channel $\Phi_{\lambda,\,\sigma(\lambda)}$ stays above a fixed positive constant $c>0$. The reason why this result cannot be applied directly to improve the quantum communication capabilities of long optical fibres is that neither the sender (Alice) nor the receiver (Bob) can realistically have access to the initial state of the environment. Prior to the present work, D-HQCOM thus seemed mostly a mathematical oddity. Overturning this view, our main conceptual contribution is to recognise that it can instead be turned into a potentially viable technology by exploiting memory 
effects, paving new avenues to quantum communication beyond current limitations.


\smallskip \textbf{\em Notation}.--- 
The set of quantum states 
on a Hilbert space $\HH$ 
is denoted by $\mathfrak{S}(\HH)$. Given $\rho_1,\rho_2\in\mathfrak{S}(\HH)$, the trace norm of their difference is denoted as $\|\rho_1-\rho_2\|_1$. 
The information-carrying signal we consider is a single-mode of electromagnetic radiation with definite frequency and polarisation. This system, associated with the Hilbert space $\HH_S\coloneqq L^2(\mathbb{R})$, is described as a quantum harmonic oscillator.
A BS of transmissivity $\lambda\in[0,1]$ acting on two single-mode systems  $S_1$ and $S_2$ is defined as the unitary transformation $U_{\lambda}^{S_1 S_2}\coloneqq\exp\left[\arccos\sqrt{\lambda}\left(a_1^\dagger a_2-a_1a_2^\dagger\right)\right]$, where $a_1$ and $a_2$ are the annihilation operators on $S_1$ and $S_2$, respectively. 
Let $\HH_E\coloneqq L^2(\mathbb{R})$ denote a single-mode system, dubbed `environment', and let $b$ denote its annihilation operator. Fixed $\lambda\in[0,1]$ and $\sigma\in\mathfrak{S}(\HH_E)$, a general attenuator $\Phi_{\lambda,\sigma}:\mathfrak{S}(\HH_S)\longmapsto\mathfrak{S}(\HH_S)$ is a quantum channel defined by $\Phi_{\lambda,\sigma}(\rho)\coloneqq\Tr_E\left[U_\lambda^{SE}\text{} \rho\otimes\sigma\text{} \left(U_\lambda^{SE}\right)^\dagger\right]$.
$\Phi_{\lambda,\sigma}$ is completely noisy for $\lambda=0$, in the sense that $\Phi_{0,\sigma}(\rho)=\sigma$ for all $\rho$; on the contrary, it is noiseless, i.e.\ it coincides with the identity channel $\Id$, for $\lambda=1$. 
As mentioned in the introduction, a thermal attenuator $\Phi_{\lambda,\tau_\nu}$ is a special case of general attenuator obtained by identifying $\sigma$ with a thermal state $\tau_\nu\coloneqq \frac{1}{\nu+1}\sum_{n=0}^\infty \left( \frac{\nu}{\nu+1}\right)^n \ketbra{n}$, with $\ket{n}$ being the $n$th Fock state of the model.


The energy-constrained (EC) classical capacity $C(\Phi,N)$ (resp.\ quantum capacity $Q(\Phi,N)$) of a quantum channel $\Phi$ is the maximum rate of bits (resp.\ qubits) that can be reliably transmitted through $\Phi$ assuming that Alice has access to a limited amount ($N$) of input energy to build her coding signals. In addition, if an unlimited amount of pre-shared entanglement can be exploited by Alice and Bob, the corresponding maximum achievable bit (resp.\ qubit) transmission rate is called the EC entanglement-assisted (EA) classical (resp.\ quantum) capacity $C_{\text{ea}}(\Phi,N)$ (resp.\ $Q_{\text{ea}}(\Phi,N)$)~\cite{Bennett-EA,Bennett2002,Mark-energy-constrained, Holevo2004, Holevo-energy-constrained, Holevo2013} --- the two quantities being related by the identity $C_{\text{ea}}(\Phi,N)=2Q_{\text{ea}}(\Phi,N)$ thanks to quantum teleportation~\cite{teleportation} and super-dense coding~\cite{dense-coding}. 
The EC EA classical capacity can be expressed as~\cite{entanglement-assisted,Bennett2002, Holevo2013}:
\begin{equation}\label{Cea}
C_{\text{ea}}(\Phi,N)=\max_{\rho\in\mathfrak{S}(\HH_S):\\\text{}\Tr\left[\rho a^\dagger a \right]\le N}\left[S(\rho)+I_{\mathrm{coh}}\left(\Phi,\rho\right)\right],
\end{equation}	
where $S(\rho)\coloneqq -\Tr[\rho\log_2\rho]$, 
$I_{\mathrm{coh}}\left(\Phi,\rho\right)\coloneqq S\left(\Phi(\rho)\right)-S\left(\Phi\otimes \Id_P(\ketbra{\psi})\right)$, 
with $\ket{\psi}\in\HH_S\otimes\HH_{P}$ being a purification of $\rho$~\cite{NC,MARK,HOLEVO-CHANNELS-2}, $\HH_P$ being the purifier Hilbert space, and $\Id_P$ being the identity superoperator on $\HH_P$.
In addition, the EC quantum capacity 
can be written as~\cite{Lloyd-S-D, L-Shor-D, L-S-Devetak, HOLEVO-CHANNELS-2,Mark-energy-constrained}:
\bb\nonumber
Q(\Phi,N)=&\ \lim\limits_{n\to\infty}\frac{1}{n}Q_1(\Phi^{\otimes n},nN )\ge Q_1(\Phi,N)\,,\\
Q_1\!\left(\Phi^{\otimes n}\!,N\right)\coloneqq&\ \max_{\rho\in\mathfrak{S}(\HH_S^{\otimes n}):\text{}\Tr\left[\rho \sum_{i=1}^na_i^\dagger a_i \right]\le N}I_{\mathrm{coh}}\left(\Phi^{\otimes n}\!,\rho\right)\,.
\ee
The quantities $C_{\text{ea}}\left(\Phi_{\lambda,\tau_{\nu}},N\right)$~\cite{holwer, LossyECEAC1, LossyECEAC2}, $C\left(\Phi_{\lambda,\tau_{\nu}},N\right)$~\cite{Giova_classical_cap}, and $Q\left(\Phi_{\lambda,\ketbrasub{0}},N\right)$~\cite{holwer, Caruso2006, Wolf2007, Mark2012, Mark-energy-constrained, Noh2019} have been determined exactly. For $\nu>0$, sharp bounds on $Q\left(\Phi_{\lambda,\tau_{\nu}},N\right)$ are known if $\lambda> 1/2+\frac{\nu}{2(\nu+1)}$~\cite{PLOB, Rosati2018, Sharma2018, Noh2019,holwer, Noh2020,fanizza2021estimating}, while it is known that $Q\left(\Phi_{\lambda,\tau_{\nu}},N\right)=0$ otherwise~\cite{Rosati2018}.

\smallskip
\textbf{\em Environment control and EA imply noise neutralisation}.--- In~\cite{die-hard} it was shown that, even for arbitrarily low values of the transmissivity $\lambda>0$, suitable choices of the environmental state make the quantum capacity of the corresponding general attenuator bounded away from zero. Specifically:
\begin{thm}[\cite{die-hard}]\label{diehard_th}
For all $\lambda\in(0,1]$ there exists $\sigma(\lambda)$ such that
\begin{equation}
    Q\left(\Phi_{\lambda,\sigma(\lambda)}\right)\ge Q\left(\Phi_{\lambda,\sigma(\lambda)},1/2\right) >\eta\,,
\end{equation}
where $\eta>0$ is a universal constant.  More specifically, for $\varepsilon\ge0$ sufficiently small and for all $\lambda\in(0,1/2-\varepsilon)$ it holds that $Q\left(\Phi_{\lambda,\ketbrasub{n_\lambda} }\right)\ge Q\left(\Phi_{\lambda,\ketbrasub{n_\lambda} },1/2\right) >c(\varepsilon)$, where $c(\varepsilon)\ge0$ is a constant with respect to $\lambda$ and $n_\lambda\in\N$ satisfies $1/\lambda-1\le n_\lambda\le 1/\lambda$. Moreover, it holds that $c(0)=0$, and $c(\bar{\varepsilon})\ge 5.133\times10^{-6}$ for an appropriate $\bar{\varepsilon}$ such that $0<\bar{\varepsilon}\ll 1/6$ (see the Supplemental Material of~\cite{die-hard}). 
\end{thm}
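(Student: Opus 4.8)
\emph{Strategy and setup.} I will focus on the quantitative (``more specifically'') part, and indicate at the end how the universal-$\eta$ claim follows. Since $Q(\Phi,N)\ge Q_1(\Phi,N)\ge I_{\mathrm{coh}}(\Phi,\rho_0)$ for every $\rho_0$ with $\Tr[\rho_0\,a^\dagger a]\le N$, it suffices to produce, for each small $\lambda$, an integer $n_\lambda\in[1/\lambda-1,\,1/\lambda]$ and a state $\rho_0$ of mean photon number $\le 1/2$ with $I_{\mathrm{coh}}\big(\Phi_{\lambda,\ketbrasub{n_\lambda}},\rho_0\big)$ bounded below by a positive constant independent of $\lambda$. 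The guiding picture: a beam splitter of small transmissivity $\lambda$ fed with the Fock state $\ketbrasub{n_\lambda}$, tuned so that $\lambda n_\lambda\approx1$, leaks only about $\mathrm{Poisson}(1)$ photons to the environment output, and the accompanying beam-splitter interference destroys the antidegradability that forces $Q=0$ for the pure-loss channel whenever $\lambda\le1/2$.

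\emph{Reduction to an entropy inequality.} I would take $\rho_0=p\,\ketbrasub0+(1-p)\,\ketbrasub1$ with $p\ge1/2$, so that $\Tr[\rho_0\,a^\dagger a]=1-p\le1/2$. Using $I_{\mathrm{coh}}(\Phi,\rho)=S(\Phi(\rho))-S(\Phi^c(\rho))$ for a complementary channel $\Phi^c$, and the fact that both outputs are diagonal in the Fock basis whenever $\rho_0$ is, one obtains, with $H$ the Shannon entropy,
\[
I_{\mathrm{coh}}\big(\Phi_{\lambda,\ketbrasub{n_\lambda}},\rho_0\big)=H\big(p\,p_B^0+(1-p)\,p_B^1\big)-H\big(p\,p_E^0+(1-p)\,p_E^1\big),
\]
where $p_B^i$ and $p_E^i$ are the output photon-number distributions of $\Phi_{\lambda,\ketbrasub{n_\lambda}}(\ketbrasub i)$ and of its complement. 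A direct evaluation of the beam-splitter matrix elements shows that $p_B^0$ is a binomial, while $p_B^1$ is a binomial reweighted by the square of a linear (Krawtchouk) factor and therefore has an interference zero near the mode of $p_B^0$; moreover photon-number conservation gives $p_E^0(M)=p_B^0(n_\lambda-M)$ and $p_E^1(M)=p_B^1(n_\lambda+1-M)$, so, since $H$ is invariant under reflection of the index, the second term above equals $H\big(p\,p_B^0+(1-p)\,\widehat{p}_B^{\,1}\big)$ with $\widehat{p}_B^{\,1}$ a one-step shift of $p_B^1$. The statement thus reduces to showing that a single explicit entropy difference is strictly positive, uniformly in $\lambda$ (and, after optimizing, in $p$).

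\emph{Limit and uniformity.} Next I would pass to the limit $\lambda\to0$ with $n_\lambda\lambda\to1$: the distributions converge to explicit limits built from $\mathrm{Poisson}(1)$ and the limiting Krawtchouk weight, the two entropies tend to finite values, and their difference tends to a strictly positive constant (already at $p=1/2$), with the optimization over $p$ only helping. Since all quantities depend continuously on $\lambda$ and the difference decays to $0$ exactly as $\lambda\uparrow1/2$ --- the critical transmissivity --- one gets $c(0)=0$ and, bounding the resulting continuous, strictly positive function of $\lambda$ from below on $(0,1/2-\varepsilon)$ (with care at the values of $\lambda$ where $n_\lambda$ jumps), $c(\varepsilon)>0$; a controlled numerical estimate together with a concrete $\bar{\varepsilon}\ll1/6$ delivers $c(\bar{\varepsilon})\ge 5.133\times10^{-6}$. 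For the universal-$\eta$ claim: for $\lambda$ away from $1/2$ on either side the vacuum environment (for $\lambda>1/2$, where $Q(\Phi_{\lambda,\ketbrasub0},N)>0$) or the Fock construction (for $\lambda<1/2$) already yields a uniformly positive capacity, and a neighborhood of $\lambda=1/2$ is covered by a dedicated phase-sensitive environment state; patching these gives a single $\eta>0$ valid on all of $(0,1]$.

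\emph{Main obstacle.} The hard part is quantitative rather than conceptual. When $\lambda n_\lambda\approx1$ the entropies $S(\Phi(\rho_0))$ and $S(\Phi^c(\rho_0))$ nearly coincide --- their leading parts cancel --- so everything hinges on the sign of the subleading contribution, and pinning it down requires the fine shape of the Krawtchouk-reweighted distributions (interference zeros included) together with entropy bounds whose remainders are uniform in $\lambda\in(0,1/2-\varepsilon)$. Establishing that this subleading term has the right sign for every admissible $n_\lambda$, and extracting a concrete positive lower bound on it, is the crux of the proof.
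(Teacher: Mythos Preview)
The present paper does not prove this theorem: it is quoted verbatim from~\cite{die-hard}, and the reader is explicitly referred to the Supplemental Material of that work for the argument and for the numerical constant. Hence there is no ``paper's own proof'' to compare with here. What the Supplemental Material of the present paper does contain, however, is the structural identity underlying the original proof (Lemma~\ref{lemma_diehard}): for a Fock environment the weak complementary channel is unitarily equivalent to $\Phi_{1-\lambda,\ketbrasub{n}}$, whence
\[
I_{\mathrm{coh}}\!\left(\Phi_{\lambda,\ketbrasub{n}},\rho\right)=S\!\left(\Phi_{\lambda,\ketbrasub{n}}(\rho)\right)-S\!\left(\Phi_{1-\lambda,\ketbrasub{n}}(\rho)\right).
\]
Your photon-number-conservation/reflection argument is precisely this identity specialised to Fock-diagonal inputs, so your reduction to an entropy difference is correct and coincides with the mechanism behind the original result.

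Where your sketch diverges from the approach reflected in this paper (and, as far as one can infer, from~\cite{die-hard}) is the choice of input state. The companion analysis here (Lemma~\ref{teo_lowtrasm} and the proof of Lemma~S9) evaluates $I_{\mathrm{coh}}(\Phi_{\lambda,\ketbrasub{n}},\tau_N)$ with a \emph{thermal} input $\tau_N$, exploiting the swap $\Phi_{\lambda,\ketbrasub{n}}(\tau_N)=\Phi_{1-\lambda,\tau_N}(\ketbra{n})$ together with the explicit Kraus representation of the thermal attenuator (Theorem~\ref{master_trick}) to obtain closed-form output distributions (Laguerre/Bessel in the scaling limit). This is what makes the sign analysis and the uniform lower bound tractable. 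Your two-level ansatz $p\ketbra{0}+(1-p)\ketbra{1}$ leads instead to Krawtchouk-weighted binomials whose fine cancellations you yourself flag as the crux; there is no indication in~\cite{die-hard} or here that this route was taken, and your proposal does not actually carry it out. In short: your structural reduction is right and matches the known one, but the quantitative core --- proving the entropy difference is uniformly positive and extracting $c(\varepsilon)$ --- remains a genuine gap in your proposal, and the original argument closes it via the thermal-input route rather than the one you outline.
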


Here, we provide an extension of this result for the EC EA capacities. We start by observing that for all $N>0$, $\lambda\in[0,1]$, and $n\in\N$, by choosing $\tau_N$ as an ansatz for $\rho$ in~\eqref{Cea}, one obtains 
\begin{align}\label{bound_Cea}
    C_{\text{ea}}\left(\Phi_{\lambda,\ketbrasub{n}},N\right)\ge C(\Id,N)+I_{\mathrm{coh}}(\Phi_{\lambda,\ketbrasub{n}},\tau_N)\text{,}
\end{align}
where we used $S(\tau_N)=C(\Id,N)$ \footnote{It holds $C(\Id,N)=Q(\Id,N)=C_{\text{ea}}(\Id,N)/2=Q_{\text{ea}}(\Id,N)=S(\tau_N)$.}. The quantity $I_{\mathrm{coh}}(\Phi_{\lambda,\ketbrasub{n}},\tau_N)$ is calculated 
in Lemma~S9 in the SM~\footnote{See the Supplemental Material for complete proofs of some of the results discussed in the main text.} and expressed in a simple form by exploiting the `master equation trick' introduced in~\cite{Die-Hard-2-PRA}. By plotting it (see for example Fig.~\ref{Icoh figure}), one can notice that the lower endpoint of the $\lambda$-range for which $I_{\mathrm{coh}}(\Phi_{\lambda,\ketbrasub{n}},\tau_N)>0$ seems to converge to zero as $n$ grows. This leads to:
\begin{figure}[t]
\includegraphics[width=1.0\linewidth]{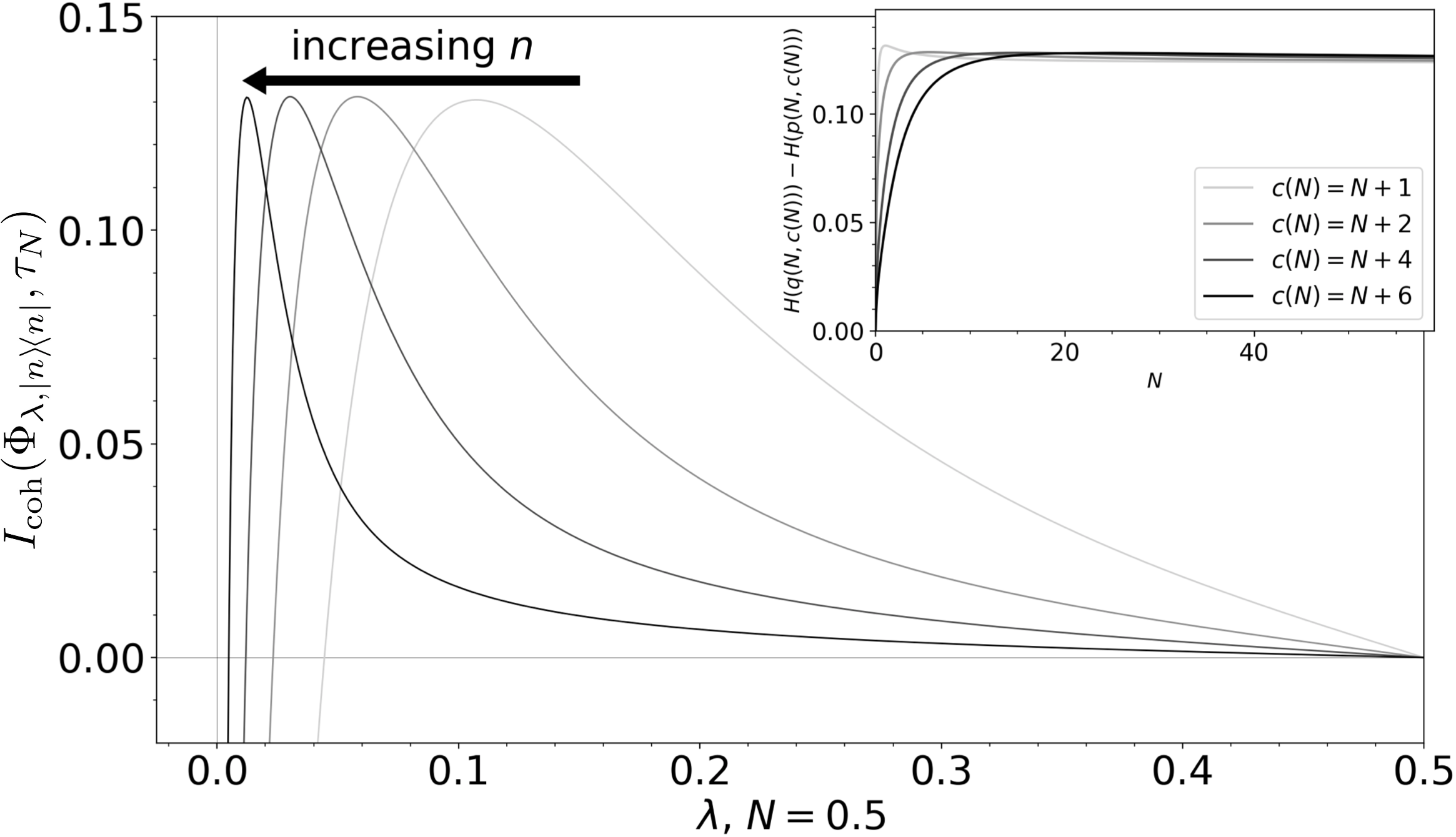}
\caption{The quantity $I_{\mathrm{coh}}(\Phi_{\lambda,\ketbrasub{n}},\tau_N)$ plotted with respect to $\lambda$ for $N=0.5$ and for several values of $n$ from $10$ to $100$. In the inset we plot the function $H\left(q(N,c(N))\right)-H\left(p(N,c(N))\right)$ (see Lemma~\ref{teo_lowtrasm}) with respect to $N$ for several choices of $c(N)$ of the form $c(N)=N+\alpha$.}
\label{Icoh figure}
\end{figure}
\begin{cj}\label{cong0}
For all $N>0$, $\lambda\in(0,1/2)$, if $n\in\N$ is sufficiently large, then $I_{\mathrm{coh}}\left(\Phi_{\lambda,\ketbrasub{n}},\tau_N\right)> 0$. 
\end{cj}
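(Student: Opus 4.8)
A natural strategy to prove Conjecture~\ref{cong0} would be an asymptotic analysis of the channel's output photon statistics as $n\to\infty$, pushed to the order at which the leading behaviour cancels. First, since $\ketbrasub{n}$ is pure, $U_\lambda^{SE}$ is a Stinespring dilation of $\Phi_{\lambda,\ketbrasub{n}}$, and its complementary channel --- obtained by tracing out $S$ rather than $E$ --- coincides with $\Phi_{1-\lambda,\ketbrasub{n}}$ up to a phase-space reflection that leaves all entropies invariant (the $\sqrt{\lambda}\leftrightarrow\sqrt{1-\lambda}$ beam-splitter duality). Hence
\[
I_{\mathrm{coh}}\!\left(\Phi_{\lambda,\ketbrasub{n}},\tau_N\right)=S\!\left(\Phi_{\lambda,\ketbrasub{n}}(\tau_N)\right)-S\!\left(\Phi_{1-\lambda,\ketbrasub{n}}(\tau_N)\right).
\]
Because $\tau_N$ and $\ketbrasub{n}$ are phase invariant and $U_\lambda^{SE}$ commutes with the total number operator, both output states are Fock-diagonal, so these are the Shannon entropies $H(p_n^{(\lambda)})$ and $H(p_n^{(1-\lambda)})$ of the output photon-number distributions $p_n^{(\mu)}$ --- the beam-splitter image, with transmissivity $\mu$, of a geometric input of mean $N$ and a sharp Fock input of $n$ photons. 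Their closed form (a geometrically weighted sum of squared Wigner $d$-matrix elements, equivalently Krawtchouk polynomials) is exactly what is computed in Lemma~S9 of the Supplemental Material via the master-equation trick of~\cite{Die-Hard-2-PRA}. (Note that Theorem~\ref{diehard_th} of~\cite{die-hard} gives positivity of the \emph{optimised} coherent information, but with a $\lambda$-tuned environment and a non-thermal input, so it does not directly yield the present statement.)

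\emph{Moments and a central limit theorem.} From this form one computes $p_n^{(\mu)}$ to have mean $\mu N+(1-\mu)n$ and variance $\mu(1-\mu)(2N+1)\,n+O(1)$, the crucial point being that the leading variance coefficient is symmetric under $\mu\leftrightarrow 1-\mu$, whereas $\mathrm{Var}[p_n^{(\lambda)}]-\mathrm{Var}[p_n^{(1-\lambda)}]=(2\lambda-1)(N^2+N)$ is a strictly negative constant for $\lambda<1/2$. Since $p_n^{(\mu)}$ is concentrated on a window of width $\Theta(\sqrt n)$, a central limit theorem gives $H(p_n^{(\mu)})=\tfrac12\log_2(2\pi e\,\mathrm{Var}[p_n^{(\mu)}])+o(1)$, uniformly for $\mu$ in compacta of $(0,1)$; therefore $I_{\mathrm{coh}}(\Phi_{\lambda,\ketbrasub{n}},\tau_N)\to 0$. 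The conjectured sign is thus invisible at leading order and must be read off from the $\Theta(1/n)$ correction.

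\emph{Edgeworth correction and the decisive inequality.} The plan is to upgrade the entropy CLT to
\[
H\!\left(p_n^{(\mu)}\right)=\tfrac12\log_2\!\left(2\pi e\,\mathrm{Var}[p_n^{(\mu)}]\right)-\tfrac{1}{12\ln 2}\,\gamma_n(\mu)^2+o\!\left(\tfrac1n\right),
\]
where $\gamma_n(\mu)$ is the standardised third cumulant of $p_n^{(\mu)}$ (the fourth and higher standardised cumulants are $O(1/n)$ and enter only at order $1/n^2$). Subtracting at $\mu=\lambda$ and $\mu=1-\lambda$, the divergent $\tfrac12\log_2 n$ and the Gaussian constant $\tfrac12\log_2(2\pi e)$ cancel, leaving $I_{\mathrm{coh}}(\Phi_{\lambda,\ketbrasub{n}},\tau_N)=A(\lambda,N)/n+o(1/n)$ with $A=A_{\mathrm{var}}+A_{\mathrm{skew}}$, where $A_{\mathrm{var}}=\tfrac{(2\lambda-1)(N^2+N)}{2\ln 2\,\lambda(1-\lambda)(2N+1)}<0$ and $A_{\mathrm{skew}}$ is built from the $O(n)$-parts of the third cumulants of $p_n^{(\lambda)}$ and $p_n^{(1-\lambda)}$. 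The term $A_{\mathrm{skew}}$ is positive: at $N=0$ the distributions $p_n^{(\lambda)}$ and $p_n^{(1-\lambda)}$ are binomials with opposite skewness of equal magnitude, and switching on $N>0$ adds positive skewness to both, making $p_n^{(\lambda)}$ strictly \emph{less} skewed than $p_n^{(1-\lambda)}$, hence closer to Gaussian and of larger entropy. What remains is to prove the two-variable inequality $A(\lambda,N)>0$ on $(0,1/2)\times(0,\infty)$ --- the positivity suggested by Fig.~\ref{Icoh figure} --- after which the conjecture follows with a threshold $n_0=n_0(\lambda,N)$.

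\emph{Main obstacle.} Two points are genuinely delicate. First, justifying the expansion of the previous step to precision $o(1/n)$ requires a local central limit theorem for $p_n^{(\mu)}$ with an explicit Edgeworth remainder --- controlled through the characteristic function of the beam-splitter output, i.e.\ through asymptotics of Krawtchouk polynomials --- together with a tail-and-lattice estimate converting the distributional expansion into an entropy expansion; the non-commutative computation of the $O(n)$-part of the third cumulant of $p_n^{(\mu)}$ is itself laborious. Second, and more seriously, the gap $A(\lambda,N)/n$ degenerates at both ends of the interval: $A(\lambda,N)\to 0$ linearly as $\lambda\to1/2$, while as $\lambda\to 0$ both $A_{\mathrm{var}}$ and $A_{\mathrm{skew}}$ diverge like $1/\lambda$ and only their sum stays controlled, so that $n_0(\lambda,N)\to\infty$ at the endpoints. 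Establishing $A(\lambda,N)>0$ and controlling the $o(1/n)$ remainder \emph{uniformly} near $\lambda\to 0$ and $\lambda\to 1/2$ is exactly the difficulty that keeps the statement a conjecture rather than a theorem; a purely algebraic or operational lower bound on $I_{\mathrm{coh}}(\Phi_{\lambda,\ketbrasub{n}},\tau_N)$ circumventing the fine asymptotics would be the ideal alternative route.
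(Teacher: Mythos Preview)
The statement is a \emph{conjecture} that the paper explicitly does \emph{not} prove in full; the authors write that they ``are not able to prove Conjecture~\ref{cong0} in full generality'' and establish it only in the regime $\lambda\to 0^+$. Their partial result (Lemma~\ref{teo_lowtrasm}) takes the coupled scaling $\lambda=c/n$ with $c>0$ fixed and $n\to\infty$, computes the limiting output distributions explicitly in terms of modified Bessel functions and Laguerre polynomials, and then verifies positivity of the resulting entropy difference numerically (inset of Fig.~\ref{Icoh figure}). No asymptotic expansion at fixed $\lambda$ is attempted.

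Your strategy is genuinely different and more ambitious: you fix $\lambda\in(0,1/2)$ and expand both entropies in $1/n$ via an Edgeworth-corrected entropy CLT. The initial reduction $I_{\mathrm{coh}}=H(p_n^{(\lambda)})-H(p_n^{(1-\lambda)})$ is correct and matches Lemma~S9, and the observation that the leading variances coincide so that the sign lives at order $1/n$ is sound. However, the proposal remains a sketch with two genuine gaps that you yourself flag. First, the crucial inequality $A(\lambda,N)>0$ is not proved: your skewness heuristic (adding $N>0$ to the binomial baseline adds positive skewness, which reduces $|\gamma_n(\lambda)|$ more than $|\gamma_n(1-\lambda)|$) is plausible but is not shown to dominate the strictly negative variance contribution $A_{\mathrm{var}}$, and the explicit $O(n)$ third cumulants are never computed. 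Second, a rigorous entropy Edgeworth expansion with $o(1/n)$ remainder for these beam-splitter output distributions --- and its uniformity as $\lambda\to 0^+$ or $\lambda\to 1/2^-$ --- is asserted rather than established. These are precisely the obstacles that keep the statement a conjecture; the paper sidesteps both by passing to the scaling $\lambda n\to c$, where the limits are exact and no subleading analysis is needed, at the cost of covering only the $\lambda\to 0^+$ endpoint.
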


Our investigation (\cite[Section 3]{Die-Hard-2-PRA}, Lemma~\ref{teo_lowtrasm}, and Fig.~\ref{Icoh figure}) presents overwhelming numerical evidence that Conjecture~\ref{cong0} is true. Notice that if such statement is valid then~\eqref{bound_Cea} will imply that, 
irrespective of how small $\lambda$ is, by choosing $n$ sufficiently large one can make $C_{\text{ea}}\left(\Phi_{\lambda,\ketbrasub{n}},N\right)$ larger than or equal to the classical capacity of the noiseless channel $C(\Id,N)$ (and equivalently $Q_{\text{ea}}\left(\Phi_{\lambda,\ketbrasub{n}},N\right)>Q(\Id,N)/2$).
While we are not able to prove Conjecture~\ref{cong0} in full generality, in what follow we shall see that it holds at least in the most significant regime where $\lambda \to 0^+$. For this purpose we introduce the following~\cite{Note2}:

\begin{lemma}\label{teo_lowtrasm}
For all $N,c>0$ it holds that
\begin{align*}\label{entdifflow}
\liminf\limits_{n\to\infty}I_{\mathrm{coh}}\left(\Phi_{\frac{c}{n},\ketbrasub{n}},\tau_N\right) \ge H\left(q(N,c)\right)-H\left(p(N,c)\right)\,,
\end{align*}
where $\{q_{k}(N,c)\}_{k\in\mathbb{Z}}$ and $\{p_{k}(N,c)\}_{k\in\mathbb{Z}}$ are two probability distributions defined as
\begin{equation*}
\begin{aligned}
    q_k(N,c)&\coloneqq e^{-c(2N+1)}\left(\!\frac{N}{N\!+\!1}\!\right)^{k/2}I_{|k|}\!\left(2c\sqrt{N(N\!+\!1)}\right) ,\\
    p_k(N,c)&\coloneqq
        \begin{cases}
        \frac{e^{-c/(N+1)}N^k}{(N+1)^{k+1}}L_k\left(-\frac{c}{N(N+1)}\right) &\text{if $k\ge 0$ ,} \\
        0 &\text{otherwise.}
    \end{cases}
\end{aligned}
\end{equation*}
	with $I_k(\cdot)$ and $L_k(\cdot)$ being the $k$th Bessel function of the first kind and the $k$th Laguerre polynomial, respectively. $H(\cdot)$ denotes the Shannon entropy.
\end{lemma}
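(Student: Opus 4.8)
The plan is to reduce the coherent information to a difference of two classical Shannon entropies, compute the underlying probability distributions in closed form, and then pass to the limit $n\to\infty$ with $\lambda=c/n$. First I would use the complementary-channel expression $I_{\mathrm{coh}}(\Phi,\rho)=S(\Phi(\rho))-S(\widetilde\Phi(\rho))$. Since $\sigma=\ketbra n$ is pure, a Stinespring dilation of $\Phi_{\lambda,\ketbrasub n}$ is $\rho\mapsto U_\lambda^{SE}(\rho\otimes\ketbra n)(U_\lambda^{SE})^\dagger$, and the complementary channel keeps the environment mode, $\widetilde\Phi_{\lambda,\ketbrasub n}(\rho)=\Tr_S[\,U_\lambda^{SE}(\rho\otimes\ketbra n)(U_\lambda^{SE})^\dagger\,]$. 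Both $\tau_N$ and $\ketbra n$ are invariant under phase rotations, and $U_\lambda^{SE}$ commutes with the total number operator $a^\dagger a+b^\dagger b$; hence $\Phi_{\lambda,\ketbrasub n}(\tau_N)$ and $\widetilde\Phi_{\lambda,\ketbrasub n}(\tau_N)$ are both diagonal in the Fock basis, with spectra I will call $\{r^{(n)}_m\}_{m\ge0}$ and $\{s^{(n)}_k\}_{k\ge0}$, so that $I_{\mathrm{coh}}(\Phi_{\lambda,\ketbrasub n},\tau_N)=H(r^{(n)})-H(s^{(n)})$.

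Next I would obtain $r^{(n)}$ and $s^{(n)}$ explicitly by expanding $U_\lambda^{SE}$ in the Fock basis and resumming the beam-splitter combinatorics for a thermal input and a Fock-state environment; this is most conveniently organised through the master-equation trick of~\cite{Die-Hard-2-PRA} (cf.\ Lemma~S9), and produces closed forms in terms of Laguerre polynomials. Setting $\lambda=c/n$ and letting $n\to\infty$, the main output concentrates its photon number around $m\approx n$, so I would reindex $m=n+k$; a Laguerre--Bessel limit of the type $n^{-\alpha}L^{(\alpha)}_n(-y/n)\to y^{-\alpha/2}I_\alpha(2\sqrt y)$, together with Poisson-type limits of the binomial weights, should give $r^{(n)}_{n+k}\to q_k(N,c)$ for each $k\in\mathbb Z$, while the complementary output stays at $O(1)$ photon number and $s^{(n)}_k\to p_k(N,c)$ for each $k\ge0$. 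As a consistency check one verifies $\sum_k q_k(N,c)=\sum_k p_k(N,c)=1$ using the generating functions of $I_\alpha$ and $L_k$.

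Finally I would turn pointwise convergence of these distributions into the entropy inequality. Shannon entropy on a countable alphabet is lower semicontinuous under pointwise convergence --- each summand $-x\log x$ is non-negative on $[0,1]$, so truncating to a finite window, letting $n\to\infty$, and then enlarging the window yields $\liminf_n H(r^{(n)})\ge H(q(N,c))$. For the subtracted term semicontinuity points the wrong way, so I would add a uniform energy estimate: a Heisenberg-picture computation gives $\Tr[\widetilde\Phi_{c/n,\ketbrasub n}(\tau_N)\,b^\dagger b]=(1-c/n)N+c\to N+c$, and the second moment is likewise bounded uniformly in $n$; combined with Fock-diagonality and the pointwise convergence $s^{(n)}_k\to p_k(N,c)$, this forces trace-norm convergence to the Fock-diagonal state with spectrum $p(N,c)$ together with convergence of the mean photon number, hence $H(s^{(n)})\to H(p(N,c))$ by continuity of the entropy on energy-bounded states. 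Subtracting the two limits gives the claimed bound.

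The main obstacle is the middle step: performing the beam-splitter resummation to a form on which the $\lambda=c/n$ asymptotics can be controlled uniformly, and in particular handling the reindexing of $r^{(n)}$ around the moving point $m\approx n$ carefully enough to identify the pointwise limit with the Bessel distribution $q(N,c)$. A secondary technical point is the uniform second-moment bound for the complementary output, which is what upgrades the subtracted entropy from being merely semicontinuous to genuinely convergent.
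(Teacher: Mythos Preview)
Your proposal is correct and follows essentially the same route as the paper: reduce $I_{\mathrm{coh}}$ to a difference of Shannon entropies of Fock-diagonal states via the (weak) complementary channel, extract the explicit distributions with the master-equation trick, reindex the system output around $m\approx n$ and take pointwise limits (the paper does this via Tannery's theorem to get the Bessel law, and a direct expansion for the Laguerre law), then combine lower semicontinuity of the entropy for the first term with energy-bounded continuity for the second. One minor simplification relative to your sketch: for the complementary output a uniform bound on the \emph{first} moment $(1-c/n)N+c\le N+c$ already suffices for the entropy-continuity step (Holevo's Lemma~11.8), so the second-moment control you flag as a secondary obstacle is not actually needed.
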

If we could find a value of $c$ that makes the quantity $H\left(q(N,c)\right)-H\left(p(N,c)\right)$ positive, then Conjecture~\ref{cong0} would be proved for $\lambda\to 0^+$. Let us choose for example $c(N)\coloneqq N+\alpha$, with $\alpha>0$ fixed.
The plot of the function $H\left(q(N,N+\alpha)\right)-H\left(p(N,N+\alpha)\right)$, shown in the inset of Fig.~\ref{Icoh figure}, demonstrates that such function is numerically verified to be positive for all $N$.
Consequently, thanks to~\eqref{bound_Cea}, we can conclude that
\begin{equation}\label{small_lambda}
    \liminf\limits_{n\to\infty}C_{\text{ea}}\left(\Phi_{\frac{c(N)}{n},\ketbrasub{n}},N\right)>C(\Id,N)\,.
\end{equation} 
This effectively proves Conjecture~\ref{cong0} in the regime where $\lambda\to 0^+$, establishing that environment control and entanglement assistance enable communication performance comparable to that achievable in the case in which $\lambda=1$. This phenomenon of noise neutralisation can be regarded as an EA version of the phenomenon of D-HQCOM uncovered in~\cite{die-hard}. Let us remark that the possibility of extending the above analysis beyond the infinitesimal values of $\lambda$ to the finite case is prevented by the (rather surprising) fact that typically $C_{\text{ea}}\left(\Phi_{\lambda,\sigma},N\right)$ is not monotonically increasing in $\lambda$~\cite{Die-Hard-2-PRA}.



\smallskip
\textbf{\em Control of the environment}.--- The D-HQCOM effect (uncovered in~\cite{die-hard}) and its EA 
version (uncovered in this Letter) guarantee that very good communication performances are possible even in the limit of vanishing transmissivities, under the assumption that the environment state can be suitably chosen. 
Here we remove this assumption, by providing a fully consistent protocol that exploits memory effects in order to control the environment.

Memory effects in quantum communication can be described by the collisional model formulated in~\cite{Dynamical-Model}. Translating it into the case we are analysing here, it consists in splitting the channel environment of the fibre in two components: a local term $E$ initialised into a thermal state $\tau_\nu$ that couples directly with Alice's signals via the BS interaction $U_{\lambda}^{S E}$, and a remote term $R$ which instead only interacts with $E$ trying to reset its state to $\tau_\nu$ via a thermalisation process characterised by a timescale $t_E$. {This process is described by a one-parameter family of quantum channels $\{\xi_{\delta t}\}_{\delta t\ge0}$ such that for any state $\sigma$ of $E$ it holds that (a)~$\xi_{\delta t}(\sigma)=\tau_\nu$ for $\delta t\ge t_E$ and (b) $\xi_{\delta t}(\sigma)\simeq\sigma$ for $\delta t\ll t_E$. 
We assume (a) since in this way if the time interval $\delta t$ between signals sent by Alice is such that {$\delta t\ge t_E$} the above model reduces to the best studied model of bosonic quantum communication across optical fibre, where the attenuation noise which affects each signal is represented by the same quantum channel, i.e.~by the memoryless thermal attenuator $\Phi_{\lambda,\tau_\nu}$}.
{If the time interval between signals satisfies $\delta t\ll t_E$}, the thermalisation induced by $R$ can be neglected and the dynamical evolution of $E$ will be dominated by its interactions with the transmitted signals: in this regime Alice has hence the possibility of exerting a certain level of control on the transmission line. Building up on this observation we can hence introduce a protocol that enables the communicating parties to effectively move from the memoryless channel description $\Phi_{\lambda,\tau_\nu}$ into a new effective memoryless channel $\Phi_{\lambda,\sigma}$ (see Fig.~\ref{protocol}):

\medskip
\noindent \textbf{\emph{Noise attenuation protocol}}: \vspace{-1.6ex}
\begin{itemize} \itemsep0em
\item \emph{step~1}: Alice waits for a time $t_E$ (so that the thermalisation resets $E$ into $\tau_\nu$);
\item \emph{step~2}: Alice sends $k$ suitable signals, dubbed `trigger signals', that alter $E$ into the chosen state $\sigma$;
\item \emph{step~3}: Alice sends an information-carrying signal. Then, she goes back to step~1, unless the communication is complete.
\end{itemize}

\begin{figure}[t]
\includegraphics[width=1.0\linewidth]{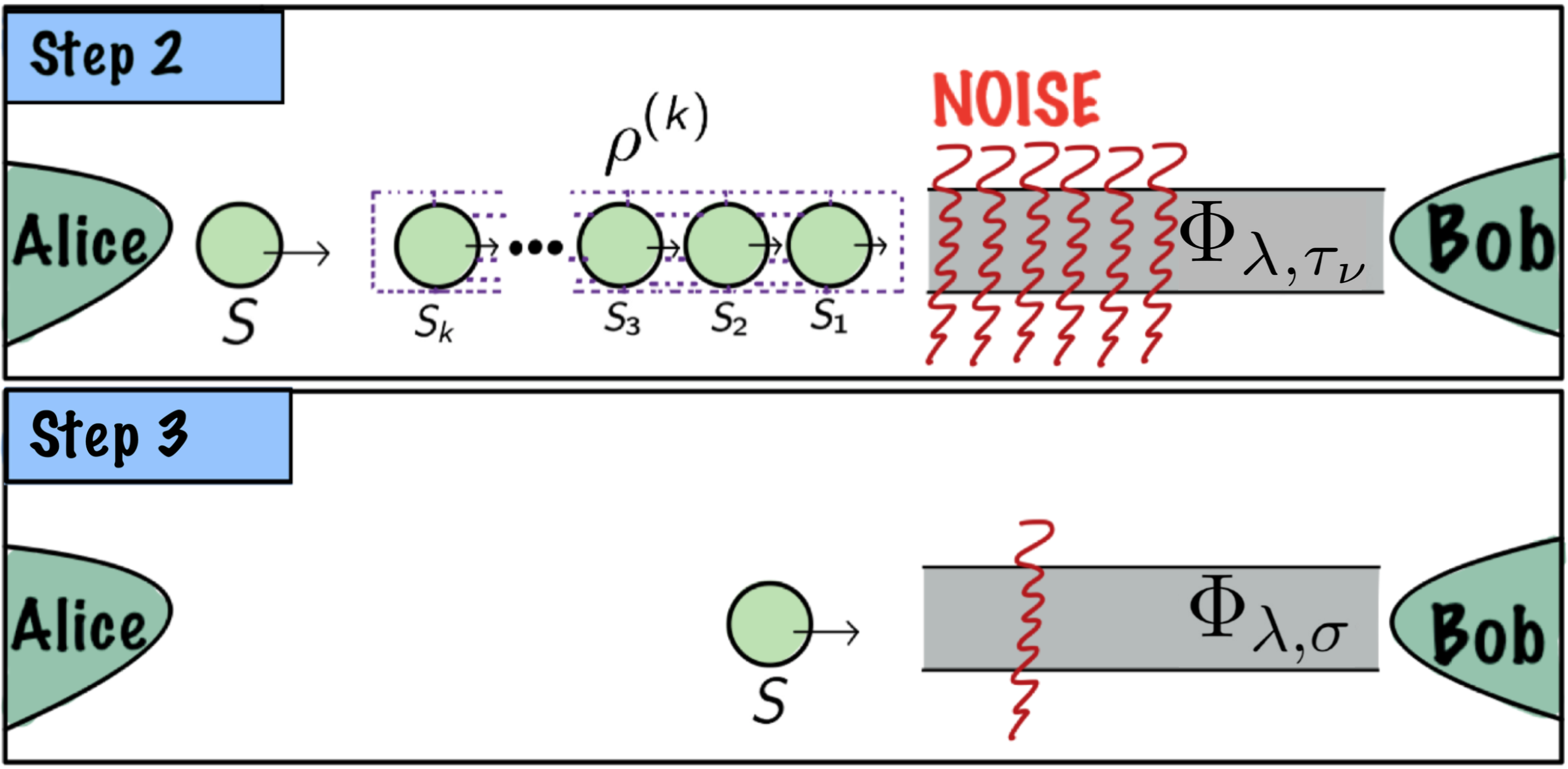}
\caption{Steps~2 and~3 of the noise attenuation protocol. At the beginning of step~2, the environment is initialised in $\tau_\nu$. By sending the signals $S_1,S_2,\ldots,S_k$, Alice aims to turn the environment into a state $\sigma$, where the latter is such that $\Phi_{\lambda,\sigma}$ is less noisy than $\Phi_{\lambda,\tau_\nu}$. Right after the environment has transformed into $\sigma$, step~3 starts with Alice sending the information-carrying signal $S$.}
\label{protocol}
\end{figure}

Let $\HH_{S_i}$ denote the Hilbert space of the $i$th trigger signal. Suppose that Alice sends $k$ trigger signals $S_1,S_2,\ldots,S_k$ separated by a time interval $\delta t$ and initialised into the state $\rho^{(k)}$. For simplicity, suppose that $\delta t$ is also the time interval between the $k$th trigger signal and the information-carrying signal $S$ of step~3. Then the state $\sigma$ of $E$, which interacts with $S$, can be expressed as
\bb \label{env_out}
\sigma = \Tr_{S_{1}\ldots S_{k}} \!\Big[\xi_{\delta t}\!\circ\! \mathcal{U}_{\lambda}^{S_kE}\!\!\ldots \circ \xi_{\delta t}\!\circ\! \mathcal{U}_{\lambda}^{S_{1}E} \Big(\rho^{(k)}\!\otimes \tau_\nu\Big) \Big],
\ee
where $\mathcal{U}_{\lambda}^{S_iE}$ is a quantum channel defined by $\mathcal{U}_{\lambda}^{S_iE}(\cdot)=U_{\lambda}^{S_iE}(\cdot)\, {U_{\lambda}^{S_iE}}^\dagger$.

From now on, suppose that $\delta t \ll t_E$ so that we can use the approximation $\xi_{\delta t}\simeq \Id$. Hence,~\eqref{env_out} reduces to
\bb \label{achievable}
\sigma=\Tr_{S_{\!1}\ldots S_k} \!\Big[U_{\lambda}^{S_kE}\!\!\ldots U_{\lambda}^{S_1E}\rho^{(k)}\!\otimes\tau_\nu \Big(U_{\lambda}^{S_kE}\!\! \ldots U_{\lambda}^{S_1E}\Big)^\dagger\Big]\,.
\ee
The phenomena of D-HQCOM can be activated if $E$ is altered in a suitable Fock state $\ketbra{n}$. Unfortunately, for all $n\in\N^+$ there does not exist $\rho^{(k)}$ that alters $E$ into $\ketbra{n}$~\cite[Theorem~13]{Die-Hard-2-PRA}. 
However, there exists $\rho^{(k)}$ that alters $E$ into a state that is as close (in trace distance) to $\ketbra{n}$ as desired if $k$ is large, as established by the following theorem~\cite{Note2}. 

\begin{thm} \label{FockAchiev}
Let $\nu\ge0$, $n\in\N$, and $\lambda\in(0,1)$. There exists a suitable state of $k$ trigger signals $\rho_{\lambda,n}^{(k)}$ such that it can alter $E$ into a state $\sigma_{\lambda,\nu,n,k}$, given by~\eqref{achievable}, which satisfies $\lim\limits_{k\to\infty}\|\sigma_{\lambda,\nu,n,k}-\ketbra{n}\|_1=0$.

Furthermore, let $\lambda\in (0,1/2)$ and set $n=n_\lambda\in\mathbb{N}$ with $1/\lambda-1\le n_\lambda\le 1/\lambda$, then for $k$ sufficiently large it holds that $Q(\Phi_{\lambda,\sigma_{\lambda,\nu,n_\lambda,k}})>0$.

In addition, if Conjecture~\ref{cong0} holds, then for all $N>0$, $\lambda\in(0,1/2)$, and for $\bar{n}\in\N$ sufficiently large, it holds that
\bb
    \lim\limits_{k\to\infty}C_{\text{ea}}\left(\Phi_{\sigma_{\lambda,\nu,\bar{n},k}},N\right)&>C\left(\Id,N\right)\,,\\
    \lim\limits_{k\to\infty}Q_{\text{ea}}\left(\Phi_{\lambda,\sigma_{\lambda,\nu,\bar{n},k}},N\right)&>Q\left(\Id,N\right)/2\,,\\
    \lim\limits_{k\to\infty}Q\left(\Phi_{\lambda,\sigma_{\lambda,\nu,\bar{n},k}},N\right)&>0\,.
\ee
\end{thm}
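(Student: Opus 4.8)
The plan is to establish the three claims in Theorem~\ref{FockAchiev} by combining the trace-norm convergence $\sigma_{\lambda,\nu,n,k}\to\ketbra{n}$ (which is the content of the first assertion) with continuity properties of the relevant capacities, and then invoking Theorem~\ref{diehard_th}, Conjecture~\ref{cong0}, and the bound~\eqref{bound_Cea}. First I would prove the approximation statement itself: the key idea is that sending a trigger signal prepared in a Fock state $\ket{m}$ through the beam-splitter interaction $U_\lambda^{S_iE}$ feeds energy into $E$, and by iterating with a carefully chosen sequence of input Fock states (or, more simply, by sending many identical low-energy signals and tracking the induced random walk on the Fock ladder of $E$) one can concentrate the environment state around a target occupation number $n$. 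Concretely I would let $\rho^{(k)}_{\lambda,n}$ be a product of $k$ copies of a suitable pure state whose mean photon number is tuned so that after $k$ uses of the attenuator map the environment's mean photon number equals $n$ and its variance, relative to $n^2$, vanishes as $k\to\infty$; a second-moment / Chebyshev estimate then forces $\sigma_{\lambda,\nu,n,k}$ into a small ball around $\ketbra{n}$ in trace norm. (Here one uses that $\sigma_{\lambda,\nu,n,k}$ is diagonal in the Fock basis when the triggers and $\tau_\nu$ are, so trace-norm closeness reduces to closeness of classical distributions, and Pinsker/total-variation bounds apply.) The technical content of this step is exactly \cite[Theorem~13]{Die-Hard-2-PRA} plus the observation that the obstruction there is only to \emph{exact} preparation.

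Granting $\lim_k\|\sigma_{\lambda,\nu,n,k}-\ketbra{n}\|_1=0$, the remaining three statements follow from continuity. For the unassisted quantum capacity: fix $\lambda\in(0,1/2)$ and $n=n_\lambda$. By Theorem~\ref{diehard_th}, $Q(\Phi_{\lambda,\ketbrasub{n_\lambda}},1/2)>c(\varepsilon)>0$ for $\lambda$ small; more robustly, $Q(\Phi_{\lambda,\ketbrasub{n_\lambda}})>0$ for every $\lambda\in(0,1/2)$ by the first part of that theorem. I would then use an energy-constrained continuity bound for the quantum capacity of bosonic channels of the form $\Phi_{\lambda,\sigma}$ as a function of $\sigma$: since $\Phi_{\lambda,\sigma}$ depends affinely on $\sigma$ and a beam-splitter with fixed $\lambda$ does not blow up energy uncontrollably, a diamond-norm-type estimate $\|\Phi_{\lambda,\sigma}-\Phi_{\lambda,\sigma'}\|_{\diamond}\le f(\|\sigma-\sigma'\|_1)$ together with the energy-constrained continuity of $Q$, $Q_{\text{ea}}$, $C_{\text{ea}}$ (Shirokov-type bounds, using that all the states involved have bounded mean photon number uniformly in $k$ once $\sigma_{\lambda,\nu,n,k}$ is near a fixed Fock state) gives $\lim_k Q(\Phi_{\lambda,\sigma_{\lambda,\nu,n_\lambda,k}})=Q(\Phi_{\lambda,\ketbrasub{n_\lambda}})>0$. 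Strict positivity for $k$ large then follows. The same continuity machinery handles $Q_{\text{ea}}$, $C_{\text{ea}}$, and $Q(\cdot,N)$ at fixed energy $N$.

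For the entanglement-assisted statements, under the hypothesis that Conjecture~\ref{cong0} holds I would pick $\bar n$ large enough that $I_{\mathrm{coh}}(\Phi_{\lambda,\ketbrasub{\bar n}},\tau_N)>0$, hence by~\eqref{bound_Cea} $C_{\text{ea}}(\Phi_{\lambda,\ketbrasub{\bar n}},N)\ge C(\Id,N)+I_{\mathrm{coh}}(\Phi_{\lambda,\ketbrasub{\bar n}},\tau_N)>C(\Id,N)$, with a strict gap $\delta>0$. Then energy-constrained continuity of $C_{\text{ea}}$ in the channel (equivalently in $\sigma$), applied to $\sigma_{\lambda,\nu,\bar n,k}\to\ketbra{\bar n}$, shows the limit of $C_{\text{ea}}(\Phi_{\lambda,\sigma_{\lambda,\nu,\bar n,k}},N)$ equals $C_{\text{ea}}(\Phi_{\lambda,\ketbrasub{\bar n}},N)$, which exceeds $C(\Id,N)$; the identity $C_{\text{ea}}=2Q_{\text{ea}}$ gives the second line, and the third line ($Q>0$) is the already-established unassisted statement specialised to finite $N$ (or re-derived from $Q\ge Q_1$ and positivity of $I_{\mathrm{coh}}$ for the input $\tau_N$ at large $\bar n$ together with continuity). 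I expect the main obstacle to be the first step — constructing the explicit trigger state $\rho^{(k)}_{\lambda,n}$ and proving the quantitative trace-norm convergence with a rate, since the iterated partial traces in~\eqref{achievable} couple the $k$ signals in a way that must be controlled; the continuity arguments in the second and third paragraphs are comparatively routine once one invokes the appropriate energy-constrained continuity bounds for capacities of infinite-dimensional channels, provided one checks the uniform energy bound on $\Phi_{\lambda,\sigma_{\lambda,\nu,n,k}}(\tau_N)$ needed to apply them.
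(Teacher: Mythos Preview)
Your continuity strategy for the second and third claims is exactly what the paper does: once $\sigma_{\lambda,\nu,n,k}\to\ketbra{n}$ in trace norm, Lemma~\ref{Stability} (continuity of $Q(\cdot,N)$ and $C_{\text{ea}}(\cdot,N)$ in the environment state, via the energy-constrained diamond norm and Shirokov-type bounds) combined with Theorem~\ref{diehard_th} and~\eqref{bound_Cea}/Conjecture~\ref{cong0} yields all three displayed limits. That part of your proposal is correct and matches the paper.

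The genuine gap is in the first step --- your construction of $\rho^{(k)}_{\lambda,n}$. A product of identical pure states fed sequentially through the beam splitter will \emph{not} drive the environment to the pure state $\ketbra{n}$ for $n\ge 1$. The map $\sigma\mapsto\Tr_S\big[U_\lambda(\ketbra{m}\otimes\sigma)U_\lambda^\dagger\big]$ is a bona fide quantum channel on $E$, and for $m\ge 1$ its fixed point is mixed: already $U_\lambda\ket{m}\ket{m}$ is entangled across $S{:}E$, so $\ketbra{m}$ is not invariant, and iterating adds rather than removes variance in the photon number. Tuning the input mean so that $\langle b^\dagger b\rangle_{\sigma_k}\to n$ is easy, but your claimed ``variance relative to $n^2$ vanishes as $k\to\infty$'' does not follow from any Chebyshev-type argument here; the steady-state photon-number variance stays bounded away from zero. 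The diagonality observation is fine but irrelevant to this obstruction.

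The paper's remedy (Lemma~\ref{lemma_dist}) is to take an \emph{entangled} trigger state
\[
\ket{n,\lambda}_{S_1\ldots S_k}\coloneqq U^{(S_1S_2)}_{\frac{1-\lambda}{1-\lambda^2}}\cdots U^{(S_{k-1}S_k)}_{\frac{1-\lambda}{1-\lambda^k}}\ket{0}_{S_1}\!\cdots\ket{0}_{S_{k-1}}\ket{n}_{S_k},
\]
engineered so that the cascade $U_\lambda^{(S_kE)}\cdots U_\lambda^{(S_1E)}$ acts on the virtual mode carrying the $n$ photons and on $E$ as a single beam splitter of effective transmissivity $\lambda^k$; as $k\to\infty$ this becomes a near-swap that deposits $\ket{n}$ into $E$ while flushing out the thermal noise. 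The explicit bound $\tfrac14\|\sigma_{\lambda,\nu,n,k}-\ketbra{n}\|_1^2\le A\lambda^{2k}+B\lambda^k$ then follows from a second-moment computation. So the ``technical content'' you allude to is not a refinement of a product construction but a qualitatively different, necessarily entangled, trigger design.
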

Theorem~\ref{FockAchiev} implies that the D-HQCOM effect can be activated by applying the noise attenuation protocol with a sufficiently large number $k$ of trigger signals initialised in the state $\rho_{\lambda,n}^{(k)}$: an explicit construction to produce 
such pulses using Fock states and linear optics can be found in~\cite{Die-Hard-2-PRA}. Applying this protocol 
achieves a dramatic improvement of the communication performance of an optical fibre. Indeed, if an optical fibre with transmissivity $0<\lambda\ll 1/2$ is used as usual --- i.e.\ by sending signals separated by a time interval $\delta t \gtrsim t_E$ --- then it is described by a thermal attenuator which has zero quantum capacity and vanishing (two-way- or entanglement-) assisted capacities. 
A drawback of this construction, however, is that, counting the transmission of the trigger signals as channel uses, the rate it achieves is equal to $Q(\Phi_{\lambda,\sigma})/(k+1)$, which can be small if the construction in Theorem~\ref{FockAchiev} yields a large $k$. 
Fortunately, for $\lambda>0$ sufficiently small, just $k=2$ is enough to guarantee non-zero quantum capacity~\cite{Note2}.
\begin{thm}\label{theorem_lambda_small}
Let $\nu\ge0$. Suppose that Alice sends two trigger signals in $U_{\frac{1}{1+\lambda}}^{S_1S_2}\ket{0}_{S_1}\ket{n_\lambda}_{S_2}$,
	with $n_\lambda\in\mathbb{N}$ such that $1/\lambda-1\le n_\lambda\le 1/\lambda$.
	Then, $E$ is altered into a state $\sigma_{\lambda,\nu}$ such that 
	for $\lambda>0$ sufficiently small it holds that $Q\left(\Phi_{\lambda,\sigma_{\lambda,\nu}}\right)\ge Q\left(\Phi_{\lambda,\sigma_{\lambda,\nu}},1/2\right)\ge c$,
	where $c>0$ is a fixed positive constant. 
\end{thm}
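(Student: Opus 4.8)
The plan is to compute in closed form the environment state $\sigma_{\lambda,\nu}$ produced by the two trigger pulses, to show that for small $\lambda$ it lies within trace distance $O(\lambda)$ of the Fock state $\ketbrasub{n_\lambda}$, and to transfer the capacity lower bound of Theorem~\ref{diehard_th} by a continuity argument. By~\eqref{achievable} with $k=2$ and $\rho^{(2)}=U_{1/(1+\lambda)}^{S_1S_2}\ket{0}_{S_1}\ket{n_\lambda}_{S_2}$, the state $\sigma_{\lambda,\nu}$ is obtained by applying the composite passive unitary $V\coloneqq U_{\lambda}^{S_2E}U_{\lambda}^{S_1E}U_{1/(1+\lambda)}^{S_1S_2}$ to $\ket{0}_{S_1}\ket{n_\lambda}_{S_2}\otimes\tau_\nu$ and tracing out $S_1,S_2$. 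Since $V$ is a product of beam splitters it suffices to track the Heisenberg evolution of $a_E$; a direct computation gives $V^\dagger a_E V=\lambda\, a_E-\sqrt{1-\lambda^2}\,a_{S_2}$, the decisive feature being that the $a_{S_1}$ contribution cancels exactly because the pre-mixing transmissivity equals $1/(1+\lambda)$. Hence $\sigma_{\lambda,\nu}$ is the reduced state on the transmitted port of a beam splitter of transmissivity $\lambda^2$ fed with $\tau_\nu$ on that port and with $\ketbrasub{n_\lambda}$ on the other, i.e.\ $\sigma_{\lambda,\nu}=\Phi_{\lambda^2,\ketbrasub{n_\lambda}}(\tau_\nu)$. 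By phase covariance this state is Fock-diagonal, $\sigma_{\lambda,\nu}=\sum_m q_m\ketbrasub{m}$, with $\{q_m\}$ the convolution of the photon-number distribution of $\tau_{\lambda^2\nu}$ with $\mathrm{Bin}(n_\lambda,1-\lambda^2)$.

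Next, since $\|\sum_m q_m\ketbrasub{m}-\ketbrasub{n}\|_1=2(1-q_n)$, and bounding $q_{n_\lambda}$ below by the event that no thermal photon survives while all $n_\lambda$ environment photons are transmitted, $q_{n_\lambda}\ge (1+\lambda^2\nu)^{-1}(1-\lambda^2)^{n_\lambda}$, the constraint $n_\lambda\le 1/\lambda$ yields $(1-\lambda^2)^{n_\lambda}\ge(1-\lambda^2)^{1/\lambda}=1-\lambda+O(\lambda^2)$, whence $\|\sigma_{\lambda,\nu}-\ketbrasub{n_\lambda}\|_1=O(\lambda)$ as $\lambda\to0^+$, for each fixed $\nu\ge0$.

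The environment perturbation gives the channel perturbation $\|\Phi_{\lambda,\sigma}-\Phi_{\lambda,\sigma'}\|_{\diamond}\le\|\sigma-\sigma'\|_1$ (partial trace is a contraction), and on inputs of energy $\le 1/2$ the output of $\Phi_{\lambda,\sigma_{\lambda,\nu}}$ has mean photon number $\le \tfrac{\lambda}{2}+(1-\lambda)(n_\lambda+\lambda^2\nu)=O(1/\lambda)$, which diverges only polynomially in $1/\lambda$. Feeding these two facts into a uniform continuity bound for the energy-constrained quantum capacity (the same continuity input already used in the proof of Theorem~\ref{FockAchiev}) gives $Q(\Phi_{\lambda,\sigma_{\lambda,\nu}},1/2)\ge Q(\Phi_{\lambda,\ketbrasub{n_\lambda}},1/2)-\delta_\lambda$ with $\delta_\lambda\to0$ as $\lambda\to0^+$. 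Combining with $Q(\Phi_{\lambda,\ketbrasub{n_\lambda}},1/2)\ge c(\bar\varepsilon)\ge 5.133\times10^{-6}$ from Theorem~\ref{diehard_th} (valid for all $\lambda<1/2-\bar\varepsilon$), and using $Q(\Phi)\ge Q(\Phi,1/2)$, for $\lambda$ small enough that $\delta_\lambda\le c(\bar\varepsilon)/2$ we conclude $Q(\Phi_{\lambda,\sigma_{\lambda,\nu}})\ge Q(\Phi_{\lambda,\sigma_{\lambda,\nu}},1/2)\ge c(\bar\varepsilon)/2=:c>0$, which is the claim.

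The main obstacle is this last step: because the quantum capacity is not unconditionally continuous in infinite dimensions, one must deploy an \emph{energy-constrained} continuity estimate and verify that the relevant output energies, although growing like $1/\lambda$, are dominated by the $O(\lambda)$ trace-distance bound of the second paragraph — i.e.\ that (trace distance) times (the slowly growing energy-dependent factor appearing in the continuity bound) still vanishes as $\lambda\to0^+$. Everything else is routine: the first paragraph is a beam-splitter computation, and the second is an elementary tail estimate.
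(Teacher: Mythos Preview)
Your overall strategy coincides with the paper's: bound the trace distance $\|\sigma_{\lambda,\nu}-\ketbrasub{n_\lambda}\|_1$, convert it to an energy-constrained diamond-norm bound via $\|\Phi_{\lambda,\sigma}-\Phi_{\lambda,\sigma'}\|_{\diamond N}\le\|\sigma-\sigma'\|_1$, and then apply the Shirokov--Winter type continuity bound for $Q(\cdot,N)$ together with Theorem~\ref{diehard_th}. The paper carries this out with $\alpha=1$, $N_0=1/\lambda+2$, $\varepsilon=k_0\sqrt{\lambda}$, and checks explicitly that $56\sqrt{\varepsilon}\,g\!\big(4(\alpha N+N_0)/\sqrt{\varepsilon}\big)+6g(4\sqrt{\varepsilon})\to0$ as $\lambda\to0^+$; your sketch of this last step is accurate but less explicit.

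Where you differ usefully is in the first part: your Heisenberg computation $V^\dagger a_E V=\lambda\,a_E-\sqrt{1-\lambda^2}\,a_{S_2}$ is correct and yields the clean identification $\sigma_{\lambda,\nu}=\Phi_{\lambda^2,\ketbrasub{n_\lambda}}(\tau_\nu)=\Phi_{1-\lambda^2,\tau_\nu}(\ketbrasub{n_\lambda})$, which the paper does not state (it instead invokes a general trace-distance lemma, obtaining the weaker but still sufficient bound $\tfrac12\|\sigma_{\lambda,\nu}-\ketbrasub{n_\lambda}\|_1\le k_0\sqrt{\lambda}$). One caveat: your claim that the photon statistics of $\sigma_{\lambda,\nu}$ are the \emph{convolution} of $\tau_{\lambda^2\nu}$ with $\mathrm{Bin}(n_\lambda,1-\lambda^2)$ is not correct---the additive-noise channel $N_{\lambda^2\nu}$ does not act as a photon-number shift on Fock states (e.g.\ $\bra{0}N_\beta(\ketbrasub{1})\ket{0}=\beta/(1+\beta)^2>0$). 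The $O(\lambda)$ trace-distance bound is nevertheless true: from the Kraus representation of $\Phi_{1-\lambda^2,\tau_\nu}$ one gets $q_{n_\lambda}\ge (1-\lambda^2)^{n_\lambda}/(\lambda^2\nu+1)^{2n_\lambda+1}=1-(1+2\nu)\lambda+O(\lambda^2)$. So the argument stands; only the ``convolution'' justification should be replaced.
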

Theorem~\ref{theorem_lambda_small} shows that it is possible to reliably transmit qubits at a fixed positive rate if $\lambda>0$ is sufficiently low, i.e.\ if the optical fibre is sufficiently long. Theorem~\ref{FockAchiev} and Theorem~\ref{theorem_lambda_small} are valid not only if the equilibrium state of the thermalisation process is a thermal state $\tau_\nu$, but also if it is any state $\sigma_0$ such that $\langle (b^\dagger b)^2\rangle_{\sigma_0}<\infty$~\cite{Die-Hard-2-PRA}.

The analysis presented in this section is based on the expression of the state of $E$ in~\eqref{achievable}, which is an approximation valid for $\delta t\ll t_E$. A similar analysis can also be 
carried out using the exact expression in~\eqref{env_out}. The latter reduces to~\eqref{achievable} if $\xi_{\delta t}=\Id$. It turns out that, under suitable continuity properties of $\xi_{\delta t}$, if $\rho^{(k)}$ is such that the general attenuator with environment state in~\eqref{achievable} has strictly positive quantum capacity, then $\rho^{(k)}$ is such that the general attenuator with environment state in~\eqref{env_out} has strictly positive quantum capacity for $\delta t$ sufficiently short~\cite[Theorem 19]{Die-Hard-2-PRA}. In other words, any scheme for quantum communication that works with the approximation $\xi_{\delta t}=\Id$ presented in this section will also work without such an approximation for $\delta t$ sufficiently short.

\smallskip
\textbf{\em Conclusions.}--- This Letter shows how memory effects can be engineered in order to improve the communication performance of an optical fibre. This is done by exploiting the noise attenuation protocol with trigger signals initialised in a suitable state. 
This protocol allows arbitrarily long optical fibres to reliably transmit: (a)~qubits at a fixed positive rate; (b)~bits and qubits at a rate of the same order of the maximum achievable in the ideal case of absence of noise, provided that pre-shared entanglement is consumed. An interesting development of our analysis is to take into account the decoherence process suffered by the signals as they wait to be fed into the optical fibre, due to imperfections in Alice's quantum memory, e.g.\ by exploiting the recently proposed queuing framework~\cite{queue1, queue2, queue3}.

We encourage experimental research on memory effects in optical fibres. For instance, one could test the model we have presented and estimate the thermalisation time $t_E$. Since the latter may be not much larger than the shortest time interval between subsequent signals allowed by nowadays devices, an interesting development of this work is to analyse the noise attenuation protocol without our simplifying hypothesis $\delta t\ll t_E$. 


\smallskip
\textbf{\em Acknowledgements.}--- FAM and VG acknowledge financial support by MIUR (Ministero dell'Istruzione, dell'Universit\`a e della Ricerca) via project PRIN 2017 ``Taming complexity via Quantum Strategies: a Hybrid Integrated Photonic approach'' (QUSHIP) Id.\ 2017SRNBRK, and via project PRO3 ``Quantum Pathfinder''. LL acknowledges financial support from the Alexander von Humboldt Foundation.

\bibliographystyle{unsrt}
\bibliography{biblio}

\newpage
\clearpage

\onecolumngrid
\begin{center}
\vspace*{\baselineskip}
{\textbf{\large Supplemental material:\\ Restoring quantum communication efficiency over high loss optical fibres}}\\
\end{center}

\renewcommand{\theequation}{S\arabic{equation}}
\renewcommand{\thethm}{S\arabic{thm}}
\setcounter{equation}{0}
\setcounter{thm}{0}
\setcounter{figure}{0}
\setcounter{table}{0}
\setcounter{section}{0}
\setcounter{page}{1}
\makeatletter

\setcounter{secnumdepth}{2}

\section{Environment control and entanglement-assistance imply noise neutralisation}
\begin{Def}
Fixed $\lambda\in[0,1]$ and $\sigma\in\mathfrak{S}(\HH_E)$, a general attenuator $\Phi_{\lambda,\sigma}:\mathfrak{S}(\HH_S)\mapsto\mathfrak{S}(\HH_S)$ is a quantum channel defined by 
	\begin{equation}\label{def_genatt}
	\Phi_{\lambda,\sigma}(\rho)\coloneqq\Tr_E\left[U_\lambda^{(SE)}  \rho\otimes\sigma  \left(U_\lambda^{(SE)}\right)^\dagger\right]\,,
	\end{equation}
where $U_{\lambda}^{(SE)}$ is the beam splitter (BS) unitary defined by 
	\bb
	U_{\lambda}^{(SE)}\coloneqq e^{\arccos{\sqrt{\lambda}}\left(a^\dagger b-ab^\dagger\right)}\,,
	\ee
	with $a$ and $b$ being the annihilation operators of the system $S$ and of the environment $E$, respectively.
The weak complementary channel of the general attenuator $\Phi_{\lambda,\sigma}$ is defined by
\begin{equation}
	\Phi_{\lambda,\sigma}^{\text{wc}}(\rho)\coloneqq\Tr_S\left[U_\lambda^{(SE)}  \rho\otimes\sigma  \left(U_\lambda^{(SE)}\right)^\dagger\right]\,.
	\end{equation}
\end{Def}

\begin{lemma}\label{lemmatrasf}
It holds that:
	\begin{equation}\label{trasfheisa}
	\left(U_{\lambda}^{(SE)}\right)^\dagger a\,U_{\lambda}^{(SE)}=\sqrt{\lambda}a+\sqrt{1-\lambda}b\,;
	\end{equation}
	\begin{equation}\label{trasfheisb}
	\left(U_{\lambda}^{(SE)}\right)^\dagger b\,U_{\lambda}^{(SE)}=-\sqrt{1-\lambda}a+\sqrt{\lambda}b\,;
	\end{equation}
	\begin{equation}\label{trasfheisa2}
	{U_{\lambda}^{(SE)}} a\left(U_{\lambda}^{(SE)}\right)^\dagger=\sqrt{\lambda}a-\sqrt{1-\lambda}b\,;
	\end{equation}
	\begin{equation}\label{trasfheisb2}
	{U_{\lambda}^{(SE)}} b\left(U_{\lambda}^{(SE)}\right)^\dagger=\sqrt{1-\lambda}a+\sqrt{\lambda}b\,.
	\end{equation}
\end{lemma}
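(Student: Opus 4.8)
The plan is to derive all four conjugation formulas from the single observation that the beam-splitter generator rotates the mode operators. Write $U_\lambda^{(SE)} = e^{\theta G}$ with $\theta \coloneqq \arccos\sqrt{\lambda}$, so that $\cos\theta = \sqrt{\lambda}$ and $\sin\theta = \sqrt{1-\lambda}$, and with anti-Hermitian generator $G \coloneqq a^\dagger b - a b^\dagger$; then $\big(U_\lambda^{(SE)}\big)^\dagger = e^{-\theta G}$.

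First I would compute, using only the canonical commutation relations $[a,a^\dagger] = [b,b^\dagger] = \id$ and $[a,b] = [a,b^\dagger] = 0$, the two commutators $[a,G] = [a,a^\dagger]\,b = b$ and $[b,G] = -a\,[b,b^\dagger] = -a$ (the remaining terms vanishing). Next, set $A(\theta) \coloneqq e^{-\theta G} a\, e^{\theta G}$ and $B(\theta) \coloneqq e^{-\theta G} b\, e^{\theta G}$. Since $G$ commutes with $e^{\pm\theta G}$, differentiating in $\theta$ gives the closed linear system $A'(\theta) = e^{-\theta G}[a,G]\,e^{\theta G} = B(\theta)$ and $B'(\theta) = e^{-\theta G}[b,G]\,e^{\theta G} = -A(\theta)$, with initial conditions $A(0) = a$ and $B(0) = b$. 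Its unique solution is $A(\theta) = a\cos\theta + b\sin\theta$ and $B(\theta) = -a\sin\theta + b\cos\theta$; substituting $\cos\theta = \sqrt{\lambda}$ and $\sin\theta = \sqrt{1-\lambda}$ yields exactly~\eqref{trasfheisa} and~\eqref{trasfheisb}.

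For the remaining two identities I would note that $U_\lambda^{(SE)}(\cdot)\big(U_\lambda^{(SE)}\big)^\dagger = e^{\theta G}(\cdot)\,e^{-\theta G}$ is obtained from the above computation by the substitution $\theta \mapsto -\theta$, which flips the sign of every $\sqrt{1-\lambda}$ term and leaves the $\sqrt{\lambda}$ terms unchanged, giving~\eqref{trasfheisa2}--\eqref{trasfheisb2}. Equivalently, and without repeating the argument, one can apply $U_\lambda^{(SE)}(\cdot)\big(U_\lambda^{(SE)}\big)^\dagger$ to both sides of~\eqref{trasfheisa}--\eqref{trasfheisb}: this expresses $a$ and $b$ as the rotation by angle $\theta$ of the pair $\big(U_\lambda^{(SE)} a\, \big(U_\lambda^{(SE)}\big)^\dagger,\ U_\lambda^{(SE)} b\, \big(U_\lambda^{(SE)}\big)^\dagger\big)$, and inverting this $2\times 2$ orthogonal system produces~\eqref{trasfheisa2}--\eqref{trasfheisb2}.

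I do not anticipate a genuine obstacle; the content is the elementary Hadamard-lemma/ODE computation above. The only point deserving a word of care is analytic rather than algebraic: since $a$ and $b$ are unbounded on $\HH_S\otimes\HH_E = L^2(\mathbb{R})\otimes L^2(\mathbb{R})$, the formal manipulations with $e^{\pm\theta G}$ should be carried out on the dense domain $\mathcal D$ of finite linear combinations of two-mode Fock states $\ket{m}_S\otimes\ket{n}_E$, which is a common core left invariant by $G$ and by $U_\lambda^{(SE)}$; on $\mathcal D$ all the power series converge absolutely, term-by-term differentiation in $\theta$ is justified, and the resulting operator identities then extend to the full natural domains by closedness of the operators involved.
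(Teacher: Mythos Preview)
Your proof is correct and follows essentially the same route as the paper: both set $\theta=\arccos\sqrt\lambda$, differentiate the conjugated operators with respect to $\theta$, compute the commutators $[a,G]=b$ and $[b,G]=-a$, and solve the resulting linear ODE to obtain the rotation; the only cosmetic difference is that you handle $A(\theta),B(\theta)$ as a coupled first-order system while the paper passes to the second-order equation $\hat f''=-\hat f$, and your $\theta\mapsto-\theta$ argument for~\eqref{trasfheisa2}--\eqref{trasfheisb2} is a slightly cleaner phrasing of the paper's substitution $b\to -b$ (resp.\ $a\to -a$). Your added domain remark is a welcome bonus that the paper omits.
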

\begin{proof}
	By setting $\eta\coloneqq \arccos\sqrt{\lambda}$ i.e.~$\lambda(\eta)=\cos^2(\eta
	)$, and $$\hat{f}(\eta)\coloneqq	\left(U_{\lambda(\eta)}^{(SE)}\right)^\dagger a\,U_{\lambda(\eta)}^{(SE)}=e^{-\eta\left(a^\dagger b-ab^\dagger\right)}a\,e^{\eta\left(a^\dagger b-ab^\dagger\right)}\,,$$
	one obtains:
	\bb\label{f'}
	\hat{f}'(\eta)&=-\left(U_{\lambda(\eta)}^{(SE)}\right)^\dagger [a^\dagger b-ab^\dagger,a]\,U_{\lambda(\eta)}^{(SE)}=	\left(U_{\lambda(\eta)}^{(SE)}\right)^\dagger b\,U_{\lambda(\eta)}^{(SE)}\,,
	\ee
	\bb\label{f'}
	\hat{f}'(\eta)&=-\left(U_{\lambda(\eta)}^{(SE)}\right)^\dagger [a^\dagger b-ab^\dagger,b]\,U_{\lambda(\eta)}^{(SE)}=	-\left(U_{\lambda(\eta)}^{(SE)}\right)^\dagger a\,U_{\lambda(\eta)}^{(SE)}=-\hat{f}(\eta)\,.
	\ee
	Therefore, there exist two operators $\hat{c_0}$ and $\hat{c_1}$ such that
	\begin{equation}\label{formfeta}
	\hat{f}(\eta)=\cos\eta \,\hat{c_0}+\sin\eta \,\hat{c_1}\,.
	\end{equation}
	By imposing $\hat{f}(0)=a$ and $\hat{f}'(0)=b$, we arrive at
	$$\hat{f}(\eta)=\cos\eta \,a+\sin\eta\, b$$
	which implies \eqref{trasfheisa} to be valid. Furthermore, \eqref{trasfheisb} follows from~\eqref{f'}. To conclude the proof, \eqref{trasfheisa2} follows from~\eqref{trasfheisa} by substituting $b\rightarrow-b$, while \eqref{trasfheisb2} follows from~\eqref{trasfheisb} by substituting $a\rightarrow-a$.
\end{proof}
\begin{lemma}\label{LemmaEner}
The mean photon number of the output system and environment are given by
\begin{equation}\label{enphi}
	\langle a^\dagger a\rangle_{\Phi_{\lambda,\sigma}(\rho)} \coloneqq \Tr_S\left[\Phi_{\lambda,\sigma}(\rho)\,a^\dagger a\right]= \lambda\langle a^\dagger a\rangle_\rho +(1-\lambda)\langle b^\dagger b\rangle_\sigma+2\sqrt{\lambda(1-\lambda)}\Re\left(\langle a\rangle_\rho \langle b^\dagger\rangle_\sigma\right)
\end{equation}
and
\begin{equation}\label{enweak}
\langle b^\dagger b\rangle_{\tilde{\Phi}^{\text{wc}}_{\lambda,\sigma}(\rho)} =(1-\lambda)\langle a^\dagger a\rangle_\rho +\lambda\langle b^\dagger b\rangle_\sigma-2\sqrt{\lambda(1-\lambda)}\Re\left(\langle a\rangle_\rho \langle b^\dagger\rangle_\sigma\right)\,,
\end{equation}
respectively.
\end{lemma}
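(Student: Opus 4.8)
The plan is to reduce both identities to the Heisenberg-picture action of the beam-splitter unitary, already computed in Lemma~\ref{lemmatrasf}, and then to evaluate the resulting one- and two-mode expectation values on the product state $\rho\otimes\sigma$.

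First I would use the definition \eqref{def_genatt} together with the cyclicity of the (partial) trace to write
\[
\langle a^\dagger a\rangle_{\Phi_{\lambda,\sigma}(\rho)}=\Tr\!\left[\rho\otimes\sigma\,\left(U_\lambda^{(SE)}\right)^\dagger a^\dagger a\, U_\lambda^{(SE)}\right],
\]
and then insert $U_\lambda^{(SE)}\left(U_\lambda^{(SE)}\right)^\dagger=\id$ between $a^\dagger$ and $a$, turning the conjugated number operator into a product of conjugated ladder operators. By \eqref{trasfheisa} we have $\left(U_\lambda^{(SE)}\right)^\dagger a\, U_\lambda^{(SE)}=\sqrt\lambda\,a+\sqrt{1-\lambda}\,b$, and taking the adjoint gives the expression for $\left(U_\lambda^{(SE)}\right)^\dagger a^\dagger U_\lambda^{(SE)}$; multiplying the two and expanding yields
\[
\left(U_\lambda^{(SE)}\right)^\dagger a^\dagger a\, U_\lambda^{(SE)}=\lambda\, a^\dagger a+(1-\lambda)\,b^\dagger b+\sqrt{\lambda(1-\lambda)}\left(a^\dagger b+b^\dagger a\right).
\]

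Next I would take the expectation value of the right-hand side on $\rho\otimes\sigma$. Since $a$ acts trivially on $E$ and $b$ trivially on $S$, the product structure gives $\langle a^\dagger a\rangle_{\rho\otimes\sigma}=\langle a^\dagger a\rangle_\rho$, $\langle b^\dagger b\rangle_{\rho\otimes\sigma}=\langle b^\dagger b\rangle_\sigma$, $\langle a^\dagger b\rangle_{\rho\otimes\sigma}=\langle a^\dagger\rangle_\rho\langle b\rangle_\sigma$ and $\langle b^\dagger a\rangle_{\rho\otimes\sigma}=\langle a\rangle_\rho\langle b^\dagger\rangle_\sigma$. The last two are complex conjugates of one another, so their sum equals $2\Re\!\left(\langle a\rangle_\rho\langle b^\dagger\rangle_\sigma\right)$; collecting the terms gives \eqref{enphi}.

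Finally, for the weak complementary channel I would repeat the identical computation with $b^\dagger b$ in place of $a^\dagger a$, now invoking \eqref{trasfheisb}, i.e.\ $\left(U_\lambda^{(SE)}\right)^\dagger b\, U_\lambda^{(SE)}=-\sqrt{1-\lambda}\,a+\sqrt\lambda\,b$; the only effect is a sign flip of the cross term, which produces \eqref{enweak}. The argument is entirely routine: the sole points deserving care are the correct bookkeeping of the cross terms and their identification with twice a real part, and — if one wants full rigour on this infinite-dimensional system — a brief remark that the operator identities hold on a suitable dense domain and that both sides are finite precisely when $\rho$ and $\sigma$ have finite first and second moments. This mild domain issue, rather than any genuine difficulty, is the only obstacle worth flagging.
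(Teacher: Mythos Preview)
Your proposal is correct and follows essentially the same route as the paper: pass to the Heisenberg picture via Lemma~\ref{lemmatrasf}, expand the conjugated number operator as $(\sqrt\lambda\,a+\sqrt{1-\lambda}\,b)^\dagger(\sqrt\lambda\,a+\sqrt{1-\lambda}\,b)$, factorise the cross expectations on the product state, and observe that the weak-complementary case amounts to the same computation with \eqref{trasfheisb} in place of \eqref{trasfheisa}. The paper does not include your domain remark, but otherwise the argument is identical.
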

\begin{proof}
From Lemma~\ref{lemmatrasf} we have
\bb
\langle a^\dagger a\rangle_{\Phi_{\lambda,\sigma}(\rho)} &=	\Tr_S\left[\Phi_{\lambda,\sigma}(\rho)\,a^\dagger a\right]=\Tr_{SE}\left[U_\lambda^{(SE)} \rho\otimes\sigma \left({U_\lambda^{(SE)}}\right)^\dagger a^\dagger a\right]\\&=\Tr_{SE}\left[\rho\otimes\sigma \left(\left({U_\lambda^{(SE)}}\right)^\dagger a^\dagger U_\lambda^{(SE)}\right)\left(\left({U_\lambda^{(SE)}}\right)^\dagger a U_\lambda^{(SE)}\right)\right]\\&=\Tr_{SE}\left[\rho\otimes\sigma \left(\sqrt{\lambda}a^\dagger+\sqrt{1-\lambda}b^\dagger\right)\left(\sqrt{\lambda}a+\sqrt{1-\lambda}b\right)\right]\\&=\lambda\langle a^\dagger a\rangle_\rho +(1-\lambda)\langle b^\dagger b\rangle_\sigma+\sqrt{\lambda(1-\lambda)}\left(\langle a\rangle_\rho \langle b^\dagger\rangle_\sigma+\langle a^\dagger\rangle_\rho \langle b\rangle_\sigma\right)\\&=\lambda\langle a^\dagger a\rangle_\rho +(1-\lambda)\langle b^\dagger b\rangle_\sigma+2\sqrt{\lambda(1-\lambda)}\Re\left(\langle a\rangle_\rho \langle b^\dagger\rangle_\sigma\right)\,.
\ee
The formula~\eqref{enweak} follows from a similar calculation. 
\end{proof}
\begin{Def}
For every trace class operator $T$ on $L^2(\mathbb{R}^m)$, its \emph{characteristic function} $\chi_{T}:\mathbb{C^m}\rightarrow\mathbb{C}$ is defined by
\begin{equation}
	\chi_{T}(z)\coloneqq \Tr\left[T\,D(z)\right]\qquad\forall\, z=(z_1,z_2,\ldots,z_m)\in\mathbb{C}^m \,,
\end{equation}
	where $$D(z)\coloneqq\exp{\left[\sum_{i=0}^m \left(z_i a_i^\dagger-z_i^\ast a_i\right)\right] }$$ is the displacement operator, with $a_i$ being the annihilation operator corresponding to the i-th mode.
Conversely, it turns out that every trace class operator $T$ can be reconstructed from its characteristic functions $\chi_{T}$ via the following identity~\cite{HOLEVO-CHANNELS-2,BUCCO}:
\begin{equation}
T=\int_{\mathbb{C}} \frac{d^m z}{\pi^m}\text{ }D(-z)\chi_{T}(z)\,.
\end{equation}
\end{Def}

\begin{lemma}
The action of a {BS} of transmissivity $\lambda$ on the a two-mode state of the form $\rho\otimes\sigma$ can be cast in the language of characteristic functions as
\begin{equation}\label{caract}
\chi_{U_\lambda^{(SE)}  \rho\otimes\sigma  \left(U_\lambda^{(SE)}\right)^\dagger} \left(z,w\right)=\chi_{\rho}\left(\sqrt{\lambda}z-\sqrt{1-\lambda}w\right)\chi_{\sigma}\left(\sqrt{1-\lambda}z+\sqrt{\lambda}w\right)\quad\forall\, z,w\in\mathbb{C}\,.
\end{equation}
Consequently, it holds that
\begin{equation}\label{caract3}
\chi_{\Phi_{\lambda,\sigma}(\rho)} \left(z\right)=\chi_{\rho}\left(\sqrt{\lambda}z\right)\chi_{\sigma}\left(\sqrt{1-\lambda}z\right)\quad\forall\, z\in\mathbb{C}\,,
\end{equation}
\begin{equation}\label{caract_weak}
\chi_{\tilde{\Phi}^{\text{wc}}_{\lambda,\sigma}(\rho)} \left(z\right)=\chi_{\rho}\left(-\sqrt{1-\lambda}w\right)\chi_{\sigma}\left(\sqrt{\lambda}w\right)\quad\forall\, w\in\mathbb{C}\,.
\end{equation}
\end{lemma}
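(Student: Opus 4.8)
The plan is to reduce the whole statement to the Heisenberg-picture action of the beam splitter on displacement operators, which is already recorded in Lemma~\ref{lemmatrasf}, together with cyclicity of the trace and the factorisation of the trace on a product state. Write $D(z,w)$ for the two-mode displacement operator (in the notation of the definition above, with the two modes being $S$ and $E$), which factorises as $D(z,w)=D_S(z)\otimes D_E(w)=\exp\!\big[(z a^\dagger-z^\ast a)+(w b^\dagger-w^\ast b)\big]$, with $D_S(\cdot),D_E(\cdot)$ the single-mode displacements on $S$ and $E$; the last equality is exact since the two generators act on different modes and hence commute. Then, by definition of the characteristic function and cyclicity of the trace,
\[
\chi_{U_\lambda^{(SE)}\rho\otimes\sigma\,(U_\lambda^{(SE)})^\dagger}(z,w)=\Tr\!\left[\rho\otimes\sigma\,(U_\lambda^{(SE)})^\dagger D(z,w)\,U_\lambda^{(SE)}\right].
\]

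Next, since $U_\lambda^{(SE)}$ is unitary, conjugation commutes with the exponential, so $(U_\lambda^{(SE)})^\dagger D(z,w)\,U_\lambda^{(SE)}=\exp\!\big[(U_\lambda^{(SE)})^\dagger\big((z a^\dagger-z^\ast a)+(w b^\dagger-w^\ast b)\big)U_\lambda^{(SE)}\big]$. I would substitute the four Heisenberg identities of Lemma~\ref{lemmatrasf} (together with their adjoints, obtained by taking $\dagger$), using that $\lambda$ is real so that $\sqrt\lambda,\sqrt{1-\lambda}$ pass unchanged through complex conjugation; collecting the coefficients of $a^\dagger,a,b^\dagger,b$ then shows that the transformed generator equals $\big(z'a^\dagger-{z'}^\ast a\big)+\big(w'b^\dagger-{w'}^\ast b\big)$ with $z'\coloneqq\sqrt\lambda\,z-\sqrt{1-\lambda}\,w$ and $w'\coloneqq\sqrt{1-\lambda}\,z+\sqrt\lambda\,w$ (the conjugates come out with the correct signs precisely because $\lambda\in[0,1]$). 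Hence $(U_\lambda^{(SE)})^\dagger D(z,w)\,U_\lambda^{(SE)}=D_S(z')\otimes D_E(w')$, and factorising the trace on $\rho\otimes\sigma$ yields $\chi_\rho(z')\,\chi_\sigma(w')$, which is~\eqref{caract}.

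For the two corollaries I would invoke the elementary fact that, in characteristic-function language, a partial trace corresponds to setting the argument of the traced-out mode to zero: for a two-mode trace-class operator $T$ one has $\chi_{\Tr_E T}(z)=\Tr_{SE}[T\,D_S(z)\otimes\id_E]=\chi_T(z,0)$ and similarly $\chi_{\Tr_S T}(w)=\chi_T(0,w)$. Applying this with $T=U_\lambda^{(SE)}\rho\otimes\sigma\,(U_\lambda^{(SE)})^\dagger$ and inserting~\eqref{caract}: putting $w=0$ gives $z'=\sqrt\lambda\,z$, $w'=\sqrt{1-\lambda}\,z$, hence~\eqref{caract3}; putting $z=0$ gives $z'=-\sqrt{1-\lambda}\,w$, $w'=\sqrt\lambda\,w$, hence~\eqref{caract_weak}.

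There is essentially no deep obstacle; the one point requiring a little care is the re-assembly of the transformed exponent into the standard displacement form after Lemma~\ref{lemmatrasf}, which works because the beam-splitter matrix $\lsmatrix\sqrt\lambda & -\sqrt{1-\lambda}\\ \sqrt{1-\lambda} & \sqrt\lambda\rsmatrix$ is real and orthogonal, so it maps a displacement to a displacement with no residual phase. As an alternative that sidesteps combining the two exponentials, one may conjugate $D_S(z)$ and $D_E(w)$ separately via Lemma~\ref{lemmatrasf}, obtaining $D_S(\sqrt\lambda\,z)\otimes D_E(\sqrt{1-\lambda}\,z)$ and $D_S(-\sqrt{1-\lambda}\,w)\otimes D_E(\sqrt\lambda\,w)$, and then multiply: the two phases produced by the displacement composition law on $S$ and on $E$ are $\mp\sqrt{\lambda(1-\lambda)}\,\Im(z w^\ast)$ and cancel, reproducing $D_S(z')\otimes D_E(w')$ as before.
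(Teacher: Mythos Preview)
Your proposal is correct and follows essentially the same approach as the paper: conjugate the displacement by the beam-splitter unitary via Lemma~\ref{lemmatrasf}, then factor the trace on $\rho\otimes\sigma$, and finally obtain~\eqref{caract3}--\eqref{caract_weak} by setting one characteristic-function argument to zero. In fact your ``alternative'' route (conjugating $D_S(z)$ and $D_E(w)$ separately and multiplying) is precisely what the paper does, though the paper leaves the cancellation of the two composition-law phases implicit while you spell it out; your primary route, transforming the combined exponent directly, sidesteps that bookkeeping entirely and is marginally cleaner.
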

\begin{proof}
Denote the displacement operators on $\HH_S$ and $\HH_E$ respectively as $D_S(z)$ and $D_E(z)$.
\eqref{trasfheisa} and \eqref{trasfheisb} imply that
\bb
 \left(U_\lambda^{(SE)}\right)^\dagger D_S(z)\, U_\lambda^{(SE)}&=D_S(\sqrt{\lambda}z)\, D_E(\sqrt{1-\lambda}z)\\
 \left(U_\lambda^{(SE)}\right)^\dagger D_E(w)\, U_\lambda^{(SE)}&=D_S(-\sqrt{1-\lambda}w)\, D_E(\sqrt{\lambda}w)
\ee
Consequently, we obtain
\begin{equation}
 \left(U_\lambda^{(SE)}\right)^\dagger D_S(z)\, D_E(w)\, U_\lambda^{(SE)}=D_S(\sqrt{\lambda}z-\sqrt{1-\lambda}w)\, D_E(\sqrt{1-\lambda}z+\sqrt{\lambda}w)\,.
\end{equation}
Hence, it holds that
\bb
\chi_{U_\lambda^{(SE)}  \rho\otimes\sigma  \left(U_\lambda^{(SE)}\right)^\dagger} \left(z,w\right)&=\Tr_{SE}\left[U_\lambda^{(SE)}  \rho\otimes\sigma  \left(U_\lambda^{(SE)}\right)^\dagger \,D_S(z)\, D_E(z) \right] \\&=\Tr_{SE}\left[\rho\otimes\sigma\,D_S(\sqrt{\lambda}z-\sqrt{1-\lambda}w)\, D_E(\sqrt{1-\lambda}z+\sqrt{\lambda}w) \right] \\&=\chi_{\rho}\left(\sqrt{\lambda}z-\sqrt{1-\lambda}w\right)\chi_{\sigma}\left(\sqrt{1-\lambda}z+\sqrt{\lambda}w\right)\,.
\ee
\eqref{caract3} and \eqref{caract_weak} follow from the fact that for all $\rho_{SE}\in\mathfrak{S}(\HH_S\otimes\HH_E)$ it holds that
\bb
    \chi_{\Tr_S\rho_{SE}}(w)=\chi_{\rho_{SE}}(z=0,w)\,,
    \chi_{\Tr_E\rho_{SE}}(z)=\chi_{\rho_{SE}}(z,w=0)\,.
\ee
\end{proof}

\begin{lemma}\label{lemma_invert}
For all $\lambda\in[0,1]$ and all $\rho$, $\sigma$ single-mode states, it holds that
\begin{equation}\label{invert2}
\Phi_{\lambda,\sigma}(\rho)=\Phi_{1-\lambda,\,\rho}(\sigma)\,.
\end{equation}
\end{lemma}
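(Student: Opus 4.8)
The plan is to compare the characteristic functions of the two sides and then invoke the fact that a trace-class operator is uniquely determined by its characteristic function. First I would apply identity~\eqref{caract3} directly to the left-hand side, giving $\chi_{\Phi_{\lambda,\sigma}(\rho)}(z)=\chi_{\rho}\!\left(\sqrt{\lambda}\,z\right)\chi_{\sigma}\!\left(\sqrt{1-\lambda}\,z\right)$ for all $z\in\mathbb{C}$. Then I would apply the \emph{same} identity~\eqref{caract3} to the right-hand side, performing the substitutions $\lambda\mapsto 1-\lambda$, $\rho\mapsto\sigma$, $\sigma\mapsto\rho$, which yields $\chi_{\Phi_{1-\lambda,\rho}(\sigma)}(z)=\chi_{\sigma}\!\left(\sqrt{1-\lambda}\,z\right)\chi_{\rho}\!\left(\sqrt{1-(1-\lambda)}\,z\right)=\chi_{\sigma}\!\left(\sqrt{1-\lambda}\,z\right)\chi_{\rho}\!\left(\sqrt{\lambda}\,z\right)$.

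Since $\chi_\rho$ and $\chi_\sigma$ are scalar ($\mathbb{C}$-valued) functions, their pointwise product commutes, so the two expressions above coincide for every $z\in\mathbb{C}$; hence $\chi_{\Phi_{\lambda,\sigma}(\rho)}=\chi_{\Phi_{1-\lambda,\rho}(\sigma)}$ as functions on $\mathbb{C}$. Because both $\Phi_{\lambda,\sigma}(\rho)$ and $\Phi_{1-\lambda,\rho}(\sigma)$ are quantum states, in particular trace-class operators, I can apply the reconstruction identity $T=\int_{\mathbb{C}}\frac{d^2 z}{\pi}\,D(-z)\,\chi_{T}(z)$, which exhibits the map $T\mapsto\chi_T$ as injective on trace-class operators. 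Equality of the characteristic functions then forces $\Phi_{\lambda,\sigma}(\rho)=\Phi_{1-\lambda,\rho}(\sigma)$, which is the claim; note this also covers the endpoints $\lambda\in\{0,1\}$, as one can check directly ($\Phi_{0,\sigma}(\rho)=\sigma=\Phi_{1,\rho}(\sigma)$ and $\Phi_{1,\sigma}(\rho)=\rho=\Phi_{0,\rho}(\sigma)$).

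I expect essentially no obstacle here: the computation is a one-line substitution in~\eqref{caract3}, and the only point requiring a word of care is that injectivity of the characteristic-function map must be applied to bona fide trace-class operators, which is automatic since general attenuators are channels. An alternative, more hands-on route would be to establish a unitary relation between $U_{1-\lambda}^{(SE)}$ and $U_{\lambda}^{(SE)}$ implementing the mode swap $S\leftrightarrow E$ together with the sign flips $a\mapsto -a$ or $b\mapsto -b$ featured in Lemma~\ref{lemmatrasf}, and then conclude via cyclicity of the partial trace; this would work but is strictly more cumbersome than the characteristic-function argument, so I would not pursue it.
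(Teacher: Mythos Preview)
Your proof is correct and follows exactly the same approach as the paper: apply~\eqref{caract3} to both sides, observe that the resulting scalar products agree, and conclude via the one-to-one correspondence between trace-class operators and their characteristic functions. The paper's proof is simply a terser version of yours.
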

\begin{proof}
The characteristic function associated with an output of a general attenuator can be expressed as in \eqref{caract3} and hence it holds that
\begin{equation}
	\chi_{\Phi_{\lambda,\sigma}(\rho)} \left(z\right)=\chi_{\Phi_{1-\lambda,\rho}(\sigma)} \left(z\right)\quad\forall\, z\in\mathbb{C}\,.
\end{equation}
Since quantum states are in one-to-one correspondence with characteristic functions, \eqref{invert2} is proved.
\end{proof}

In the proof of Lemma~\ref{teo_lowtrasm} we will use the following result, dubbed `master equation trick', which is derived in \cite{Die-Hard-2-PRA} by exploiting the master equation associated with the thermal attenuator $\Phi_{\lambda,\tau_\nu}$.
\begin{thm}{\cite[Theorem 5]{Die-Hard-2-PRA}}\label{master_trick}
For all $\lambda\in[0,1],\nu\ge 0$, the thermal attenuator $\Phi_{\lambda,\tau_\nu}$ admits the following Kraus representation:
\begin{equation}\label{krausformatt}
    \Phi_{\lambda,\tau_\nu}(\rho)=\sum_{k,m=0}^\infty M_{k,m}\rho M_{k,m}^\dagger\,,
\end{equation}
where 
\bb
    M_{k,m}\coloneqq\sqrt{\frac{\nu^k(\nu+1)^m(1-\lambda)^{m+k}}{k!m![(1-\lambda)\nu+1]^{m+k+1}}}(a^\dagger)^k\left(\frac{\sqrt{\lambda}}{(1-\lambda)\nu+1}\right)^{a^\dagger a}a^m\,.
\ee
In particular, by letting $\ket{n}$ and $\ket{i}$ two Fock states, it holds that
\begin{equation}\label{action_ni}
    \Phi_{\lambda,\tau_\nu}(\ketbraa{n}{i})=\sum_{l=\max(i-n,0)}^\infty f_{n,i,l}(\lambda,\nu)\ketbraa{l+n-i}{l}\,,
\end{equation}
where
\bb\label{f_nil}
    f_{n,i,l}(\lambda,\nu)\coloneqq \sum_{m=\max(i-l,0)}^{\min(n,i)}\frac{\sqrt{n!i!l!(l+n-i)!}}{(n-m)!(i-m)!m!(l+m-i)!}\frac{\nu^{l+m-i}(\nu+1)^m(1-\lambda)^{2m+l-i}\lambda^{\frac{n+i-2m}{2}}}{\left((1-\lambda)\nu+1\right)^{l+n+1}}\,.
\ee
\end{thm}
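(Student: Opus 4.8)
\emph{Strategy.} The plan is to recognise $\{\Phi_{\lambda,\tau_\nu}\}_\lambda$, reparametrised by $\lambda=e^{-t}$, as the quantum dynamical semigroup generated by the standard thermally-damped-oscillator Lindbladian, and then to show that the right-hand side of~\eqref{krausformatt} solves the associated master equation with the correct initial datum, so that uniqueness forces the two to agree. First I would identify the generator: combining~\eqref{caract3} with the Gaussian characteristic function $\chi_{\tau_\nu}(z)=e^{-(\nu+1/2)|z|^2}$ of a thermal state gives $\chi_{\Phi_{\lambda,\tau_\nu}(\rho)}(z)=\chi_\rho(\sqrt\lambda z)\,e^{-(\nu+1/2)(1-\lambda)|z|^2}$, from which the semigroup law $\Phi_{\lambda,\tau_\nu}\circ\Phi_{\mu,\tau_\nu}=\Phi_{\lambda\mu,\tau_\nu}$ is immediate, and differentiating at $\lambda=1$ (with $\lambda=e^{-t}$) and translating back to operators yields $\frac{d}{dt}\Phi_{e^{-t},\tau_\nu}(\rho)=\mathcal L\big(\Phi_{e^{-t},\tau_\nu}(\rho)\big)$ with $\mathcal L(\rho)=(\nu+1)\big(a\rho a^\dagger-\tfrac12\{a^\dagger a,\rho\}\big)+\nu\big(a^\dagger\rho a-\tfrac12\{aa^\dagger,\rho\}\big)$.

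\emph{Checking the candidate.} Write $\mu\coloneqq\sqrt\lambda/((1-\lambda)\nu+1)$ and $\widetilde\Phi_\lambda(\rho)\coloneqq\sum_{k,m}M_{k,m}\rho M_{k,m}^\dagger$. Complete positivity is automatic from the Kraus form, so there are three things to verify: (i)~trace preservation, $\sum_{k,m}M_{k,m}^\dagger M_{k,m}=\id$; (ii)~the initial condition $\widetilde\Phi_1=\Id$; and (iii)~$\frac{d}{dt}\widetilde\Phi_{e^{-t}}=\mathcal L\circ\widetilde\Phi_{e^{-t}}$. For (i), since $\mu^{a^\dagger a}$ is diagonal in the Fock basis and $a^k(a^\dagger)^k=\tfrac{(a^\dagger a+k)!}{(a^\dagger a)!}$, the sum over $k$ collapses by the negative-binomial identity $\sum_{k\ge0}\binom{n+k}{k}x^k=(1-x)^{-n-1}$ (with $x=\nu(1-\lambda)/((1-\lambda)\nu+1)$, so $1-x=1/((1-\lambda)\nu+1)$), and the residual sum over $m$ collapses by the binomial theorem; evaluating on each $\ket n$ gives $\id$. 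For (ii), $(1-\lambda)^{m+k}\to0$ unless $m=k=0$, leaving $M_{0,0}=\id$. For (iii) I would differentiate $M_{k,m}(t)$ in $t$ — the only subtlety being the factor $\mu^{a^\dagger a}$, whose $t$-derivative produces an extra $a^\dagger a$ — and then reorganise the resulting operator strings using $f(a^\dagger a)\,a=a\,f(a^\dagger a+1)$ and its adjoint, matching the outcome term by term against $\mathcal L\big(\widetilde\Phi_{e^{-t}}(\rho)\big)$. By (i)--(iii) and uniqueness of the solution of this master equation for the given initial datum, $\widetilde\Phi_\lambda=\Phi_{\lambda,\tau_\nu}$, i.e.~\eqref{krausformatt}. (A purely computational alternative to (iii) is to check directly that $\chi_{\widetilde\Phi_\lambda(\rho)}(z)=\chi_\rho(\sqrt\lambda z)\,e^{-(\nu+1/2)(1-\lambda)|z|^2}$, pushing the diagonal factors through the ladder operators via $\mu^{a^\dagger a}a=\mu^{-1}a\,\mu^{a^\dagger a}$.)

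\emph{The Fock-dyadic formula.} For~\eqref{action_ni} one applies the operators directly: $a^m\ket n=\sqrt{n!/(n-m)!}\,\ket{n-m}$ (nonzero only for $m\le n$), then $\mu^{a^\dagger a}$ contributes the scalar $\mu^{n-m}$, then $(a^\dagger)^k\ket{n-m}=\sqrt{(n-m+k)!/(n-m)!}\,\ket{n-m+k}$; hence $M_{k,m}\ketbraa{n}{i}M_{k,m}^\dagger$ is proportional to $\ketbraa{n-m+k}{\,i-m+k}$, nonzero only for $m\le\min(n,i)$. Setting $l\coloneqq i-m+k$ (so $k=l+m-i\ge0$, i.e.\ $m\ge\max(i-l,0)$) rewrites the target as $\ketbraa{l+n-i}{l}$, and summing the resulting coefficients over $m\in[\max(i-l,0),\min(n,i)]$ reproduces exactly $f_{n,i,l}(\lambda,\nu)$ in~\eqref{f_nil}; the outer lower limit $l\ge\max(i-n,0)$ is simply the condition that this $m$-range be nonempty.

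\emph{Main obstacle.} The load-bearing step is (iii): performing $\frac{d}{dt}$ on the Kraus operators and recombining the number-operator-valued factors with the ladder operators into precisely the Lindblad form is the one genuinely fiddly computation. Equivalently, if one instead derives~\eqref{krausformatt} from scratch --- say by unravelling the master equation into a Dyson/quantum-jump series and grouping terms by the numbers $k$ and $m$ of up- and down-jumps --- the difficulty migrates to resumming the interleaved jump contributions into the stated closed form. A minor technicality to dispatch along the way is that the $M_{k,m}$ are unbounded, so one should justify trace-norm convergence of $\sum_{k,m}M_{k,m}\rho M_{k,m}^\dagger$ on states and carry out the differentiation in (iii) on a dense domain of regular operators (e.g.\ finite linear combinations of Fock dyadics); neither poses real difficulty.
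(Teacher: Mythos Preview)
The paper does not actually prove this theorem: it is imported verbatim from the companion work~\cite[Theorem~5]{Die-Hard-2-PRA}, and the only hint given in the text is the phrase ``master equation trick'' together with the remark that the result ``is derived in~\cite{Die-Hard-2-PRA} by exploiting the master equation associated with the thermal attenuator $\Phi_{\lambda,\tau_\nu}$.'' So there is no in-paper proof to compare against in detail.

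That said, your strategy is exactly the one the paper alludes to. Reparametrising $\lambda=e^{-t}$, verifying the semigroup law via the characteristic-function formula~\eqref{caract3}, identifying the thermally-damped-oscillator Lindbladian, and then showing that the Kraus expression $\widetilde\Phi_\lambda$ solves the same master equation with the same initial datum is precisely what ``master equation trick'' means here. Your trace-preservation check (i) and initial-condition check (ii) are correct as stated, and your derivation of~\eqref{action_ni}--\eqref{f_nil} from the Kraus operators is routine and matches the coefficients exactly (the change of variables $l=i-m+k$ is the right one, and the powers of $(1-\lambda)\nu+1$ combine to $l+n+1$ as claimed).

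The only caveats worth flagging are the ones you yourself identify: step~(iii) is genuinely laborious (the cleanest route is usually not to differentiate $M_{k,m}(t)$ but to verify the equality at the level of characteristic functions, as in your parenthetical alternative), and the uniqueness argument for the master equation requires some care because the generator is unbounded --- one should invoke conservativity of the minimal solution, which here follows from the trace preservation you establish in (i). Neither is a gap in the approach.
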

Let us state a useful lemma which follows from the proof of~\cite[Theorem 1]{die-hard}.
\begin{lemma}\cite[Proof of Theorem 1]{die-hard}\label{lemma_diehard}
Let $\alpha\in\mathbb{C}$. Let $\sigma\in\mathfrak{S}(\HH_E)$ be of the form $\sigma=D(\alpha)\sigma_0{D(\alpha)}^\dagger$, where:
\begin{equation}
    D(z)\coloneqq\exp\left[z\, b^\dagger-z^\ast\, b\right]
\end{equation}
is the displacement operator, and $\sigma_0$ satisfies 
\begin{equation}
    V \sigma_0 V^\dagger=\sigma_0\,,
\end{equation}
with $V \coloneqq (-1)^{b^\dagger b}$ being the parity operator. Then, it holds that
\begin{equation}\label{weak_formula}
\tilde{\Phi}_{\lambda,\sigma}^{\text{wc}}=\mathcal{V}\circ\mathcal{D}_{-2\sqrt{\lambda}\alpha}\circ\Phi_{1-\lambda,\sigma}\,,
\end{equation}
where $\mathcal{V}$ and $\mathcal{D}_{\alpha}$ are two quantum channels defined by: 
\begin{equation}\label{tV}
\mathcal{V}(\cdot)\coloneqq V (\cdot)  V^\dagger\,,
\end{equation}
\begin{equation}\label{tD}
\mathcal{D}_\alpha(\cdot)\coloneqq D(\alpha) (\cdot) D(\alpha)^\dagger\,.
\end{equation}
\end{lemma}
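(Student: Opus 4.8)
The plan is to verify the operator identity \eqref{weak_formula} by comparing characteristic functions, since a trace‑class operator is uniquely determined by its characteristic function; it therefore suffices to show that, for every single‑mode state $\rho$, the functions $\chi_{\tilde{\Phi}^{\text{wc}}_{\lambda,\sigma}(\rho)}$ and $\chi_{\mathcal{V}\circ\mathcal{D}_{-2\sqrt{\lambda}\alpha}\circ\Phi_{1-\lambda,\sigma}(\rho)}$ coincide. The left‑hand side is already in hand: by \eqref{caract_weak}, $\chi_{\tilde{\Phi}^{\text{wc}}_{\lambda,\sigma}(\rho)}(w)=\chi_\rho\!\left(-\sqrt{1-\lambda}\,w\right)\chi_\sigma\!\left(\sqrt{\lambda}\,w\right)$.

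First I would record how the two auxiliary channels act in phase space. Since the parity $V$ flips $b\mapsto-b$, one has $VD(z)V=D(-z)$, hence $\chi_{\mathcal{V}(T)}(z)=\Tr[T\,VD(z)V]=\chi_T(-z)$; and from the Weyl relation $D(-\beta)D(z)D(\beta)=e^{\bar\beta z-\beta\bar z}D(z)$ one gets $\chi_{\mathcal{D}_\beta(T)}(z)=e^{\bar\beta z-\beta\bar z}\chi_T(z)$. Composing these two rules with \eqref{caract3} applied at transmissivity $1-\lambda$, i.e.\ $\chi_{\Phi_{1-\lambda,\sigma}(\rho)}(z)=\chi_\rho\!\left(\sqrt{1-\lambda}\,z\right)\chi_\sigma\!\left(\sqrt{\lambda}\,z\right)$, and working from the inside out (apply $\Phi_{1-\lambda,\sigma}$, then $\mathcal{D}_{-2\sqrt{\lambda}\alpha}$, then $\mathcal{V}$, which replaces $w$ by $-w$), one obtains
\[
\chi_{\mathcal{V}\circ\mathcal{D}_{-2\sqrt{\lambda}\alpha}\circ\Phi_{1-\lambda,\sigma}(\rho)}(w)=e^{\,2\sqrt{\lambda}\left(\bar\alpha w-\alpha\bar w\right)}\,\chi_\rho\!\left(-\sqrt{1-\lambda}\,w\right)\chi_\sigma\!\left(-\sqrt{\lambda}\,w\right).
\]

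Next I would use the structural hypothesis on $\sigma$. Writing $\chi_\sigma(z)=\Tr\!\left[D(\alpha)\sigma_0 D(\alpha)^\dagger D(z)\right]=e^{\bar\alpha z-\alpha\bar z}\chi_{\sigma_0}(z)$, and observing that $V\sigma_0 V=\sigma_0$ forces $\chi_{\sigma_0}$ to be even, $\chi_{\sigma_0}(-z)=\chi_{\sigma_0}(z)$, one derives the key relation $\chi_\sigma\!\left(-\sqrt{\lambda}\,w\right)=e^{-2\sqrt{\lambda}\left(\bar\alpha w-\alpha\bar w\right)}\chi_\sigma\!\left(\sqrt{\lambda}\,w\right)$. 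Substituting this into the displayed formula, the two displacement phases cancel exactly, and the right‑hand side collapses to $\chi_\rho\!\left(-\sqrt{1-\lambda}\,w\right)\chi_\sigma\!\left(\sqrt{\lambda}\,w\right)=\chi_{\tilde{\Phi}^{\text{wc}}_{\lambda,\sigma}(\rho)}(w)$. Since this holds for all $w$ and all $\rho$, \eqref{weak_formula} follows.

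I expect the only delicate point to be the bookkeeping of the Weyl/parity phase factors under the stated sign conventions ($D(z)=e^{zb^\dagger-z^\ast b}$ and $U_\lambda^{(SE)}=e^{\arccos\sqrt{\lambda}\,(a^\dagger b-ab^\dagger)}$) — in particular, checking that the phase generated by $\mathcal{D}_{-2\sqrt{\lambda}\alpha}$ is precisely the inverse of the one produced when rewriting $\chi_\sigma(-\sqrt{\lambda}\,w)$ via the parity symmetry of $\sigma_0$; the rest is a one‑line substitution. An equivalent, more structural route would be to pull $\id_S\otimes D_E(\alpha)$ through $U_\lambda^{(SE)}$ using Lemma~\ref{lemmatrasf} (it becomes $D_S(\sqrt{1-\lambda}\,\alpha)\otimes D_E(\sqrt{\lambda}\,\alpha)$), note that the $S$‑displacement disappears under $\Tr_S$ while the $E$‑displacement yields $\mathcal{D}_{\sqrt{\lambda}\alpha}$, then use the parity invariance of $\sigma_0$ to identify $\tilde{\Phi}^{\text{wc}}_{\lambda,\sigma_0}=\mathcal{V}\circ\Phi_{1-\lambda,\sigma_0}$ together with the commutation rule $\mathcal{D}_\beta\circ\mathcal{V}=\mathcal{V}\circ\mathcal{D}_{-\beta}$ to reassemble \eqref{weak_formula}; the characteristic‑function computation above is, however, the most direct given the machinery already developed.
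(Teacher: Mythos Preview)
Your proof is correct: the characteristic-function computation is clean, the Weyl/parity phase bookkeeping is accurate, and the cancellation between the $\mathcal{D}_{-2\sqrt{\lambda}\alpha}$ phase and the phase produced by rewriting $\chi_\sigma(-\sqrt{\lambda}w)$ via the parity symmetry of $\sigma_0$ goes through exactly as you describe. Note, however, that the paper does not supply its own proof of this lemma; it is quoted from~\cite[Proof of Theorem~1]{die-hard} and used as a black box, so there is no in-paper argument to compare against. Your approach is the natural one given the characteristic-function machinery already developed in the Supplemental Material (Eqs.~\eqref{caract3}--\eqref{caract_weak}), and the alternative operator-level route you sketch at the end is equivalent.
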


\begin{lemma}
For all $\lambda\in[0,1]$, $N>0$, and $n\in\N$, it holds that
\bb \label{cohprob}
	&I_{\text{coh}}\left(\Phi_{\lambda,\ketbrasub{n}},\tau_N\right)  =H\left(P(N,n,\lambda)\right)-H\left(P(N,n,1-\lambda)\right)\, ,
\ee
where
\bb\label{simpler2}
    &P_l(N,n,\lambda)
    =\frac{1}{\left(1+N\lambda\right)^{l+n+1}}\sum_{m=\max\{0,n-l\}}^n\lambda^{2m+l-n}(1-\lambda)^{n-m}  N^{l+m-n}(N+1)^m\binom{n}{m}\binom{l}{n-m}\,.
\ee
$H(P)$ denotes the Shannon entropy of the probability distribution $P$ i.e.\ $H(P)\coloneqq -\sum_{i: P_i\ne 0}P_i\log_2P_i$.
\end{lemma}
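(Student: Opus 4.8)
The plan is to compute the coherent information $I_{\mathrm{coh}}(\Phi_{\lambda,\ketbrasub{n}},\tau_N)$ directly by evaluating the two entropies in its definition, $I_{\mathrm{coh}}(\Phi,\rho)=S(\Phi(\rho))-S(\Phi\otimes\Id_P(\ketbra{\psi}))$ with $\rho=\tau_N$. The key observation is that $S(\Phi\otimes\Id_P(\ketbra{\psi}))=S(\tilde\Phi^{\text{wc}}_{\lambda,\ketbrasub{n}}(\tau_N))$, i.e.\ the entropy of the output of the weak complementary channel on $\tau_N$, since a purification of $\tau_N$ fed through $\Phi\otimes\Id$ has the same spectrum on the complementary side as $\tilde\Phi^{\text{wc}}(\tau_N)$ (the environment is already pure, being a Fock state, so the weak complementary and the Stinespace-complementary coincide up to a local isometry on the purifying system). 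Thus it suffices to show that $\Phi_{\lambda,\ketbrasub{n}}(\tau_N)$ is diagonal in the Fock basis with eigenvalues $P_l(N,n,\lambda)$ given by~\eqref{simpler2}, and that $\tilde\Phi^{\text{wc}}_{\lambda,\ketbrasub{n}}(\tau_N)$ is likewise diagonal with eigenvalues $P_l(N,n,1-\lambda)$; then~\eqref{cohprob} follows immediately from $S(\diag(P))=H(P)$.

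For the first diagonalisation I would invoke Lemma~\ref{lemma_invert}: $\Phi_{\lambda,\ketbrasub{n}}(\tau_N)=\Phi_{1-\lambda,\tau_N}(\ketbra{n})$. Now apply the master equation trick (Theorem~\ref{master_trick}), specifically~\eqref{action_ni} with the roles specialised to $i=n$, transmissivity $1-\lambda$, and thermal parameter $N$: the diagonal element $\ketbraa{n}{n}$ maps to $\sum_{l\ge 0} f_{n,n,l}(1-\lambda,N)\ketbra{l}$, which is manifestly diagonal. It then remains to simplify $f_{n,n,l}(1-\lambda,N)$ from~\eqref{f_nil}: substituting $i=n$, $\lambda\to 1-\lambda$, $\nu\to N$ collapses the binomial-type sum over $m$ and, after rewriting the factorials as the two binomial coefficients $\binom{n}{m}\binom{l}{n-m}$ and collecting powers, yields exactly~\eqref{simpler2}. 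For the weak complementary channel I would use Lemma~\ref{lemma_diehard} with $\alpha=0$ (since $\ketbra{n}$ is parity-invariant, $V\ketbra{n}V^\dagger=\ketbra{n}$, so the hypotheses hold with $\sigma_0=\ketbra{n}$): formula~\eqref{weak_formula} gives $\tilde\Phi^{\text{wc}}_{\lambda,\ketbrasub{n}}=\mathcal{V}\circ\Phi_{1-\lambda,\ketbrasub{n}}$. Since $\mathcal{V}$ is a unitary conjugation it preserves the spectrum, so $S(\tilde\Phi^{\text{wc}}_{\lambda,\ketbrasub{n}}(\tau_N))=S(\Phi_{1-\lambda,\ketbrasub{n}}(\tau_N))$, and applying the previous paragraph's computation with $\lambda\leftrightarrow 1-\lambda$ gives eigenvalues $P_l(N,n,1-\lambda)$.

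The main obstacle I expect is the bookkeeping in reducing $f_{n,n,l}(1-\lambda,N)$ to the closed form~\eqref{simpler2}: one must track the exponents of $\lambda$, $1-\lambda$, $N$, and $N+1$, verify the lower summation limit $\max\{0,n-l\}$ matches $\max(i-l,0)$ with $i=n$ (and that the upper limit $\min(n,i)=n$), and confirm the factorial ratio $\frac{\sqrt{n!\,n!\,l!\,l!}}{(n-m)!^2\,m!\,(l+m-n)!}$ equals $\binom{n}{m}\binom{l}{n-m}$ — the last identity uses $(l+m-n)! = (l-(n-m))!$ and $\binom{l}{n-m}=\frac{l!}{(n-m)!\,(l-n+m)!}$ together with $\binom{n}{m}=\frac{n!}{m!\,(n-m)!}$. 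A secondary point requiring care is the claim that purifying $\tau_N$ and passing through $\Phi\otimes\Id$ produces, on the complementary system, a state isospectral to $\tilde\Phi^{\text{wc}}(\tau_N)$; this is standard (complementary channels are defined up to isometry, and the Fock-state environment needs no further purification), but it should be stated explicitly so that the entropy $S(\Phi\otimes\Id_P(\ketbra{\psi}))$ in the definition of $I_{\mathrm{coh}}$ is correctly identified with $H(P(N,n,1-\lambda))$.
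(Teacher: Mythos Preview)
Your proposal is correct and follows essentially the same route as the paper: identify the second entropy in $I_{\mathrm{coh}}$ with $S(\tilde\Phi^{\text{wc}}_{\lambda,\ketbrasub{n}}(\tau_N))$ via purity of the Fock environment, use Lemma~\ref{lemma_diehard} with $\alpha=0$ to reduce this to $S(\Phi_{1-\lambda,\ketbrasub{n}}(\tau_N))$, then apply Lemma~\ref{lemma_invert} and Theorem~\ref{master_trick} to diagonalise $\Phi_{\lambda,\ketbrasub{n}}(\tau_N)=\Phi_{1-\lambda,\tau_N}(\ketbra{n})$ and read off $P_l(N,n,\lambda)=f_{n,n,l}(1-\lambda,N)$. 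The bookkeeping you flag (summation limits, factorial-to-binomial rewriting) is exactly what the paper compresses into the phrase ``By exploiting~\eqref{f_nil}, one can show\ldots''; your explicit verification of $\frac{n!\,l!}{(n-m)!^2\,m!\,(l-n+m)!}=\binom{n}{m}\binom{l}{n-m}$ is the correct check.
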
 
\begin{proof}
The coherent information is equal to~\cite{MARK,HOLEVO-CHANNELS-2}
\bb
&I_{\text{coh}}\left(\Phi_{\lambda,\ketbrasub{n}},\tau_N\right)  =S\left(\Phi_{\lambda,\ketbrasub{n}}(\tau_N)\right)-S\left(\tilde{\Phi}_{\lambda,\ketbrasub{n}}(\tau_N)\right)\,,
\ee
where $\tilde{\Phi}_{\lambda,\sigma}$ denotes a complementary channel of ${\Phi}_{\lambda,\ketbrasub{n}}$.
Since the weak complementary channel $\tilde{\Phi}^{\text{wc}}_{\lambda,\ketbrasub{n}}$ is associated with a pure environment state, it is a complementary channel of ${\Phi}_{\lambda,\ketbrasub{n}}$. Hence, it holds that
\bb
&I_{\text{coh}}\left(\Phi_{\lambda,\ketbrasub{n}},\tau_N\right)  =S\left(\Phi_{\lambda,\ketbrasub{n}}(\tau_N)\right)-S\left(\tilde{\Phi}^{\text{wc}}_{\lambda,\ketbrasub{n}}(\tau_N)\right)\,.
\ee
By applying Lemma~\ref{lemma_diehard} and the invariance of the von Neumman entropy under unitary transformation, we obtain
\begin{equation}
    S\left(\tilde{\Phi}^{\text{wc}}_{\lambda,\ketbrasub{n}}(\tau_N)\right)= S\left(\Phi_{1-\lambda,\ketbrasub{n}}(\tau_N)\right)\,.
\end{equation}
As a consequence,
\bb\label{Icoh}
&I_{\text{coh}}\left(\Phi_{\lambda,\ketbrasub{n}},\tau_N\right) =S\left(\Phi_{\lambda,\ketbrasub{n}}(\tau_N)\right) -S\left(\Phi_{1-\lambda,\ketbrasub{n}}(\tau_N)\right)\,.
\ee 
Since Lemma~\ref{lemma_invert} implies that
\begin{equation}\label{scambio}
\Phi_{\lambda,\ketbrasub{n}}(\tau_N)=\Phi_{1-\lambda,\tau_N}(\ketbra{n})\,,
\end{equation}
we obtain
\bb\label{Icoh22}
&I_{\text{coh}}\left(\Phi_{\lambda,\ketbrasub{n}},\tau_N\right) =S\left(\Phi_{1-\lambda,\tau_N}(\ketbrasub{n})\right) -S\left(\Phi_{\lambda,\tau_N}(\ketbrasub{n})\right)\,.
\ee 
By using Theorem~\ref{entdifflow}, we can write
\bb\label{phia_NEW}
\Phi_{\lambda,\tau_N}(\ketbraa{n}{n})=\sum_{l=0}^\infty P_l(N,n,1-\lambda)\ketbraa{l}{l}\,,
\ee
where we have denoted $P_l(N,n,\lambda)\coloneqq f_{n,n,l}(1-\lambda,N)$. By exploiting \eqref{f_nil}, one can show that this definition of $P_l(N,n,\lambda)$ leads to \eqref{simpler2}.  Since the von Neumann entropy of $\Phi_{\lambda,\tau_N}(\ketbrasub{n})$ is the Shannon entropy of the probability distribution $\{P_l(N,n,1-\lambda)\}_{l\in\N}$, \eqref{Icoh22} implies that $I_{\text{coh}}\left(\Phi_{\lambda,\ketbrasub{n}},\tau_N\right)  =H\left(P(N,n,\lambda)\right)-H\left(P(N,n,1-\lambda)\right)$.
\end{proof}

\begin{lemma*}[\ref{teo_lowtrasm}]
For all $N,c>0$ it holds that
	\begin{align}\label{entdifflow}
		\liminf\limits_{n\to\infty}I_{\mathrm{coh}}\left(\Phi_{\frac{c}{n},\ketbrasub{n}},\tau_N\right) \ge H\left(q(N,c)\right)-H\left(p(N,c)\right)\,,
	\end{align}
	where $\{q_{k}(N,c)\}_{k\in\mathbb{Z}}$ and $\{p_{k}(N,c)\}_{k\in\mathbb{Z}}$ are two probability distributions defined as
\begin{equation*}
    \begin{aligned}
    q_k(N,c)&\coloneqq e^{-c(2N+1)}\left(\frac{N}{N+1}\right)^{k/2}I_{|k|}\left(2c\sqrt{N(N+1)}\right)\,,\\
	p_k(N,c)&\coloneqq
	\begin{cases}
	\frac{e^{-c/(N+1)}N^k}{(N+1)^{k+1}}L_k\left(-\frac{c}{N(N+1)}\right) &\text{if $k\ge 0$ ,} \\
	0 &\text{otherwise.}
	\end{cases}
    \end{aligned}
\end{equation*}
	with $I_k(\cdot)$ and $L_k(\cdot)$ being the $k$th Bessel function of the first kind and the $k$th Laguerre polynomial, respectively:
	\begin{equation}\label{besself}
	I_k(z)\coloneqq\left(\frac{z}{2}\right)^k 	\sum_{m=0}^{\infty}\frac{\left(\frac{z}{2}\right)^{2m}}{m!(k+m)!}\,,
	\end{equation}
	\begin{equation}
	L_k(x)\coloneqq\sum_{m=0}^k\frac{(-1)^m}{m!}\binom{k}{m}x^m\,.
	\end{equation}
\end{lemma*}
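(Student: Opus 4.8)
The plan is to compute the two spectra $\{P_l(N,n,\lambda)\}_l$ and $\{P_l(N,n,1-\lambda)\}_l$ appearing in~\eqref{Icoh22} at $\lambda=c/n$, take the $n\to\infty$ limit of each, and then control the entropies by Fatou-type semicontinuity arguments. Concretely, I would start from the closed form~\eqref{phia_NEW}--\eqref{simpler2}, i.e.\ the fact that $\Phi_{\lambda,\tau_N}(\ketbrasub{n})$ is diagonal in the Fock basis with eigenvalues $P_l(N,n,1-\lambda)=f_{n,n,l}(1-\lambda,N)$, and its companion with $\lambda\leftrightarrow 1-\lambda$. Since $I_{\mathrm{coh}}=S\big(\Phi_{1-\lambda,\tau_N}(\ketbrasub{n})\big)-S\big(\Phi_{\lambda,\tau_N}(\ketbrasub{n})\big)$, the task reduces to showing: (i) as $n\to\infty$ with $\lambda=c/n$, the ``first'' distribution $\{P_l(N,n,1-c/n)\}_l$ converges (say, in total variation, or pointwise with control of moments) to the shifted-Bessel distribution $\{q_k(N,c)\}_{k\in\mathbb Z}$ — the relabeling $l\mapsto k=l-n$ accounting for the two-sided index — and its entropy converges to $H(q(N,c))$; (ii) the ``second'' distribution $\{P_l(N,n,c/n)\}_l$ converges to the one-sided distribution $\{p_k(N,c)\}_{k\ge 0}$, and its entropy is \emph{upper} semicontinuous along the limit, giving $\limsup_n S(\Phi_{c/n,\tau_N}(\ketbrasub n))\le H(p(N,c))$. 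Combining (i) and (ii) via $\liminf(a_n-b_n)\ge \lim a_n - \limsup b_n$ yields~\eqref{entdifflow}.

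For step (i) I would substitute $\lambda=c/n$ into~\eqref{simpler2}, set $l=n+k$, and do the asymptotics of the finite sum over $m$: the dominant contribution comes from $m$ close to $n$ (writing $m=n-j$), the binomials $\binom{n}{m}\binom{l}{n-m}$ become $\binom{n}{j}\binom{n+k}{j}\sim n^{2j}/(j!)^2$ for fixed $j$, the factor $(1-c/n)^{n-m}\to 1$, $(1+N(1-c/n))^{-(l+n+1)}$ and the remaining powers of $c/n$, $N$, $N+1$ assemble — after using $\big(1+N(1-c/n)\big)^{-(2n+k+1)}\to (N+1)^{-(2n+k+1)}e^{2Nc}$ and pulling out an overall $(N/(N+1))^{?}$ — into exactly $e^{-c(2N+1)}\big(N/(N+1)\big)^{k/2}\sum_j \dfrac{\big(c\sqrt{N(N+1)}\big)^{2j+|k|}}{j!(j+|k|)!}$, which is $q_k(N,c)$ by the series~\eqref{besself} for $I_{|k|}$. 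The normalization $\sum_k q_k=1$ is the generating-function identity $\sum_k t^k I_{|k|}(x)=e^{(x/2)(t+1/t)}$ at $t=\sqrt{N/(N+1)}$, $x=2c\sqrt{N(N+1)}$. For step (ii) the same substitution with $\lambda=c/n$, but now $l=k$ stays finite: only $m=n$ survives in the large-$n$ limit (all other terms carry extra factors $(c/n)^{2m+l-n}$ with the relevant power, or vanish because $\binom{l}{n-m}=0$ once $n-m>l$), and $\binom{n}{n}\binom{k}{0}=1$, $(c/n)^{n+k-n}=(c/n)^k$, $(1-c/n)^{0}=1$, $N^{k}(N+1)^{n}$, and $(1+Nc/n)^{-(k+n+1)}\to e^{-Nc}\cdot 1$ — wait, more carefully one keeps all $m$ with $n-m\le k$, i.e.\ $m=n-i$ for $i=0,\dots,k$, and the sum over $i$ reconstructs the Laguerre polynomial: $\sum_{i=0}^{k}\binom{k}{i}\frac{(-1)^i}{i!}\big(\tfrac{c}{N(N+1)}\big)^i \cdot(\dots)$ gives $L_k(-c/(N(N+1)))$ up to the prefactor $e^{-c/(N+1)}N^k/(N+1)^{k+1}$, which is $p_k(N,c)$.

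The entropy-continuity step is where I expect the main obstacle, because these are distributions on a countably infinite set and Shannon entropy is not continuous under pointwise (or even total-variation) convergence without a uniform tail/moment bound. For the subtracted term $S(\Phi_{c/n,\tau_N}(\ketbrasub n))$ I only need \emph{upper} semicontinuity, which follows from lower semicontinuity of relative entropy / Fatou applied to $-P\log P$ once one has, say, a uniform bound on $\sum_l l\, P_l(N,n,c/n)$ (controllable via Lemma~\ref{LemmaEner}: the output energy of $\Phi_{c/n,\tau_N}(\ketbrasub n)$ is $\tfrac{c}{n}n+(1-\tfrac cn)N\to c+N$, uniformly bounded) together with the fact that the Gibbs state maximizes entropy at fixed energy, so $S\le g(c+N+o(1))$ and $\limsup_n S(\Phi_{c/n,\tau_N}(\ketbrasub n))\le \sum_k -p_k\log p_k$ by an $\varepsilon$-tail argument. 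For the added term I want genuine convergence $S(\Phi_{1-c/n,\tau_N}(\ketbrasub n))\to H(q(N,c))$; lower semicontinuity gives $\liminf\ge H(q)$, and the matching $\limsup\le H(q)$ again comes from a uniform energy bound on $\Phi_{1-c/n,\tau_N}(\ketbrasub n)$ (mean photon number $\le (1-c/n)n + (c/n)N$, which is $\approx n$ — here one instead uses that $k=l-n$ has bounded second moment, i.e.\ $\sum_k k^2 q_k<\infty$ from the Bessel asymptotics, and a recentering argument). Assembling the pieces gives $\liminf_n I_{\mathrm{coh}}(\Phi_{c/n,\ketbrasub n},\tau_N)\ge H(q(N,c))-H(p(N,c))$, as claimed; the $+\infty$ cases (if $H(q)=\infty$) are trivially fine.
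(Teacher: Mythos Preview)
Your overall strategy coincides with the paper's: compute the pointwise limits of the two Fock-diagonal spectra (one after the shift $l\mapsto l-n$), upgrade pointwise to trace-norm convergence, and then handle the entropies via semicontinuity/continuity. Two corrections are worth making.

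\textbf{Bookkeeping swap.} You have the two distributions interchanged. From $\Phi_{\lambda,\tau_N}(\ketbrasub{n})=\sum_l P_l(N,n,1-\lambda)\ketbra{l}$ and $\Phi_{\lambda,\ketbrasub{n}}(\tau_N)=\Phi_{1-\lambda,\tau_N}(\ketbrasub{n})$, the \emph{positive} term $S\big(\Phi_{c/n,\ketbrasub{n}}(\tau_N)\big)$ carries the eigenvalues $P_l(N,n,c/n)$ (not $P_l(N,n,1-c/n)$), and it is \emph{these} that, after the shift $l=n+k$ and with $m$ kept \emph{small} (not $m=n-j$), converge to $q_k$. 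Conversely the \emph{subtracted} term $S\big(\Phi_{1-c/n,\ketbrasub{n}}(\tau_N)\big)$ has eigenvalues $P_l(N,n,1-c/n)$, and here $l=k$ stays bounded while $m=n-i$ with $i\le k$; this is what produces the Laguerre expression $p_k$. Your sketched manipulations (e.g.\ $(1+N(1-c/n))^{-(2n+k+1)}$, $m=n-j$) actually belong to the $p$-computation, not the $q$-one; with the labels fixed, the asymptotics go through via Tannery's theorem exactly as in the paper.

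\textbf{Entropy step.} Here you make the task harder than necessary, and the phrasing ``upper semicontinuity'' is misleading since $S$ is only lower semicontinuous in general. The paper's split is cleaner and avoids your second-moment recentering argument entirely: for the \emph{subtracted} term $S\big(\Phi_{1-c/n,\ketbrasub{n}}(\tau_N)\big)$ the output mean photon number is $(1-c/n)N+c\le N+c$, uniformly bounded, so Holevo's continuity lemma (entropy is continuous on energy-bounded sets for systems satisfying the Gibbs hypothesis) gives the full limit $\lim_n S=H(p)$, not merely $\limsup\le H(p)$. For the \emph{added} term the shifted states live on an extended space where no energy bound is available, and the paper does \emph{not} attempt to prove $\limsup\le H(q)$: lower semicontinuity alone yields $\liminf_n S\big(T_{-n}\Phi_{c/n,\ketbrasub{n}}(\tau_N)T_{-n}^\dagger\big)\ge H(q)$, and together with $\lim_n S_2=H(p)$ and $\liminf(a_n-b_n)\ge\liminf a_n-\lim b_n$ this already gives the claimed inequality. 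So you can drop the attempt to bound $\limsup S_1$ via moments of $k=l-n$.
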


\begin{proof}
From \eqref{Icoh}, we have  
\bb\label{conscons}
&I_{\text{coh}}\left(\Phi_{\frac{c}{n},\ketbrasub{n}},\tau_N\right)=S\left(\Phi_{\frac{c}{n},\ketbrasub{n}}(\tau_N)\right) -S\left(\Phi_{1-\frac{c}{n},\ketbrasub{n}}(\tau_N)\right)\,.
\ee
By using \eqref{scambio} and \eqref{phia_NEW}, we can write
\bb\label{phia}
\Phi_{\frac{c}{n},\ketbrasub{n}}(\tau_N)&=\sum_{l=0}^{\infty}P_l\left(N,n,\frac{c}{n}\right)\ketbra{l}\,,\\
\Phi_{1-\frac{c}{n},\ketbrasub{n}}(\tau_N)&=\sum_{l=0}^{\infty}P_l\left(N,n,1-\frac{c}{n}\right)\ketbra{l}\,,\
\ee
where the coefficients $P_l(N,n,\lambda)$ are equal to
\bb\label{simpler}
    &P_l(N,n,\lambda)
    =\frac{1}{\left(1+N\lambda\right)^{l+n+1}}\sum_{m=\max\{0,n-l\}}^n\lambda^{2m+l-n}(1-\lambda)^{n-m}  N^{l+m-n}(N+1)^m\binom{n}{m}\binom{l}{n-m}\,.
\ee
Let us extend our harmonic oscillator Hilbert space by adding new orthonormal states $\{\ket{-i}\}_{i\in\N^+}$ so that $\{\ket{k}\}_{k\in\mathbb{Z}}$ is an orthonormal basis of the resulting Hilbert space $\HH_{\text{ext}}$.
We define the extension of the operators $\Phi_{\frac{c}{n},\ketbrasub{n}}(\tau_N)$ and $\Phi_{1-\frac{c}{n},\ketbrasub{n}}(\tau_N)$ to $\HH_{\text{ext}}$ such that for all $k\in\N^+$ it holds that (for the sake of simplicity we are going to denote an operator and its extension in the same way) 
\bb
\Phi_{\frac{c}{n},\ketbrasub{n}}(\tau_N)\ket{-k}&=0\,,\\
\Phi_{1-\frac{c}{n},\ketbrasub{n}}(\tau_N)\ket{-k}&=0\,.
\ee
Moreover, let us define for all $k\in\mathbb{Z}$ the shift operators $T_k$ which act on the extended Hilbert space as
\begin{equation}
T_{k}\ket{i}=\ket{i+k}\quad\forall\,  i\in\mathbb{Z}\,.
\end{equation}
Since $T_k$ is unitary, it holds that
\begin{equation}
S\left(\Phi_{\frac{c}{n},\ketbrasub{n}}(\tau_N)\right) =S\left(T_{-n}\Phi_{\frac{c}{n},\ketbrasub{n}}(\tau_N)T_{-n}^\dagger\right).
\end{equation}
Consequently, \eqref{conscons} implies that
\bb\label{conscons2}
&I_{\text{coh}}\left(\Phi_{\frac{c}{n},\ketbrasub{n}},\tau_N\right) =S\left(T_{-n}\Phi_{\frac{c}{n},\ketbrasub{n}}(\tau_N)T_{-n}^\dagger\right)  -S\left(\Phi_{1-\frac{c}{n},\ketbrasub{n}}(\tau_N)\right)\,.
\ee
Unlike the sequence of operators 
\bb
\left\{\Phi_{\frac{c}{n},\ketbrasub{n}}(\tau_N)\right\}_{n\in\N}\subseteq\mathfrak{S}(\HH_S)\,,
\ee
its shifted version, i.e.
\bb
\left\{T_{-n}\Phi_{\frac{c}{n},\ketbrasub{n}}(\tau_N)T_{-n}^\dagger\right\}_{n\in\N}\subseteq\mathfrak{S}(\HH_{\text{ext}})\,,
\ee
does converge in trace norm as $n\rightarrow\infty$, as we are going to prove now.

\eqref{phia} implies that
\bb\label{shiftedState}
T_{-n}\Phi_{\frac{c}{n},\ketbrasub{n}}(\tau_N)T_{-n}^\dagger&=\sum_{l=0}^{\infty}P_l\left(N,n,\frac{c}{n}\right)\ketbra{l-n} =\sum_{k=-n}^{\infty}P_{n+k}\left(N,n,\frac{c}{n}\right)\ketbra{k}\,.
\ee
Hence, since we want to study $T_{-n}\Phi_{\frac{c}{n},\ketbrasub{n}}(\tau_N)T_{-n}^\dagger$ in the limit $n\rightarrow\infty$, let us define the probability distribution $\{q_k(N,c)\}_{k\in\mathbb{Z}}$ over the alphabet $\mathbb{Z}$ as
\begin{equation}\label{qk}
q_k(N,c)\coloneqq\lim\limits_{n\rightarrow\infty}P_{n+k}\left(N,n,\frac{c}{n}\right)\,.
\end{equation}
We will show below that such limit exists. To this end, let us first consider the case where $k\geq 0$. Then \eqref{simpler} implies that
\bb
&q_{-k}(N,c) \\
&= \lim\limits_{n\rightarrow\infty}\frac{\sum_{m=k}^n\left(\frac{c}{n}\right)^{2m-k}\!\left(1\!-\!\frac{c}{n}\right)^{n-m} \! N^{m-k}(N\!+\!1)^m\binom{n}{m}\!\binom{n-k}{n-m}}{(1+N\frac{c}{n})^{2n-k+1}} \\
&=\lim\limits_{n\rightarrow\infty}\frac{1}{(1+N\frac{c}{n})^{2n-k+1}}\sum_{m=0}^{n-k}\left(\frac{c}{n}\right)^{2m+k}\left(1-\frac{c}{n}\right)^{n-m-k} N^{m}(N+1)^{m+k}\binom{n}{m+k}\binom{n-k}{m}  \\&=\lim\limits_{n\rightarrow\infty}\frac{\left(1-\frac{c}{n}\right)^{n-k}\left[c(N+1)\right] ^k}{(1+N\frac{c}{n})^{2n-k+1}} \sum_{m=0}^{n-k}\left(1-\frac{c}{n}\right)^{-m}\left[c\sqrt{N(N+1)}\right]^{2m}\frac{\binom{n}{m+k}\binom{n-k}{m}}{n^{2m+k}} \\
&=e^{-c\left[2N+1\right]}\left[c(N+1)\right]^{k} \lim\limits_{n\rightarrow\infty}\sum_{m=0}^{n-k}\left(1-\frac{c}{n}\right)^{-m}\left[c\sqrt{N(N+1)}\right]^{2m}\frac{\binom{n}{m+k}\binom{n-k}{m}}{n^{2m+k}}\,,\label{seriess}
\ee
where in the last line we used the celebrated \emph{Nepero limit} i.e.$\lim\limits_{n\rightarrow\infty}\left(1-\frac{c}{n}\right)^{n}=e^{-c} $.
Now, we are going to invoke \emph{Tannery's theorem} in order to show that we can interchange the limit and the series, i.e.
\bb
	&\lim\limits_{n\rightarrow\infty}\sum_{m=0}^{n-k}\left(1-\frac{c}{n}\right)^{-m}\left[c\sqrt{N(N+1)}\right]^{2m}\frac{\binom{n}{m+k}\binom{n-k}{m}}{n^{2m+k}} \\&=\sum_{m=0}^{\infty}\lim\limits_{n\rightarrow\infty}\left(1-\frac{c}{n}\right)^{-m}\left[c\sqrt{N(N+1)}\right]^{2m}\frac{\binom{n}{m+k}\binom{n-k}{m}}{n^{2m+k}}\,.
\ee
The statement of the Tannery's theorem, which is nothing but a special case of Lebesgue's dominated convergence theorem, is the following. \emph{For all $n\in\N$, let $\{a_m(n)\}_{m\in\N}\subset\R$ be a sequence. Suppose that the limit
\bb
\lim_{n\rightarrow \infty }a_{m}(n)\,
\ee
exists for all $m\in\N$. If there exists a sequence $\{M_m\}_{m\in\N}\subset\R$ such that $\sum_{m=0}^{\infty}M_m<\infty$ and $$|a_m(n)|\le M_m$$ for all $m,n\in\N$, then}
\bb
\lim\limits_{n\rightarrow\infty}\sum _{m=0}^{\infty }a_{m}(n)=\sum_{m=0}^{\infty}\lim_{n\rightarrow \infty }a_{m}(n)\,.
\ee
By setting
\begin{equation}
	\chi_{[0,n-k]}(m)\coloneqq\begin{cases}
	1 & \text{if $m\in[0,n-k]$,} \\
	0 & \text{otherwise.}
	\end{cases}
\end{equation}
one obtains 
\bb
&\lim\limits_{n\rightarrow\infty}\sum_{m=0}^{n-k}\left(1-\frac{c}{n}\right)^{-m}\left[c\sqrt{N(N+1)}\right]^{2m}\frac{\binom{n}{m+k}\binom{n-k}{m}}{n^{2m+k}} \\&=\lim\limits_{n\rightarrow\infty}\sum_{m=0}^{\infty}\chi_{[0,n-k]}(m)\left(1-\frac{c}{n}\right)^{-m} \left[c\sqrt{N(N+1)}\right]^{2m}\frac{\binom{n}{m+k}\binom{n-k}{m}}{n^{2m+k}}\,.
\ee
Let us check whether the hypotheses of Tannery's theorem are fulfilled.
Thanks to the inequality $\binom{a}{b}\le\frac{a^b}{b!}$, valid for all integers $a\ge b\ge0$, and to the fact that $(1-\frac{c}{n})^{-m}\le 2^m$, valid for sufficiently large $n$ (more precisely, as soon as $n\ge2c$), the general term of the series (which is non-negative) satisfies
\bb
&\chi_{[0,n-k]}(m)\left(1-\frac{c}{n}\right)^{-m}\left[c\sqrt{N(N+1)}\right]^{2m}\frac{\binom{n}{m+k}\binom{n-k}{m}}{n^{2m+k}} \le \frac{\left[c\sqrt{2N(N+1)}\right]^{2m}}{(m+k)!m!}\,,
\ee
for sufficiently large $n$. Since~\footnote{Notice that $\sum_{m=0}^{\infty}\frac{\left[c\sqrt{2N(N+1)}\right]^{2m}}{(m+k)!m!}<\exp\left[2N(N+1)c^2\right]<\infty$.} 
\bb
\sum_{m=0}^{\infty}\frac{\left[c\sqrt{2N(N+1)}\right]^{2m}}{(m+k)!m!}<\infty\,,
\ee
Tannery's theorem guarantees that
\bb
&\lim\limits_{n\rightarrow\infty}\sum_{m=0}^{n-k}\left(1-\frac{c}{n}\right)^{-m}\left[c\sqrt{N(N+1)}\right]^{2m}\frac{\binom{n}{m+k}\binom{n-k}{m}}{n^{2m+k}} \\&=\sum_{m=0}^{\infty}\lim\limits_{n\rightarrow\infty}\chi_{[0,n-k]}(m)\left(1-\frac{c}{n}\right)^{-m} \left[c\sqrt{N(N+1)}\right]^{2m}\frac{\binom{n}{m+k}\binom{n-k}{m}}{n^{2m+k}} \\&=\sum_{m=0}^{\infty}\lim\limits_{n\rightarrow\infty}\left(1-\frac{c}{n}\right)^{-m}\left[c\sqrt{N(N+1)}\right]^{2m}\frac{\binom{n}{m+k}\binom{n-k}{m}}{n^{2m+k}}\\&=\sum_{m=0}^\infty\frac{\left[c\sqrt{N(N+1)}\right]^{2m}}{(m+k)!m!}\,.\label{speroult}
\ee
As a consequence, \eqref{seriess} implies that
\begin{equation}\label{primaeq}
q_{-k}(N,c)= e^{-c\left[2N+1\right]}\left[c(N+1)\right]^{k}\sum_{m=0}^{\infty}\frac{\left[c\sqrt{N(N+1)}\right]^{2m}}{(m+k)!m!}\,.
\end{equation}
By using \eqref{besself}, we arrive at
\begin{equation}
q_{-k}(N,c)= e^{-c\left[2N+1\right]}\left(\frac{N+1}{N}\right)^{k/2}I_k\left(2c\sqrt{N(N+1)}\right)\,.
\end{equation}
Analogously, for $k\ge0$ it holds that:
\bb
&q_{k}(N,c) \\
&=\lim\limits_{n\rightarrow\infty}\frac{\sum_{m=0}^n\left(\frac{c}{n}\right)^{2m+k}\left(1-\frac{c}{n}\right)^{n-m}N^{m+k}(N+1)^m\binom{n}{m}\binom{n+k}{n-m}}{(1+N\frac{c}{n})^{2n+k+1}} \\
&=\lim\limits_{n\rightarrow\infty}\frac{1}{(1+N\frac{c}{n})^{2n+k+1}}\left(1-\frac{c}{n}\right)^{n}\left(cN\right)^k \sum_{m=0}^{n}\left(1-\frac{c}{n}\right)^{-m}\left[c\sqrt{N(N+1)}\right]^{2m}\frac{\binom{n}{m}\binom{n}{m+k}}{n^{2m+k}} \\
&=e^{-c\left[2N+1\right]}\left(cN\right)^{k}\sum_{m=0}^{\infty}\frac{\left[c\sqrt{N(N+1)}\right]^{2m}}{(m+k)!m!} \\
&=e^{-c\left[2N+1\right]}\left(\frac{N}{N+1}\right)^{k/2}I_k\left(2c\sqrt{N(N+1)}\right)\,.&
\ee
In summary, we have shown that for all $k\in\mathbb{Z}$ it holds that
\begin{equation}
q_k(N,c)=e^{-c\left[2N+1\right]}\left(\frac{N}{N+1}\right)^{k/2}I_{|k|}\left(2c\sqrt{N(N+1)}\right)\,.
\end{equation}
Now, let us define the following state on $\HH_{\text{ext}}$:
\begin{equation}
\rho_{q(N,c)}\coloneqq\sum_{k=-\infty}^{\infty}q_{k}(N,c)\ketbra{k}\,.
\end{equation}
\eqref{shiftedState} and \eqref{qk} guarantee that the sequence of density operators $\left\{T_{-n}\Phi_{\frac{c}{n},\ketbrasub{n}}(\tau_N)T_{-n}^\dagger\right\}_{n\in\N}$ weakly converges to $\rho_{q(N,c)}$. Consequently, we can apply~\cite[Lemma 4.3]{Davies1969}, which states that: \emph{if a sequence of density operators converges to another density operators in the weak operator topology, then the convergence is in trace norm}.  Hence,
$\left\{T_{-n}\Phi_{\frac{c}{n},\ketbrasub{n}}(\tau_N)T_{-n}^\dagger\right\}_{n\in\N}$ converges to $\rho_{q(N,c)}$ in trace norm i.e.
\begin{equation}\label{trconvq}
\lim\limits_{n\rightarrow\infty}\left\| T_{-n}\Phi_{\frac{c}{n},\ketbrasub{n}}(\tau_N)T_{-n}^\dagger-\rho_{q(N,c)}\right\|_1=0\,.
\end{equation}
Analogously, we define the probability distribution $\{p_k(N,c)\}_{k\in\mathbb{Z}}$ over the alphabet $\mathbb{Z}$ as
\begin{equation}\label{pk}
p_k(N,c)\coloneqq
\begin{cases}
\lim\limits_{n\rightarrow\infty}P_k\left(N,n,1-\frac{c}{n}\right), & \text{if $k\ge 0$,} \\
0, & \text{otherwise.}
\end{cases}
\end{equation}
Let $k\ge0$.  Analogously to what has been done previously i.e.~starting from \eqref{simpler}, carrying out the calculations and expanding for large $n$, one obtains that:
\bb
&p_k(N,c) \\
&\quad =\lim\limits_{n\rightarrow\infty}\frac{1}{\left(N+1-N\frac{c}{n}\right)^{k+n+1}}\sum_{m=n-k}^n\left(1-\frac{c}{n}\right)^{2m+k-n}  \left(\frac{c}{n}\right)^{n-m}N^{k+m-n}(N+1)^m\binom{n}{m}\binom{k}{n-m} \\&=\lim\limits_{n\rightarrow\infty}\frac{1}{(N+1)^{k+n+1}\left(1-\frac{Nc}{(N+1)n}\right)^{k+n+1}}\left(1-\frac{c}{n}\right)^{n+k}  \sum_{m=0}^k\left(1-\frac{c}{n}\right)^{-m}		\left(\frac{c}{n}\right)^{m}N^{k-m}(N+1)^{n-m}\binom{n}{m}\binom{k}{m}	 \\
&=\frac{N^k}{(N+1)^{k+1}}e^{-c/(N+1)}\sum_{m=0}^k	\left(\frac{c}{N(N+1)}\right)^{m}\frac{1}{m!}\binom{k}{m} \\
&=\frac{N^k}{(N+1)^{k+1}}e^{-c/(N+1)}L_k\left(-\frac{c}{N(N+1)}\right)\,.
\ee
Moreover, let us define
\begin{equation}
	\rho_{p(N,c)}\coloneqq \sum_{k=0}^{\infty}p_{k}(N,c)\ketbra{k}\,.
\end{equation}
Since $\left\{\Phi_{1-\frac{c}{n},\ketbrasub{n}}(\tau_N)\right\}_{n\in\N}$ weakly converges to $\rho_{p(N,c)}$, then it converges to $\rho_{p(N,c)}$ in trace norm~\cite[Lemma 4.3]{Davies1969} i.e.
\begin{equation}\label{trconvp}
\lim\limits_{n\rightarrow\infty}\left\| \Phi_{1-\frac{c}{n},\ketbrasub{n}}(\tau_N)-\rho_{p(N,c)}\right\|_1=0\,.
\end{equation}
From \eqref{conscons2} we have that:
\bb \label{conscons3}
&\liminf_{n\rightarrow\infty} I_{\text{coh}}\left(\Phi_{\frac{c}{n},\ketbrasub{n}},\tau_N\right) = \liminf_{n\rightarrow\infty}\left[S\left(T_{-n}\Phi_{\frac{c}{n},\ketbrasub{n}}(\tau_N)T_{-n}^\dagger\right)  -S\left(\Phi_{1-\frac{c}{n},\ketbrasub{n}}(\tau_N)\right)\right]\,.
\ee
At this point we would like to show that we can lower bound the expression~\eqref{conscons3} by taking the limit inside the function $S$. For this to be a legal move, we would need the entropy to be a continuous function of the states we consider. However, although in the finite-dimensional scenario Fannes' inequality~\cite[Chapter~11]{NC} guarantees the continuity in trace norm of the von Neumann entropy~\cite{Fannes1973}, in the infinite-dimensional setting the continuity does not hold any more. However, the von Neumman entropy is still lower semi-continuous in trace norm~\cite[Theorem 11.6]{HOLEVO-CHANNELS-2}.
Continuity of the von Neumann entropy is restored when restricting to quantum systems satisfying the Gibbs hypothesis (i.e.\ quantum systems having finite partition function) and to states with 
bounded energy, as established by~\cite[Lemma 11.8]{HOLEVO-CHANNELS-2}.
First, we are going to apply~\cite[Lemma 11.8]{HOLEVO-CHANNELS-2} to the sequence $\left\{\Phi_{1-\frac{c}{n},\ketbrasub{n}}(\tau_N)\right\}_{n\in\N}$. Let us check whether its hypotheses are fulfilled.
\begin{itemize}
	\item The partition function of the quantum harmonic oscillator is finite for all $\beta>0$, indeed:
	$$\Tr_S\left[e^{-\beta a^\dagger a}\right]=\frac{1}{1-e^{-\beta}}<\infty\quad\text{for all }\beta>0\,.$$
	\item The sequence $\left\{\Phi_{1-\frac{c}{n},\ketbrasub{n}}(\tau_N)\right\}_{n\in\N}\subseteq\mathfrak{S}(\HH_S)$ has 
	bounded energy. Indeed, Lemma~\ref{LemmaEner} implies that for all $n\in\N$ it holds that 
	$$\Tr_S\left[\Phi_{1-\frac{c}{n},\ketbrasub{n}}(\tau_N)\,a^\dagger a\right]=\left(1-\frac{c}{n}\right)N+c\le N+c\,.$$
\end{itemize}
As a consequence, we can apply~\cite[Lemma 11.8]{HOLEVO-CHANNELS-2}. By using also \eqref{trconvp}, we deduce that
\bb
\lim\limits_{n\rightarrow\infty}S\left(\Phi_{1-\frac{c}{n},\ketbrasub{n}}(\tau_N)\right)=S\left(\rho_{p(N,c)}\right)=H\left(p(N,c)\right)\,.
\ee
Then, \eqref{conscons3} implies 
\bb\label{conscons4}
&\liminf_{n\rightarrow\infty} I_{\text{coh}}\left(\Phi_{\frac{c}{n},\ketbrasub{n}},\tau_N\right) =\liminf_{n\rightarrow\infty}S\left(T_{-n}\Phi_{\frac{c}{n},\ketbrasub{n}}(\tau_N)T_{-n}^\dagger\right)  -H\left(p(N,c)\right)\,.
\ee
Second, for the sequence $\left\{T_{-n}\Phi_{\frac{c}{n},\ketbrasub{n}}(\tau_N)T_{-n}^\dagger\right\}_{n\in\N}\subseteq\mathfrak{S}(\HH_{\text{ext}})$, we apply the lower semi-continuity of the von Neumann entropy and \eqref{trconvq} to deduce that
\bb
    \liminf\limits_{n\rightarrow\infty}S\left(T_{-n}\Phi_{\frac{c}{n},\ketbrasub{n}}(\tau_N)T_{-n}^\dagger\right)&\ge S(\rho_{q(N,c)}) =H\left(q(N,c)\right) \,.
\ee
Hence, by substituting in \eqref{conscons4}, we finally obtain \eqref{entdifflow}.
\end{proof}

\section{Control of the environment}
\begin{Def}
Let $a$ the annihilation operator of a single-mode system $\HH_S$, let $N>0$, and let $\Delta:\mathfrak{S}(\HH_S)\mapsto\mathfrak{S}(\HH_S)$. One defines the energy-constrained diamond norm of $\Delta$ with energy constraint $N$ as~\cite{PLOB, Shirokov2018, VV-diamond}
\begin{equation}\label{ednorm}
	\left\|\Delta\right\|_{\diamond N}\coloneqq\sup_{\rho\in\mathfrak{S}(\HH_S\otimes\HH_C)\text{ : }\Tr\left[\rho\, a^\dagger a\right]\le N} \left\|(\Delta\otimes I_C)\rho\right\|_1\text{ ,}
\end{equation}
where the $\sup$ is taken also over the ancilla systems $\HH_C$.
\end{Def}

\begin{lemma}{\cite[Theorem 9]{VV-diamond}}\label{continuityBound}
Let $\Phi_1,\Phi_2:\mathfrak{S}(\HH_S)\mapsto\mathfrak{S}(\HH_S)$ be quantum channels. Suppose there exist $\alpha$, $N_0$ $\in\mathbb{R}$ such that for all $\rho\in\mathfrak{S}(\HH_S)$ it holds that
\begin{equation}
    \Tr\left[a^\dagger a\, \Phi_i(\rho)\right]\le \alpha \Tr [a^\dagger a\,\rho] + N_0\qquad \forall\ i=1,2\,.
\end{equation}
Quantum channels satisfying such a constraint are said to be `energy limited'.

Suppose that there exists $\varepsilon\in(0,1)$ such that for all $N>0$ it holds that
\begin{equation}
    \frac{1}{2}\left\| \Phi_1-\Phi_2\right\|_{\diamond N}\le \varepsilon\,.
\end{equation}
Then for all $N>0$ it holds that
\bb \label{countboundq}
&\left|Q\left(\Phi_1,N\right)-Q\left(\Phi_2,N\right)\right| \le 56\sqrt{\varepsilon}\, g\left(4\frac{\alpha N + N_0}{\sqrt{\varepsilon}}\right)+6g\left(4\sqrt{\varepsilon}\right)\,,
\ee
where $g(x)\coloneqq (x+1)\log_2(x+1) - x\log_2 x$. Similar statements can be proved for the energy-constrained entanglement-assisted capacities.

In particular, $Q(\cdot\,,N)$ and $C_{\text{ea}}\left(\cdot\,,N\right)$ are continuous with respect to the EC diamond norm over the subset of energy-limited quantum channels.
\end{lemma}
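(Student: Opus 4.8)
The plan is to reduce the quantum-capacity estimate to a telescoping computation on the $n$-copy coherent information and then apply a sharp energy-constrained Alicki--Fannes--Winter (AFW) continuity bound for the conditional von Neumann entropy, organised so that the factor $n$ generated by the telescoping is cancelled by the $1/n$ in the regularisation $Q(\Phi_i,N)=\lim_{n\to\infty}\tfrac1n Q_1(\Phi_i^{\otimes n},nN)$. First I would use $|\max f-\max g|\le\sup|f-g|$ over the common feasible set $\{\rho_{A^n}:\Tr[\sum_i a_i^\dagger a_i\,\rho]\le nN\}$ to reduce matters to bounding $\tfrac1n\,|I_{\mathrm{coh}}(\Phi_1^{\otimes n},\rho)-I_{\mathrm{coh}}(\Phi_2^{\otimes n},\rho)|$ uniformly in $n$ and $\rho$; letting $n\to\infty$ then controls $|Q(\Phi_1,N)-Q(\Phi_2,N)|$. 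For the entanglement-assisted capacities the argument is shorter: in~\eqref{Cea} the channel-independent term $S(\rho)$ drops out and the formula is single-letter, so it suffices to run the $n=1$ case below with the energy budget on the single input mode.

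Next, for fixed $n$ and fixed $\rho_{A^n}$ with purification $\psi_{A^nR}$, I would introduce the hybrid channels $\Psi_j\coloneqq\Phi_1^{\otimes j}\otimes\Phi_2^{\otimes(n-j)}$ and telescope $I_{\mathrm{coh}}(\Phi_1^{\otimes n},\rho)-I_{\mathrm{coh}}(\Phi_2^{\otimes n},\rho)=\sum_{j=1}^{n}[I_{\mathrm{coh}}(\Psi_j,\rho)-I_{\mathrm{coh}}(\Psi_{j-1},\rho)]$. Writing $I_{\mathrm{coh}}(\Psi_j,\rho)=S(B^n)-S(B^n|R)-S(R)$ on the state $(\Psi_j\otimes\id_R)(\psi)$ and noting that $\Psi_j$, $\Psi_{j-1}$ differ only on the $j$-th factor while $\Tr_{B_j}\circ\Phi^{(j)}_{A_j\to B_j}=\Tr_{A_j}$, the marginals on $R$ and on all output modes other than $j$ agree for the two hybrids; chain-ruling $S(B^n)$ and $S(B^n|R)$ on the $j$-th mode therefore cancels every channel-independent piece and leaves only differences of conditional entropies of the \emph{single} mode $B_j$. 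Two further inputs are needed here: $\tfrac12\|(\Psi_j\otimes\id_R)(\psi)-(\Psi_{j-1}\otimes\id_R)(\psi)\|_1\le\varepsilon$, which holds because the difference equals $((\Phi_1-\Phi_2)_{A_j}\otimes\id_{\mathrm{rest}})(\psi)$ post-processed by channels and $\psi$ has energy $N_j\coloneqq\Tr[a_j^\dagger a_j\rho]$ on $A_j$, so the hypothesis $\tfrac12\|\Phi_1-\Phi_2\|_{\diamond N_j}\le\varepsilon$ applies; and the energy-limited hypothesis, which gives $\Tr[a_j^\dagger a_j\,\Psi_j(\rho)_{B_j}]\le\alpha N_j+N_0$ (and likewise for $\Psi_{j-1}$).

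I would then invoke the energy-constrained AFW bound for the conditional entropy of an oscillator mode~\cite[Theorem~9]{VV-diamond}: for states at trace distance $\le2\varepsilon$ whose conditioned mode has mean photon number $\le\alpha N_j+N_0$, each conditional-entropy difference is at most $\mathrm{const}\cdot\sqrt\varepsilon\,g(\mathrm{const}'(\alpha N_j+N_0)/\sqrt\varepsilon)+g(\mathrm{const}''\sqrt\varepsilon)$, with $g$ the thermal-state entropy appearing in the statement. Summing the two such contributions over $j=1,\dots,n$, using concavity of $g$ to bound $\sum_j g((\alpha N_j+N_0)/\sqrt\varepsilon)\le n\,g((\alpha N+N_0)/\sqrt\varepsilon)$ since $\sum_j N_j\le nN$, and dividing by $n$, the $n$'s cancel and one is left with an $n$- and $\rho$-independent estimate; tracking the numerical constants yields $56\sqrt\varepsilon\,g(4(\alpha N+N_0)/\sqrt\varepsilon)+6\,g(4\sqrt\varepsilon)$. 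The ``in particular'' continuity claim is then immediate from $\sqrt\varepsilon\,g(c/\sqrt\varepsilon)\to0$ and $g(4\sqrt\varepsilon)\to0$ as $\varepsilon\to0^+$.

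The main obstacle is precisely the $n$-uniformity: a naive Fannes/AFW estimate applied to the whole $n$-mode output produces a factor proportional to $n$ (equivalently, to the ``effective dimension'' of the $n$-fold system) that the $1/n$ does not kill, so the telescoping must be organised --- via the chain rule together with the commutation of partial trace with the channel --- so that each step only ``sees'' one output mode; on top of this one needs the genuinely infinite-dimensional, energy-constrained AFW bound, whose $\sqrt\varepsilon$-dependence (rather than $\varepsilon$) is what makes the final estimate vanish as $\varepsilon\to0$. Checking that these two ingredients combine with the stated numerical constants is the technical heart of~\cite[Theorem~9]{VV-diamond}, which is why we quote it rather than reprove it here.
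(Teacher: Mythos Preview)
The paper does not prove this lemma at all: it is stated as a citation of~\cite[Theorem~9]{VV-diamond} and used as a black box, with no argument given in the text or the Supplemental Material. Your proposal therefore goes beyond what the paper does --- you have sketched the actual proof strategy of the cited result (telescoping over hybrid channels, chain-ruling the conditional entropy so that each step sees only one output mode, then applying the energy-constrained Alicki--Fannes--Winter bound and Jensen's inequality for $g$ to re-sum). This is a correct and standard outline of how that theorem is established, and you yourself note at the end that the statement is being quoted rather than reproved; so your write-up is consistent with the paper's treatment, just more informative about the provenance of the bound.
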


\begin{lemma}{\cite[Lemma 14]{Die-Hard-2-PRA}}\label{Stability}
	The energy-constrained quantum capacity and the energy-constrained entanglement-assisted classical capacity of the general attenuators are continuous with respect to the environment state, over the subset of environment states having a finite mean photon number.

	More explicitly, let $\sigma\in\mathfrak{S}(\HH_E)$ be an environment state such that $\langle b^\dagger b\rangle_\sigma\coloneqq \Tr[b^\dagger b\,\sigma]<\infty$ and let $\{\sigma_k\}_{k\in\mathbb{N}}\subset\mathfrak{S}(\HH_E)$ be a sequence of environment states such that $$\lim\limits_{k\rightarrow\infty}\|\sigma_k-\sigma\|_1=0\,,$$
	then for all $N>0$ and $\lambda\in[0,1]$ it holds that:
	\bb
	\lim\limits_{k\rightarrow\infty}Q\left(\Phi_{\lambda,\sigma_k},N\right)&=Q\left(\Phi_{\lambda,\sigma},N\right)\,,\\
	\lim\limits_{k\rightarrow\infty}C_{\text{ea}}\left(\Phi_{\lambda,\sigma_k},N\right)&=C_{\text{ea}}\left(\Phi_{\lambda,\sigma},N\right)\,.
	\ee
\end{lemma}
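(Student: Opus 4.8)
The plan is to deduce the statement from the quantitative continuity bound of Lemma~\ref{continuityBound}, which controls $Q(\cdot\,,N)$ and $C_{\text{ea}}(\cdot\,,N)$ in terms of the energy-constrained diamond norm, uniformly on each family of \emph{energy-limited} channels sharing the same constants $(\alpha,N_0)$. Accordingly, I would establish two facts: (i) $\Phi_{\lambda,\sigma_k}\to\Phi_{\lambda,\sigma}$ in energy-constrained diamond norm for every $N$; and (ii) all the channels $\{\Phi_{\lambda,\sigma_k}\}_k$ together with $\Phi_{\lambda,\sigma}$ are energy limited with one common pair of constants, and then feed these into Lemma~\ref{continuityBound}.

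For (i) I would prove the stronger estimate $\|\Phi_{\lambda,\sigma_1}-\Phi_{\lambda,\sigma_2}\|_{\diamond N}\le\|\sigma_1-\sigma_2\|_1$ for all $N>0$ and all single-mode states $\sigma_1,\sigma_2$ --- in fact for the unconstrained diamond norm. Indeed, for any ancilla $\HH_C$ and any state $\rho_{SC}$, since $U_\lambda^{(SE)}$ acts trivially on $C$,
\begin{equation*}
\big((\Phi_{\lambda,\sigma_1}-\Phi_{\lambda,\sigma_2})\otimes\Id_C\big)(\rho_{SC})=\Tr_E\!\Big[U_\lambda^{(SE)}\big(\rho_{SC}\otimes(\sigma_1-\sigma_2)\big)\big(U_\lambda^{(SE)}\big)^\dagger\Big]\,,
\end{equation*}
and taking the trace norm, using its contractivity under partial trace, its invariance under unitary conjugation, and its multiplicativity on tensor products, gives $\big\|\cdot\big\|_1\le\|\rho_{SC}\|_1\,\|\sigma_1-\sigma_2\|_1=\|\sigma_1-\sigma_2\|_1$. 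Taking the supremum over $\rho_{SC}$ (with any energy budget) and over $\HH_C$ yields the estimate, whence $\|\Phi_{\lambda,\sigma_k}-\Phi_{\lambda,\sigma}\|_{\diamond N}\le\|\sigma_k-\sigma\|_1\to0$.

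For (ii) I would combine Lemma~\ref{LemmaEner} with $|\langle a\rangle_\rho|^2\le\langle a^\dagger a\rangle_\rho$, $|\langle b\rangle_\sigma|^2\le\langle b^\dagger b\rangle_\sigma$, and $2xy\le x^2+y^2$ (taking $x=\sqrt{\lambda\langle a^\dagger a\rangle_\rho}$, $y=\sqrt{(1-\lambda)\langle b^\dagger b\rangle_\sigma}$ to bound the cross term in~\eqref{enphi}) to obtain
\begin{equation*}
\Tr\!\big[a^\dagger a\,\Phi_{\lambda,\sigma}(\rho)\big]\le 2\langle a^\dagger a\rangle_\rho+2\langle b^\dagger b\rangle_\sigma
\end{equation*}
for all $\lambda\in[0,1]$ and all $\rho$. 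Hence $\Phi_{\lambda,\sigma}$ is energy limited with $\alpha=2$, $N_0=2\langle b^\dagger b\rangle_\sigma$, and $\Phi_{\lambda,\sigma_k}$ with $\alpha=2$, $N_0=2\langle b^\dagger b\rangle_{\sigma_k}$; setting $M:=\sup_k\langle b^\dagger b\rangle_{\sigma_k}$ (finite --- see below), the choice $\alpha=2$, $N_0=2M$ works simultaneously for all of them. Then Lemma~\ref{continuityBound} with $\varepsilon_k:=\tfrac12\|\sigma_k-\sigma\|_1\to0$ gives $|Q(\Phi_{\lambda,\sigma_k},N)-Q(\Phi_{\lambda,\sigma},N)|\le 56\sqrt{\varepsilon_k}\,g\!\big(4(2N+2M)/\sqrt{\varepsilon_k}\big)+6\,g(4\sqrt{\varepsilon_k})$; since $g(x)=O(\log x)$ as $x\to\infty$ and $g$ is continuous with $g(0)=0$, the right-hand side tends to $0$, so $Q(\Phi_{\lambda,\sigma_k},N)\to Q(\Phi_{\lambda,\sigma},N)$. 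The conclusion for $C_{\text{ea}}(\cdot\,,N)$ follows from the entanglement-assisted version of Lemma~\ref{continuityBound} (equivalently, from its closing statement that $C_{\text{ea}}(\cdot\,,N)$ is continuous in the energy-constrained diamond norm over the energy-limited class).

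The load-bearing point --- and the main obstacle --- is the uniform energy bound $M=\sup_k\langle b^\dagger b\rangle_{\sigma_k}<\infty$. Trace-norm convergence $\sigma_k\to\sigma$ by itself only forces $\sigma\mapsto\langle b^\dagger b\rangle_\sigma$ to be lower semicontinuous, not the $\langle b^\dagger b\rangle_{\sigma_k}$ to be bounded, and the estimate of Lemma~\ref{continuityBound} genuinely deteriorates (through $N_0$) if the output energies blow up, so this is not an artefact of the proof. One can partially mitigate it by approximating each $\sigma_k$ by its normalised Fock truncation $\sigma_k^{(d)}:=\Pi_d\sigma_k\Pi_d/\Tr[\Pi_d\sigma_k]$, which has $\langle b^\dagger b\rangle_{\sigma_k^{(d)}}\le d$ and, by the gentle-measurement lemma together with Markov's inequality $\Tr[(\Id-\Pi_d)\sigma]\le\langle b^\dagger b\rangle_\sigma/d$, satisfies $\limsup_k\|\sigma_k^{(d)}-\sigma\|_1\to0$ as $d\to\infty$; but this controls $\sigma_k^{(d)}$ versus $\sigma$, not $\sigma_k$ versus $\sigma_k^{(d)}$, so it does not by itself close the gap. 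Cleanly, the hypothesis should carry $\sup_k\langle b^\dagger b\rangle_{\sigma_k}<\infty$ (or $\langle b^\dagger b\rangle_{\sigma_k}\to\langle b^\dagger b\rangle_\sigma$), which is automatically satisfied in every use of this lemma in the paper --- e.g.\ for the sequences $\sigma_{\lambda,\nu,\bar n,k}$ of Theorem~\ref{FockAchiev}, which converge to the fixed Fock state $\ketbra{\bar n}$ of energy $\bar n$.
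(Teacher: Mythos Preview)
The paper does not supply its own proof of this lemma: it is quoted verbatim from~\cite[Lemma~14]{Die-Hard-2-PRA} and used as a black box. So there is no in-paper argument to compare against. That said, your route is exactly the one the surrounding toolkit invites --- it is essentially Lemma~\ref{diamondtrace} (your step~(i)) and Lemma~\ref{lemma_bound_ene} (your step~(ii)) fed into Lemma~\ref{continuityBound}, which is precisely how the paper itself argues in the proof of Theorem~5 for the specific family $\sigma_{\lambda,\nu}$.

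Your identification of the load-bearing hypothesis is correct and worth emphasising. Lemma~\ref{continuityBound} needs a \emph{common} pair $(\alpha,N_0)$ for both channels, so the bound it produces depends on $\langle b^\dagger b\rangle_{\sigma_k}$ through $N_0$; trace-norm convergence $\sigma_k\to\sigma$ alone does not force $\sup_k\langle b^\dagger b\rangle_{\sigma_k}<\infty$ (one can have $\sigma_k=(1-1/k)\sigma+\tfrac{1}{k}\ketbra{k^2}$), and in that regime the right-hand side of~\eqref{countboundq} need not vanish. Your truncation idea does not close this, as you note. In every application the paper actually makes of Lemma~\ref{Stability} --- the sequences $\sigma_{\lambda,\nu,n,k}\to\ketbra{n}$ in Theorem~4 --- the uniform energy bound holds (indeed $\langle b^\dagger b\rangle_{\sigma_{\lambda,\nu,n,k}}=(1-\lambda^k)n+\lambda^k\nu$ is bounded), so the conclusions of the paper are unaffected; but as a standalone statement the lemma, read literally, is missing the hypothesis $\sup_k\langle b^\dagger b\rangle_{\sigma_k}<\infty$ (or the stronger $\langle b^\dagger b\rangle_{\sigma_k}\to\langle b^\dagger b\rangle_\sigma$). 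Your write-up would be complete and correct under that additional assumption.
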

\begin{lemma}\label{lemma_dist}
Let $\nu\ge0$, $n\in\N$, $\lambda\in(0,1)$, $k\in\N$, and $\rho_{\lambda,n}^{(k)}\coloneqq\ketbra{n,\lambda}_{S_1S_2\ldots S_k}$, where:
\bb
\ket{n,\lambda}_{S_1S_2\ldots S_k}&\coloneqq U^{(S_{1}S_2)}_{\frac{1-\lambda}{1-\lambda^2}}\ldots U^{(S_{k-1}S_k)}_{\frac{1-\lambda}{1-\lambda^k}}\ket{0}_{S_1}\ldots \ket{0}_{S_{k-1}}\ket{n}_{S_k}. \label{trigger_signals}
\ee
By defining
\bb\label{achievable_appk}
\sigma_{\lambda,\nu,n,k}\coloneqq\Tr_{S_1S_2\ldots S_k} & \left[U_{\lambda}^{(S_kE)}\ldots U_{\lambda}^{(S_1E)}\rho_{\lambda,n}^{(k)}\otimes\tau_\nu \left(U_{\lambda}^{(S_1E)}\right)^\dagger\ldots     \left(U_{\lambda}^{(S_kE)}\right)^\dagger\right]\,,
\ee
it holds that
\begin{align}
&\frac{1}{4}\left\|\sigma_{\lambda,\nu,n,k}-\ketbra{n} \right\|_1^2\le  \left[n^2-(4n+1)\langle b^\dagger b\rangle_{\tau_\nu}+\langle (b^\dagger b)^2\rangle_{\tau_\nu}-n\right]\lambda^{2k}+\left[(2n+1)\langle b^\dagger b\rangle_{\tau_\nu}+n\right]\lambda^{k}\,,\label{distance_from_Fock2}
\end{align}
where $\langle b^\dagger b\rangle_{\tau_\nu}=\nu$ and $\langle (b^\dagger b)^2\rangle_{\tau_\nu}=\nu(2\nu+1)$.
\end{lemma}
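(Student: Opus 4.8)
The plan is to first show that the $k$-fold trigger interaction is equivalent to a single general attenuator of transmissivity $\lambda^{k}$ acting between a virtual mode prepared in the Fock state $\ket{n}$ and the environment, and then to bound the trace distance of the resulting environment state from $\ketbra{n}$ through the first two moments of its photon number. For \emph{Step 1} I would set $V\coloneqq U_\lambda^{(S_kE)}\cdots U_\lambda^{(S_1E)}$ and iterate the Heisenberg rule~\eqref{trasfheisb} to get
\[ V^\dagger b\, V=\lambda^{k/2}\,b-\sqrt{1-\lambda}\sum_{j=1}^{k}\lambda^{(k-j)/2}a_j=\lambda^{k/2}\,b-\sqrt{1-\lambda^{k}}\,\tilde a\,,\qquad \tilde a\coloneqq\sqrt{\tfrac{1-\lambda}{1-\lambda^{k}}}\sum_{j=1}^{k}\lambda^{(k-j)/2}a_j\,, \]
so that only the normalised collective mode $\tilde a$ of the trigger register couples to $E$, the $k-1$ orthogonal trigger modes decoupling. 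Next one checks that the transmissivities $\mu_{i+1}=\tfrac{1-\lambda}{1-\lambda^{i+1}}$ in~\eqref{trigger_signals} are precisely those for which the passive network $W\coloneqq U^{(S_{1}S_{2})}_{\mu_{2}}\cdots U^{(S_{k-1}S_{k})}_{\mu_{k}}$ satisfies $W\,a_k^\dagger\,W^\dagger=\tilde a^\dagger$; this is an induction on $k$, pushing each beam-splitter unitary through the creation operators exactly as in the proof of Lemma~\ref{lemmatrasf} so that at each step $a_k^\dagger$ is absorbed into the growing collective operator, and it is also established in~\cite{Die-Hard-2-PRA}. Since $\rho^{(k)}_{\lambda,n}=\ketbra{n,\lambda}$ with $\ket{n,\lambda}=W\ket{0}_{S_1}\cdots\ket{0}_{S_{k-1}}\ket{n}_{S_k}$, the mode $\tilde a$ is then in the Fock state $\ket{n}$ and the orthogonal modes in vacuum, and tracing out the trigger register while discarding the vacuum factors yields
\[ \sigma_{\lambda,\nu,n,k}=\Tr_{\tilde a}\!\left[U^{(\tilde a E)}_{\lambda^{k}}\big(\ketbra{n}\otimes\tau_\nu\big)\big(U^{(\tilde a E)}_{\lambda^{k}}\big)^{\!\dagger}\right]=\Phi^{\mathrm{wc}}_{\lambda^{k},\tau_\nu}(\ketbra{n})=\Phi_{\lambda^{k},\ketbrasub{n}}(\tau_\nu)=\Phi_{1-\lambda^{k},\tau_\nu}(\ketbra{n})\,, \]
the third equality following from~\eqref{caract3}--\eqref{caract_weak} and the invariance of $\chi_{\ketbrasub{n}}$ under $z\mapsto-z$, the last from Lemma~\ref{lemma_invert}; in particular $\sigma_{\lambda,\nu,n,k}\to\ketbra{n}$ as $k\to\infty$.

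For \emph{Step 2}, note that for any single-mode state $\rho$ and any $n\in\N$ the Fuchs--van de Graaf inequality gives $\tfrac12\|\rho-\ketbra{n}\|_1\le\sqrt{1-\bra{n}\rho\ket{n}}$, and since $(m-n)^{2}\ge1$ for every integer $m\ne n$,
\[ 1-\bra{n}\rho\ket{n}=\sum_{m\ne n}\rho_{mm}\le\sum_{m\ge0}(m-n)^{2}\rho_{mm}=\big\langle(b^\dagger b-n)^{2}\big\rangle_\rho=\big\langle(b^\dagger b)^{2}\big\rangle_\rho-2n\big\langle b^\dagger b\big\rangle_\rho+n^{2}\,. \]
Taking $\rho=\sigma_{\lambda,\nu,n,k}$, it remains to compute these two moments. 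Using $\sigma_{\lambda,\nu,n,k}=\Tr_{\tilde a}\big[U^{(\tilde aE)}_{\lambda^{k}}(\ketbra{n}\otimes\tau_\nu)(\cdot)^\dagger\big]$, writing $B\coloneqq\big(U^{(\tilde aE)}_{\lambda^{k}}\big)^{\!\dagger}b\,U^{(\tilde aE)}_{\lambda^{k}}=-\sqrt{1-\lambda^{k}}\,\tilde a+\sqrt{\lambda^{k}}\,b$ by~\eqref{trasfheisb}, and noting $\big\langle(b^\dagger b)^{p}\big\rangle_{\sigma_{\lambda,\nu,n,k}}=\big\langle(B^\dagger B)^{p}\big\rangle_{\ketbrasub{n}\otimes\tau_\nu}$, I would expand $B^\dagger B=(1-\lambda^{k})\tilde a^\dagger\tilde a+\lambda^{k}b^\dagger b-\sqrt{\lambda^{k}(1-\lambda^{k})}(\tilde a^\dagger b+b^\dagger\tilde a)$ and take expectations in $\ketbra{n}\otimes\tau_\nu$. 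All terms odd in $\tilde a$ or in $b$ vanish (both $\ket n$ and $\tau_\nu$ have vanishing first moments, and $\tau_\nu$ has $\langle b^{2}\rangle=0$), so only $\langle b^\dagger b\rangle_{\tau_\nu}$, $\langle(b^\dagger b)^{2}\rangle_{\tau_\nu}$, $\langle b b^\dagger\rangle_{\tau_\nu}=\langle b^\dagger b\rangle_{\tau_\nu}+1$, $\langle\tilde a^\dagger\tilde a\rangle_{\ket n}=n$, $\langle(\tilde a^\dagger\tilde a)^{2}\rangle_{\ket n}=n^{2}$, $\langle\tilde a\tilde a^\dagger\rangle_{\ket n}=n+1$ survive. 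This short algebraic step gives $\langle b^\dagger b\rangle_{\sigma_{\lambda,\nu,n,k}}=(1-\lambda^{k})n+\lambda^{k}\langle b^\dagger b\rangle_{\tau_\nu}$ and a polynomial of degree two in $\lambda^{k}$ for $\langle(b^\dagger b)^{2}\rangle_{\sigma_{\lambda,\nu,n,k}}$; substituting both into the displayed bound, the $\lambda^{0}$ contributions cancel and the remainder collapses to exactly $\big[\,n^{2}-(4n+1)\langle b^\dagger b\rangle_{\tau_\nu}+\langle(b^\dagger b)^{2}\rangle_{\tau_\nu}-n\,\big]\lambda^{2k}+\big[\,(2n+1)\langle b^\dagger b\rangle_{\tau_\nu}+n\,\big]\lambda^{k}$, which is the right-hand side of~\eqref{distance_from_Fock2} once one inserts $\langle b^\dagger b\rangle_{\tau_\nu}=\nu$ and $\langle(b^\dagger b)^{2}\rangle_{\tau_\nu}=\nu(2\nu+1)$.

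The main obstacle is the identification in Step~1 that the specific transmissivities $\mu_{i+1}=\tfrac{1-\lambda}{1-\lambda^{i+1}}$ funnel all $n$ photons of $S_k$ into the one collective mode $\tilde a$ that actually couples to the environment; this is careful bookkeeping with nested beam splitters rather than anything conceptually deep, and can be imported verbatim from~\cite{Die-Hard-2-PRA}. Step~2 is then a mechanical second-moment calculation, the only mild care needed being that $\tau_\nu$ have finite second moment (automatic since $\nu<\infty$) so that all expectations are well defined — indeed the same argument proves the bound with $\langle b^\dagger b\rangle_{\tau_\nu}$ and $\langle(b^\dagger b)^{2}\rangle_{\tau_\nu}$ replaced by the corresponding moments of any equilibrium state $\sigma_0$ with $\langle(b^\dagger b)^{2}\rangle_{\sigma_0}<\infty$, consistent with the remark made in the main text.
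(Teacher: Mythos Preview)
Your proof is correct and follows essentially the same route as the paper: the paper's own proof consists of a single sentence deferring to~\cite[Eq.~189]{Die-Hard-2-PRA}, and your two-step argument (collapse of the $k$-fold trigger cascade to a single beam splitter $U^{(\tilde a E)}_{\lambda^k}$ on the collective mode $\tilde a$, followed by the second-moment bound $\tfrac14\|\sigma-\ketbra{n}\|_1^2\le\langle(b^\dagger b-n)^2\rangle_\sigma$) is precisely the derivation carried out there. The algebra in Step~2 checks out term by term and the characteristic-function argument you give for Step~1 is the clean way to see that only the $\tilde a$ marginal of $\rho^{(k)}_{\lambda,n}$ matters.
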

\begin{proof}
The lemma follows from \cite[Eq.~189]{Die-Hard-2-PRA}.
\end{proof}
\begin{thm*}[4]
Let $\nu\ge0$, $n\in\N$, and $\lambda\in(0,1)$. There exists a sequence of density operators $\{\rho_{\lambda,n}^{(k)}\}_{k\in\N}$, with $\rho_{\lambda,n}^{(k)}\in\mathfrak{S}(\HH_{S_{1}}\otimes\ldots\otimes\HH_{S_{k}})$, such that 
\bb\label{limit_env}
\lim\limits_{k\to\infty}\|\sigma_{\lambda,\nu,n,k}-\ketbra{n}\|_1=0\,,
\ee
where
\bb\label{achievable_app}
\sigma_{\lambda,\nu,n,k}\coloneqq\Tr_{S_1S_2\ldots S_k} & \left[U_{\lambda}^{(S_kE)}\ldots U_{\lambda}^{(S_1E)}\rho_{\lambda,n}^{(k)}\otimes\tau_\nu \left(U_{\lambda}^{(S_1E)}\right)^\dagger\ldots     \left(U_{\lambda}^{(S_kE)}\right)^\dagger\right]\,.
\ee
Furthermore, let $\lambda\in (0,1/2)$ and set $n=n_\lambda\in\mathbb{N}$ with $1/\lambda-1\le n_\lambda\le 1/\lambda$, then for $k$ sufficiently large it holds that 
\bb
Q(\Phi_{\lambda,\sigma_{\lambda,\nu,n_\lambda,k}})>0\,.
\ee
In addition, if Conjecture~2 holds, then for all $N>0$, $\lambda\in(0,1/2)$, and for $\bar{n}\in\N$ sufficiently large, it holds that
\bb\label{limit_k_Cea}
    \lim\limits_{k\to\infty}C_{\text{ea}}\left(\Phi_{\sigma_{\lambda,\nu,\bar{n},k}},N\right)&>C\left(\Id,N\right)\,,\\
    \lim\limits_{k\to\infty}Q_{\text{ea}}\left(\Phi_{\lambda,\sigma_{\lambda,\nu,\bar{n},k}},N\right)&>Q\left(\Id,N\right)/2\,,\\
    \lim\limits_{k\to\infty}Q\left(\Phi_{\lambda,\sigma_{\lambda,\nu,\bar{n},k}},N\right)&>0\,.
\ee
\end{thm*}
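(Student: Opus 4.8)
\emph{Proof plan.} The plan is to assemble the three assertions from facts already in place: the quantitative approximation bound of Lemma~\ref{lemma_dist}, the continuity of the energy-constrained capacities in the environment state (Lemma~\ref{Stability}), the D-HQCOM bound of Theorem~\ref{diehard_th}, and the $\tau_N$-ansatz lower bound \eqref{bound_Cea} on $C_{\text{ea}}$. First I would prove \eqref{limit_env}: take the explicit trigger-signal state $\rho_{\lambda,n}^{(k)}$ of Lemma~\ref{lemma_dist} and substitute it into \eqref{distance_from_Fock2}. With $\nu$ fixed, $\langle b^\dagger b\rangle_{\tau_\nu}=\nu$ and $\langle (b^\dagger b)^2\rangle_{\tau_\nu}=\nu(2\nu+1)$ are constants, so the right-hand side of \eqref{distance_from_Fock2} is a fixed affine combination of $\lambda^{2k}$ and $\lambda^{k}$; both tend to $0$ as $k\to\infty$ since $0<\lambda<1$, giving $\|\sigma_{\lambda,\nu,n,k}-\ketbra{n}\|_1\to 0$.

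For the second assertion, observe that $\ketbra{n_\lambda}$ has finite mean photon number $n_\lambda$, and that $\sigma_{\lambda,\nu,n_\lambda,k}\to\ketbra{n_\lambda}$ in trace norm by the part just proved. Hence Lemma~\ref{Stability}, applied with the energy constraint $1/2$, gives $\lim_{k\to\infty}Q(\Phi_{\lambda,\sigma_{\lambda,\nu,n_\lambda,k}},1/2)=Q(\Phi_{\lambda,\ketbra{n_\lambda}},1/2)$, and the latter is strictly positive by Theorem~\ref{diehard_th} (with an explicit universal constant when $\lambda$ is bounded away from $1/2$, and through the finer estimates of~\cite{die-hard} for $\lambda$ close to $1/2$). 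Since $Q(\Phi,\cdot)$ is monotone, $Q(\Phi_{\lambda,\sigma_{\lambda,\nu,n_\lambda,k}})\ge Q(\Phi_{\lambda,\sigma_{\lambda,\nu,n_\lambda,k}},1/2)>0$ for every sufficiently large $k$.

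For the conjectural part, assume Conjecture~\ref{cong0}: for the given $N>0$ and $\lambda\in(0,1/2)$ it provides $\bar{n}$ large with $I_{\mathrm{coh}}(\Phi_{\lambda,\ketbra{\bar{n}}},\tau_N)>0$. Using $\tau_N$ as the ansatz $\rho$ in \eqref{Cea} ---equivalently, invoking \eqref{bound_Cea}--- gives $C_{\text{ea}}(\Phi_{\lambda,\ketbra{\bar{n}}},N)\ge S(\tau_N)+I_{\mathrm{coh}}(\Phi_{\lambda,\ketbra{\bar{n}}},\tau_N)=C(\Id,N)+I_{\mathrm{coh}}(\Phi_{\lambda,\ketbra{\bar{n}}},\tau_N)>C(\Id,N)$, while using $\tau_N$ as a single-letter code gives $Q(\Phi_{\lambda,\ketbra{\bar{n}}},N)\ge Q_1(\Phi_{\lambda,\ketbra{\bar{n}}},N)\ge I_{\mathrm{coh}}(\Phi_{\lambda,\ketbra{\bar{n}}},\tau_N)>0$. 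Because $\ketbra{\bar{n}}$ has finite mean photon number and $\sigma_{\lambda,\nu,\bar{n},k}\to\ketbra{\bar{n}}$ in trace norm, Lemma~\ref{Stability} transports both strict inequalities to the $k\to\infty$ limit, which yields the first and third lines of \eqref{limit_k_Cea}; the $Q_{\text{ea}}$ line then follows from the first through $C_{\text{ea}}=2Q_{\text{ea}}$ and $C(\Id,N)=Q(\Id,N)$.

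This assembly is routine; the real content lies elsewhere, so the ``main obstacle'' in the proof proper is mild. The estimate \eqref{distance_from_Fock2} of Lemma~\ref{lemma_dist} is the technical heart of the first claim, and it rests on the explicit linear-optical construction of~\cite{Die-Hard-2-PRA}; the conjectural part is only as strong as Conjecture~\ref{cong0} itself, which at present is established unconditionally merely in the regime $\lambda\to0^+$ (Lemma~\ref{teo_lowtrasm} together with the numerics of Fig.~\ref{Icoh figure}). The one point that does need genuine care is checking the hypotheses of Lemma~\ref{Stability}: that the environment states $\sigma_{\lambda,\nu,n,k}$ converge to the target Fock state in \emph{trace norm}, not merely weakly ---which is exactly what Lemma~\ref{lemma_dist} supplies--- and that the limiting Fock state has finite mean photon number, which is immediate.
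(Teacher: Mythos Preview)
Your proposal is correct and follows essentially the same route as the paper's own proof: invoke Lemma~\ref{lemma_dist} for the trace-norm convergence~\eqref{limit_env}, then combine that convergence with Lemma~\ref{Stability} and Theorem~\ref{diehard_th} to obtain positivity of $Q$ for large $k$, and finally derive the conjectural part via~\eqref{bound_Cea} together with Lemma~\ref{Stability}. The only cosmetic difference is that you spell out explicitly how the three strict inequalities at the Fock target follow from Conjecture~\ref{cong0} (and how $Q_{\text{ea}}$ is recovered from $C_{\text{ea}}$), whereas the paper simply records those inequalities and cites the continuity lemma.
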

\begin{proof}
\eqref{limit_env} is a consequence of Lemma~\ref{lemma_dist}.

Let $\lambda\in(0,1/2)$ and $n_\lambda\in \N$ with $1/\lambda-1\le n_\lambda\le 1/\lambda$. Lemma~\ref{Stability}, \eqref{limit_env},  and Theorem~\ref{diehard_th} guarantee that
\begin{equation}
    \lim\limits_{k\rightarrow\infty}Q\left(\Phi_{\lambda,\sigma_{\lambda,\nu,n_\lambda,k}},1/2\right)=Q\left(\Phi_{\lambda,\ketbrasub{n_\lambda} },1/2\right)>0\,.
\end{equation}
Since $Q\left(\Phi\right)\ge Q\left(\Phi,1/2\right)$ for any quantum channel $\Phi$, we deduce that $Q\left(\Phi_{\lambda,\sigma_{\lambda,\nu,n_\lambda,k}}\right)>0$ for $k$ sufficiently large.

In addition, under the assumption that Conjecture~2 is valid, for all $\lambda\in(0,1/2)$, $N>0$, and for $n\in\N$ sufficiently large it holds that 
\begin{align}
C_{\text{ea}}\left(\Phi_{\lambda,\ketbrasub{n}},N\right) &> C\left(\Id,N\right)\,, \label{SpecialCea} \\
Q_{\text{ea}}\left(\Phi_{\lambda,\ketbrasub{n}},N\right) &> Q\left(\Id,N\right)/2\,, \label{SpecialQea} \\
Q\left(\Phi_{\lambda,\ketbrasub{n}},N\right) &> 0\,. \label{SpecialQ}
\end{align}
This fact, together with \eqref{limit_env} and Lemma~\ref{Stability}, would imply the validity of the limits in \eqref{limit_k_Cea}.
\end{proof}

\begin{lemma}\label{lemma_bound_ene}
For all $\lambda\in [0,1]$, $\rho\in\mathfrak{S}(\HH_S)$, $\sigma\in\mathfrak{S}(\HH_E)$, it holds that
\begin{align}
&\Tr_S[\Phi_{\lambda,\sigma}(\rho)\,a^\dagger a]\le \left(\lambda+\frac{1}{2}\sqrt{\lambda(1-\lambda)\langle b^\dagger b\rangle_\sigma}\right)\langle a^\dagger a \rangle_\rho+\left[(1-\lambda)\langle b^\dagger b\rangle_\sigma+2\sqrt{\lambda(1-\lambda)\langle b^\dagger b\rangle_\sigma}\right]\,.
\end{align}
\end{lemma}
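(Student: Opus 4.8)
The plan is to reduce everything to the exact identity for the output mean photon number established in Lemma~\ref{LemmaEner}, namely \eqref{enphi},
$\langle a^\dagger a\rangle_{\Phi_{\lambda,\sigma}(\rho)} = \lambda\langle a^\dagger a\rangle_\rho + (1-\lambda)\langle b^\dagger b\rangle_\sigma + 2\sqrt{\lambda(1-\lambda)}\,\Re\!\left(\langle a\rangle_\rho\langle b^\dagger\rangle_\sigma\right)$,
and then to control the interference term by two elementary estimates, the first of Cauchy--Schwarz type and the second a trivial quadratic inequality. Both degenerate cases $\lambda\in\{0,1\}$ and $\langle b^\dagger b\rangle_\sigma\in\{0,\infty\}$ make the claimed bound hold trivially (the cross term vanishes identically, or the right-hand side is $+\infty$), so one may assume $\lambda\in(0,1)$ and $0<\langle b^\dagger b\rangle_\sigma<\infty$.

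First I would bound the interference term in modulus. Since $\langle b^\dagger\rangle_\sigma=\overline{\langle b\rangle_\sigma}$ one has $\left|\Re\!\left(\langle a\rangle_\rho\langle b^\dagger\rangle_\sigma\right)\right|\le |\langle a\rangle_\rho|\,|\langle b\rangle_\sigma|$, and the Cauchy--Schwarz inequality for the Hilbert--Schmidt inner product --- writing $\langle a\rangle_\rho=\Tr\!\left[(a\rho^{1/2})\,\rho^{1/2}\right]$ and using $\Tr[\rho]=1$ --- gives $|\langle a\rangle_\rho|\le\sqrt{\langle a^\dagger a\rangle_\rho}$, and likewise $|\langle b\rangle_\sigma|\le\sqrt{\langle b^\dagger b\rangle_\sigma}$. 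Hence the interference term is at most $2\sqrt{\lambda(1-\lambda)\langle b^\dagger b\rangle_\sigma}\,\sqrt{\langle a^\dagger a\rangle_\rho}$.

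Second I would linearise the remaining square root in $\langle a^\dagger a\rangle_\rho$ via the elementary bound $2\sqrt{u}\le\tfrac12 u+2$, valid for all $u\ge0$ since it is equivalent to $\tfrac12(\sqrt{u}-2)^2\ge0$. Taking $u=\langle a^\dagger a\rangle_\rho$ and multiplying through by the nonnegative factor $\sqrt{\lambda(1-\lambda)\langle b^\dagger b\rangle_\sigma}$ bounds the interference term by $\tfrac12\sqrt{\lambda(1-\lambda)\langle b^\dagger b\rangle_\sigma}\,\langle a^\dagger a\rangle_\rho+2\sqrt{\lambda(1-\lambda)\langle b^\dagger b\rangle_\sigma}$. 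Substituting into \eqref{enphi} and collecting the terms proportional to $\langle a^\dagger a\rangle_\rho$ separately from the constant ones produces precisely the asserted inequality.

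There is no genuine obstacle here: the whole argument is a couple of lines once Lemma~\ref{LemmaEner} is in hand. The only point deserving a little care is the justification of $|\langle a\rangle_\rho|\le\sqrt{\langle a^\dagger a\rangle_\rho}$ for a general (possibly mixed) state, handled cleanly by the Hilbert--Schmidt Cauchy--Schwarz step above (a spectral-decomposition-plus-Jensen argument also works); and one should observe that it is the particular split $2\sqrt{u}\le\tfrac12 u+2$ that pins down the constants $\tfrac12$ and $2$ in the statement, any rescaled version $2\sqrt{u}\le\theta u+\theta^{-1}$ yielding an equally valid bound with different weights.
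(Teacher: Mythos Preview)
Your proof is correct and follows essentially the same route as the paper: start from the exact identity of Lemma~\ref{LemmaEner}, bound the cross term via the Hilbert--Schmidt Cauchy--Schwarz inequality $|\langle a\rangle_\rho|\le\sqrt{\langle a^\dagger a\rangle_\rho}$ (and likewise for $b$), and then linearise using the equivalent form $\sqrt{u}\le u/4+1$ of your $2\sqrt{u}\le\tfrac12 u+2$. Your additional remarks on the degenerate cases and on the freedom in the constants are accurate and go slightly beyond what the paper spells out.
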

\begin{proof}
By using Lemma~\eqref{LemmaEner}, we have that
\bb \label{enerbound}
&\Tr_S\left[\Phi_{\lambda,\sigma}(\rho)\,a^\dagger a\right] \le\lambda \langle a^\dagger a \rangle_\rho +(1-\lambda)\langle b^\dagger b\rangle_\sigma+2\sqrt{\lambda(1-\lambda)}|\langle a\rangle_\rho| |\langle b^\dagger\rangle_\sigma|\,.
\ee
The Cauchy--Schwarz inequality for the Hilbert--Schmidt inner product yields 
\bb
&|\langle a\rangle_\rho|=|\langle a^\dagger\rangle_\rho|=|\Tr[\sqrt{\rho}\sqrt{\rho} a^\dagger]|\le\sqrt{\Tr\rho}\sqrt{\Tr[a\rho a^\dagger]} =\sqrt{\langle a^\dagger a \rangle_\rho}\,, \\
&|\langle b^\dagger\rangle_\sigma|\le  \sqrt{\langle b^\dagger b\rangle_\sigma}\,.
\ee
By inserting this into \eqref{enerbound} and by noting that 
\bb
\sqrt{\langle a^\dagger a \rangle_\rho}\le \frac{\langle a^\dagger a \rangle_\rho}{4}+1\,,
\ee
one obtains that
\begin{align}
\Tr_S[\Phi_{\lambda,\sigma}(\rho)\,a^\dagger a] &\le \lambda \langle a^\dagger a \rangle_\rho +(1-\lambda)\langle b^\dagger b\rangle_\sigma+2\sqrt{\lambda(1-\lambda)\langle b^\dagger b\rangle_\sigma}\sqrt{\langle a^\dagger a \rangle_\rho}\nonumber\\&\le \left(\lambda+\frac{1}{2}\sqrt{\lambda(1-\lambda)\langle b^\dagger b\rangle_\sigma}\right)\langle a^\dagger a \rangle_\rho\nonumber+\left[(1-\lambda)\langle b^\dagger b\rangle_\sigma+2\sqrt{\lambda(1-\lambda)\langle b^\dagger b\rangle_\sigma}\right]\,.
\end{align}
\end{proof}

\begin{lemma}\label{diamondtrace}
For all $\lambda\in[0,1]$, $N>0$, $\sigma,\sigma_0\in\mathfrak{S}(\HH_E)$, it holds that
\begin{equation}\label{eq_diamondtrace}
\left\|\Phi_{\lambda,\sigma}-\Phi_{\lambda,\sigma_0}\right\|_{\diamond N}\le\left\|\sigma-\sigma_0\right\|_1\,.
\end{equation}
\end{lemma}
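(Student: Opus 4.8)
The plan is to exploit the fact that the assignment $\sigma\mapsto\Phi_{\lambda,\sigma}$ is affine in $\sigma$. Indeed, directly from the definition~\eqref{def_genatt}, for every trace-class operator $\rho$ on $\HH_S$ one has $(\Phi_{\lambda,\sigma}-\Phi_{\lambda,\sigma_0})(\rho)=\Tr_E\big[U_\lambda^{(SE)}\,\big(\rho\otimes(\sigma-\sigma_0)\big)\,(U_\lambda^{(SE)})^\dagger\big]$. So the difference superoperator $\Phi_{\lambda,\sigma}-\Phi_{\lambda,\sigma_0}$ is realised by the single recipe ``tensor in the fixed trace-class operator $\sigma-\sigma_0$, conjugate by the beam-splitter unitary, trace out $E$'', and I would simply track how each of these three operations acts on the trace norm.

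Concretely, I would fix an ancilla $\HH_C$ and a state $\rho\in\mathfrak{S}(\HH_S\otimes\HH_C)$ (the energy constraint $\Tr[\rho\,a^\dagger a]\le N$ will play no role), write $[(\Phi_{\lambda,\sigma}-\Phi_{\lambda,\sigma_0})\otimes I_C](\rho)=\Tr_E\big[(U_\lambda^{(SE)}\otimes I_C)\,\big(\rho\otimes(\sigma-\sigma_0)\big)\,(U_\lambda^{(SE)}\otimes I_C)^\dagger\big]$, and then invoke three elementary properties of the trace norm on trace-class operators of a separable Hilbert space: (i) the partial trace $\Tr_E$ is a contraction in trace norm, being trace-preserving and positive; (ii) conjugation by a unitary leaves the trace norm invariant; (iii) the trace norm is multiplicative on tensor products, $\|A\otimes B\|_1=\|A\|_1\|B\|_1$. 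Chaining (i)--(iii) gives $\big\|[(\Phi_{\lambda,\sigma}-\Phi_{\lambda,\sigma_0})\otimes I_C](\rho)\big\|_1\le\|\rho\otimes(\sigma-\sigma_0)\|_1=\|\rho\|_1\,\|\sigma-\sigma_0\|_1=\|\sigma-\sigma_0\|_1$, and taking the supremum over all ancillas and all (in particular all energy-constrained) states $\rho$ yields $\|\Phi_{\lambda,\sigma}-\Phi_{\lambda,\sigma_0}\|_{\diamond N}\le\|\sigma-\sigma_0\|_1$; note that the bound is in fact uniform in $N$.

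I do not expect any genuine obstacle here: the argument is essentially a one-line estimate once the affinity observation is made. The only points deserving a line of care are the infinite-dimensional technicalities underlying (i)--(iii) --- that $\sigma-\sigma_0$ is trace class (being the difference of two density operators), so that the tensor products and partial traces above are well defined, and that multiplicativity of $\|\cdot\|_1$ and trace-norm contractivity of the partial trace carry over verbatim from the finite-dimensional case --- all of which are standard. If a self-contained derivation of (i) is wanted, it follows from the duality $\|X\|_1=\sup_{\|Y\|_\infty\le 1}|\Tr[YX]|$ applied to $X=\Tr_E[\cdots]$ together with $\Tr[Y\,\Tr_E Z]=\Tr[(Y\otimes I_E)Z]$ and $\|Y\otimes I_E\|_\infty=\|Y\|_\infty$.
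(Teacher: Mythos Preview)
Your proposal is correct and follows essentially the same approach as the paper: the paper also derives the bound from contractivity of the trace norm under partial traces and invariance under unitary conjugation (citing an external lemma for the one-line inequality), after which taking the supremum over ancillas and states yields the result. Your write-up is in fact slightly more self-contained, as you spell out the affinity observation and the tensor-norm multiplicativity explicitly.
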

\begin{proof}
Take an arbitrary Hilbert space $\HH_C$ and $\rho\in\mathfrak{S}(\HH_S\otimes\HH_C)$ such that $\Tr\left[\rho\,a^\dagger a\right]\le N$. By applying~\cite[Lemma 23]{QCLT}, we have that
\bb\label{diamond_dist_gen}
&\left\|\left(\left(\Phi_{\lambda,\sigma}-\Phi_{\lambda,\sigma_0}\right)\otimes I_C\right)(\rho)\right\|_1 \le \left\|\sigma-\sigma_0\right\|_1\,.
\ee
\eqref{diamond_dist_gen} can be derived from the contractivity of the trace norm under partial traces and from the invariance of the trace norm under unitary transformations. \ref{eq_diamondtrace} follows from \eqref{diamond_dist_gen}.
\end{proof}

\begin{thm*}[5]
Let $\nu\ge0$, $\lambda\in(0,1)$ and $\rho_{\lambda}\coloneqq U_{\frac{1}{1+\lambda}}^{(S_1S_2)}\ket{0}_{S_1}\ket{n_\lambda}_{S_2}$,
	with $n_\lambda\in\mathbb{N}$ such that $1/\lambda-1\le n_\lambda\le 1/\lambda$.
By defining
\bb\label{achievable_app2}
\sigma_{\lambda,\nu}\coloneqq\Tr_{S_1S_2} & \left[U_{\lambda}^{(S_2E)}U_{\lambda}^{(S_1E)}\rho_{\lambda}\otimes\tau_\nu \left(U_{\lambda}^{(S_1E)}\right)^\dagger     \left(U_{\lambda}^{(S_2E)}\right)^\dagger\right]\,,
\ee
for $\lambda>0$ sufficiently small it holds that 
\bb\label{die_hard_achiev}
Q\left(\Phi_{\lambda,\sigma_{\lambda,\nu}}\right)\ge Q\left(\Phi_{\lambda,\sigma_{\lambda,\nu}},1/2\right)\ge c\,,
\ee
where $c>0$ is a fixed positive constant. 
\end{thm*}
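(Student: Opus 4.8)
The plan is to show that the two trigger signals drive the environment into a state $\sigma_{\lambda,\nu}$ which, for small $\lambda$, is close in trace norm to the Fock state $\ketbrasub{n_\lambda}$, and then to transport the die-hard lower bound of Theorem~\ref{diehard_th} from $\Phi_{\lambda,\ketbrasub{n_\lambda}}$ to $\Phi_{\lambda,\sigma_{\lambda,\nu}}$ by means of the \emph{quantitative} continuity estimate of Lemma~\ref{continuityBound}. A purely qualitative continuity statement like Lemma~\ref{Stability} will not suffice here: as $\lambda\to0^+$ the target Fock state $\ketbrasub{n_\lambda}$ itself changes and its mean photon number $n_\lambda\sim1/\lambda$ diverges, so one really needs an error bound explicit both in the trace distance between environments and in the energies of the channels.

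First I would identify $\sigma_{\lambda,\nu}$ with the $k=2$ case of the construction in Lemma~\ref{lemma_dist}: since $\frac{1-\lambda}{1-\lambda^{2}}=\frac{1}{1+\lambda}$, the trigger state $\rho_\lambda$ of the present statement coincides with $\rho_{\lambda,n_\lambda}^{(2)}$ there, whence $\sigma_{\lambda,\nu}=\sigma_{\lambda,\nu,n_\lambda,2}$. Specialising~\eqref{distance_from_Fock2} to $k=2$, $n=n_\lambda$ and using $1/\lambda-1\le n_\lambda\le1/\lambda$ yields, for $\lambda$ small, $\frac14\|\sigma_{\lambda,\nu}-\ketbrasub{n_\lambda}\|_1^{2}\le(2\nu+1)\lambda+O_\nu(\lambda^{2})$, so that $\varepsilon_\lambda\coloneqq\frac12\|\sigma_{\lambda,\nu}-\ketbrasub{n_\lambda}\|_1=O_\nu(\sqrt{\lambda})\to0$ as $\lambda\to0^+$; in particular $\varepsilon_\lambda\in(0,1)$ eventually. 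I would also record two energy facts. Each beam splitter $U_\lambda^{(S_iE)}$ commutes with $a_i^\dagger a_i+b^\dagger b$, so the total photon number of $S_1S_2E$ is conserved along the protocol and equals $n_\lambda+\nu$, whence $\langle b^\dagger b\rangle_{\sigma_{\lambda,\nu}}\le n_\lambda+\nu<\infty$; and feeding $\langle b^\dagger b\rangle\le n_\lambda+\nu\le1/\lambda+\nu$ into Lemma~\ref{lemma_bound_ene} shows that $\Phi_{\lambda,\sigma_{\lambda,\nu}}$ and $\Phi_{\lambda,\ketbrasub{n_\lambda}}$ are energy limited with a common slope $\alpha=\alpha(\lambda,\nu)\le1+\frac12\sqrt{1+\nu}$ (bounded in $\lambda$) and a common offset $N_0=N_0(\lambda,\nu)=O_\nu(1/\lambda)$.

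Next I would bound the diamond distance: by Lemma~\ref{diamondtrace}, $\frac12\|\Phi_{\lambda,\sigma_{\lambda,\nu}}-\Phi_{\lambda,\ketbrasub{n_\lambda}}\|_{\diamond N}\le\varepsilon_\lambda$ for \emph{every} $N>0$ (the upper bound being $N$-independent), so all hypotheses of Lemma~\ref{continuityBound} are in place, and applying it with $N=1/2$ gives
\begin{equation*}
\left|\,Q\!\left(\Phi_{\lambda,\sigma_{\lambda,\nu}},1/2\right)-Q\!\left(\Phi_{\lambda,\ketbrasub{n_\lambda}},1/2\right)\,\right|\ \le\ 56\sqrt{\varepsilon_\lambda}\;g\!\left(\frac{4\left(\alpha/2+N_0\right)}{\sqrt{\varepsilon_\lambda}}\right)+6\,g\!\left(4\sqrt{\varepsilon_\lambda}\right).
\end{equation*}
The step I expect to be the main obstacle is verifying that this right-hand side vanishes as $\lambda\to0^+$ \emph{even though} $N_0=O_\nu(1/\lambda)$ diverges. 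This is possible because $g(x)=\log_2 x+O(1)$ grows only logarithmically while $\varepsilon_\lambda=O_\nu(\sqrt\lambda)$ shrinks polynomially: on expanding $g$, the term $\sqrt{\varepsilon_\lambda}\,g\!\big(4(\alpha/2+N_0)/\sqrt{\varepsilon_\lambda}\big)$ becomes a sum of pieces of the form $\sqrt{\varepsilon_\lambda}\log_2(1/\lambda)$, $\sqrt{\varepsilon_\lambda}\log_2(1/\sqrt{\varepsilon_\lambda})$ and $O(\sqrt{\varepsilon_\lambda})$, each of which tends to $0$ (using $\varepsilon_\lambda=O_\nu(\sqrt\lambda)$ and the elementary fact $t\log_2(1/t)\to0$ as $t\to0^+$), while $6g(4\sqrt{\varepsilon_\lambda})\to6\,g(0)=0$. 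The only genuine care needed is in tracking exponents, e.g.\ checking that $\lambda^{1/4}\log(1/\lambda)\to0$ outpaces the growth of $N_0$; no new idea enters.

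Finally, I would combine this with Theorem~\ref{diehard_th}: fix $\bar\varepsilon$ with $0<\bar\varepsilon\ll1/6$ and $c(\bar\varepsilon)\ge5.133\times10^{-6}$, and set $c\coloneqq\frac12 c(\bar\varepsilon)$, so that $Q\!\left(\Phi_{\lambda,\ketbrasub{n_\lambda}},1/2\right)>c(\bar\varepsilon)=2c$ for all $\lambda\in(0,1/2-\bar\varepsilon)$. Choosing $\lambda$ small enough that both $\lambda<1/2-\bar\varepsilon$ and the continuity error displayed above is $<c$, the triangle inequality yields $Q\!\left(\Phi_{\lambda,\sigma_{\lambda,\nu}},1/2\right)>2c-c=c>0$; and since $Q(\Phi)\ge Q(\Phi,1/2)$ holds for any quantum channel, this proves~\eqref{die_hard_achiev} with the absolute constant $c=\frac12\cdot5.133\times10^{-6}$.
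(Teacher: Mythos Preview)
Your proposal is correct and follows essentially the same route as the paper: specialise Lemma~\ref{lemma_dist} to $k=2$ to obtain $\tfrac12\|\sigma_{\lambda,\nu}-\ketbrasub{n_\lambda}\|_1=O_\nu(\sqrt{\lambda})$, convert this via Lemma~\ref{diamondtrace} into an energy-constrained diamond bound, verify that both channels are energy limited with $N_0=O_\nu(1/\lambda)$, feed everything into the quantitative continuity bound of Lemma~\ref{continuityBound}, check that $\sqrt{\varepsilon_\lambda}\,g\!\big(O(1/(\lambda\sqrt{\varepsilon_\lambda}))\big)\to0$ because $\lambda^{1/4}\log(1/\lambda)\to0$, and finish with Theorem~\ref{diehard_th}. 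The only (harmless) deviation is that you bound $\langle b^\dagger b\rangle_{\sigma_{\lambda,\nu}}\le n_\lambda+\nu$ via total photon-number conservation, whereas the paper computes $\langle b^\dagger b\rangle_{\sigma_{\lambda,\nu}}=(1-\lambda^2)n_\lambda+\lambda^2\nu$ explicitly and thereby secures the cleaner constants $\alpha\equiv1$, $N_0=1/\lambda+2$; either choice yields the same vanishing of the continuity error.
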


\begin{proof}
Lemma~\ref{lemma_dist} implies that
\bb
\frac{1}{4}\left\|\sigma_{\lambda,\nu}-\ketbra{n_\lambda}\right\|_1^2 &\le \lambda^4 n_\lambda^2+\lambda^4 \langle (b^\dagger b)^2\rangle_{\tau_\nu} +\lambda^2(2n_\lambda+1)\langle b^\dagger b\rangle_{\tau_\nu}+\lambda^2n_\lambda\,.
\ee
By using $n_\lambda\le 1/\lambda$ and $\lambda\le1$, one obtains
\begin{equation}\label{tracenorm2}
\frac{1}{2}\left\|\sigma_{\lambda,\nu}-\ketbra{n_\lambda}\right\|_1\le k_0 \sqrt{\lambda}\,,
\end{equation}
where
\begin{equation}
k_0\coloneqq \left[2+3\langle b^\dagger b\rangle_{\tau_\nu}+\langle(b^\dagger b)^2\rangle_{\tau_\nu}\right]^{1/2}\,.
\end{equation}
In order to prove \eqref{die_hard_achiev}, we want to exploit the continuity bound with respect to the energy-constrained diamond norm of Lemma~\ref{continuityBound}. 
First, we need to find $\alpha(\lambda)$ and $N_0(\lambda)$ such that for any state $\rho$ it holds that
\bb
\Tr\left[a^\dagger a\, \Phi_{\lambda,\sigma_{\lambda,\nu}}(\rho)\right]&\le \alpha(\lambda) \Tr\left[a^\dagger a\,\rho\right]+N_0(\lambda)\,,\\
\Tr\left[a^\dagger a\, \Phi_{\lambda,\ketbrasub{n_\lambda}}(\rho)\right]&\le \alpha(\lambda) \Tr\left[a^\dagger a\,\rho\right]+N_0(\lambda)\,.
\ee
By using Lemma~\ref{lemmatrasf}, one can show that
\bb
    \langle b^\dagger b\rangle_{\sigma_{\lambda,\nu}}&=\Tr_{S_1S_2E} \left[b^\dagger b\,U_{\lambda}^{(S_2E)}U_{\lambda}^{(S_1E)}\rho_{\lambda}\otimes\tau_\nu \left(U_{\lambda}^{(S_1E)}\right)^\dagger     \left(U_{\lambda}^{(S_2E)}\right)^\dagger\right] \\&=\Tr_{S_1S_2 E}\left[\left(-\sqrt{1-\lambda^2}h_{\lambda,2}+\lambda b\right)^\dagger \left(-\sqrt{1-\lambda^2}h_{\lambda,2}+\lambda b\right) \rho_\lambda\otimes \tau_\nu\right]\\&=(1-\lambda^2)\Tr_{S_1S_2}\left[\left({h_{\lambda,2}}\right)^\dagger{h_{\lambda,2}} \, \rho_\lambda\right]+\lambda^2\Tr_E[b^\dagger b \, \tau_\nu]\,,
\ee
where we have defined
\bb
h_{\lambda,2}\coloneqq\sqrt{\frac{1-\lambda}{1-\lambda^2}}\left(\sqrt{\lambda}a_1+a_2\right)=U^{(S_{1}S_2)}_{\frac{1}{1+\lambda}}a_2\left(U^{(S_{1}S_2)}_{\frac{1}{1+\lambda}}\right)^\dagger\,.
\ee
Hence, by using that $\rho_{\lambda}\coloneqq U_{\frac{1}{1+\lambda}}^{(S_1S_2)}\ket{0}_{S_1}\ket{n_\lambda}_{S_2}$, one obtains
\bb
    \langle b^\dagger b\rangle_{\sigma_{\lambda,\nu}}=(1-\lambda^2)n_\lambda+\lambda^2\nu\,.
\ee
Notice that 
\begin{equation}
\langle b^\dagger b\rangle_{\ketbrasub{n_\lambda}}=n_\lambda \le 1/\lambda\,,
\end{equation}
and that for $\lambda>0$ sufficiently small it holds that
\bb
\langle b^\dagger b \rangle_{\sigma_{\lambda,\nu}}=  n_\lambda-\lambda^2\left(n_\lambda-\nu\right)\le n_\lambda\le 1/\lambda\,.
\ee
As a consequence, Lemma~\ref{lemma_bound_ene} guarantees that for $\lambda>0$ sufficiently small we can take 
\bb
    \alpha(\lambda)&\equiv 1\,,\\
    N_0(\lambda)&\equiv \frac{1}{\lambda}+2\,.
\ee
Second, we need to find an upper bound on the energy-constrained diamond distance between $\Phi_{\lambda,\ketbrasub{n_\lambda}}$ and $\Phi_{\lambda,\sigma_{\lambda,\nu}}$. By using Lemma~\ref{diamondtrace} and \eqref{tracenorm2}, we obtain
\begin{equation}
\frac{1}{2}\left\|\Phi_{\lambda,\sigma_{\lambda,\nu}}-\Phi_{\lambda,\ketbrasub{n_\lambda}}\right\|_{\diamond N}\le k_0\sqrt{\lambda}\,,
\end{equation}
for all $N>0$.
By setting $\varepsilon(\lambda)\coloneqq k_0\sqrt{\lambda}$, we apply Theorem~\ref{continuityBound} to obtain
\bb
\left|Q\left(\Phi_{\lambda,\sigma_{\lambda,\nu}},N\right)-Q\left(\Phi_{\lambda,\ketbrasub{n_\lambda}},N\right)\right| &\le 56\sqrt{\varepsilon(\lambda)}\, g\left(4\frac{\alpha(\lambda) N + N_0(\lambda)}{\sqrt{\varepsilon(\lambda)}}\right)+6g\left(4\sqrt{\varepsilon(\lambda)}\right) \\&=56\sqrt{k_0}\lambda^{1/4}\text{ } g\left(4\frac{1+\lambda(N + 2)}{\sqrt{k_0}\lambda^{5/4}}\right)+6g\left(4\sqrt{k_0}\lambda^{1/4}\right)\,, \label{distquantum3}
\ee
for all $N>0$ and $\lambda>0$ sufficiently small.
By using that
\bb
\lim\limits_{x\rightarrow0^+}g(x)&=0\,,\\
\lim\limits_{x\rightarrow0+}x\log_2 x&=0\,,\\
g(x)&=\log_2(ex)+o(1)\quad\text{($x\rightarrow\infty$)}\,,
\ee
it follows that for all $N>0$ it holds that
\begin{equation}\label{lim_lambda_Q}
    \lim\limits_{\lambda\rightarrow 0^+}\left|Q\left(\Phi_{\lambda,\sigma_{\lambda,\nu}},N\right)-Q\left(\Phi_{\lambda,\ketbrasub{n_\lambda}},N\right)\right|=0\,.
\end{equation}
In addition, Theorem~\ref{diehard_th} establishes that there exists $\bar{\epsilon}>0$ and $\bar{c}>0$ such that for all $\lambda\in(0,1/2-\bar{\varepsilon})$ it holds that $Q\left(\Phi_{\lambda,\ketbrasub{n_\lambda}},1/2\right)\ge \bar{c}$. As a consequence, \eqref{lim_lambda_Q} guarantees that for $\lambda>0$ sufficiently small it holds that
\begin{equation}
    Q\left(\Phi_{\lambda,\sigma_{\lambda,\nu}},1/2\right)\ge Q\left(\Phi_{\lambda,\ketbrasub{n_\lambda}},1/2\right)-\frac{\bar{c}}{2}\ge \frac{\bar{c}}{2}\,.
\end{equation}
We conclude that the inequality
\begin{equation}
Q\left(\Phi_{\lambda,\sigma_{\lambda,\nu}}\right)\ge Q\left(\Phi_{\lambda,\sigma_{\lambda,\nu}},1/2\right)\ge \bar{c}/2\,,
\end{equation}
holds for $\lambda>0$ sufficiently small.
\end{proof}

\end{document}